\theoremstyle{definition}
\newtheorem{remark}{Remark}
\newtheorem{proposition}{Proposition}
\title{Multimaterial topology optimization for finite strain elastoplasticity: theory, methods, and applications}
\author[1]{Yingqi Jia}
\author[1,2,3]{Xiaojia Shelly Zhang\texorpdfstring{\corref{corr-author}}}
\ead{zhangxs@illinois.edu}
\address[1]{Department of Civil and Environmental Engineering, University of Illinois Urbana-Champaign, Urbana, IL 61801, USA}
\address[2]{Department of Mechanical Science and Engineering, University of Illinois Urbana-Champaign, Urbana, IL 61801, USA}
\address[3]{National Center for Supercomputing Applications, Urbana, IL 61801, USA}
\begin{document}

\begin{abstract}
Plasticity is inherent to many engineering materials such as metals. While it can degrade the load-carrying capacity of structures via material yielding, it can also protect structures through plastic energy dissipation. To fully harness plasticity, here we present the theory, method, and application of a topology optimization framework that simultaneously optimizes structural geometries and material phases to customize the stiffness, strength, and structural toughness of designs experiencing finite strain elastoplasticity. The framework accurately predicts structural responses by employing a rigorous, mechanics-based elastoplasticity theory that ensures isochoric plastic flow. It also effectively identifies optimal material phase distributions using a gradient-based optimizer, where gradient information is obtained via a reversed adjoint method to address history dependence, along with automatic differentiation to compute the complex partial derivatives. We demonstrate the framework's capabilities by optimizing a range of 2D and 3D multimaterial elastoplastic structures with real-world applications, including energy-dissipating dampers, load-carrying beams, impact-resisting bumpers, and cold working profiled sheets. These optimized multimaterial structures reveal important mechanisms for improving design performance under large deformation, such as the transition from kinematic to isotropic hardening with increasing displacement amplitudes and the formation of twisted regions that concentrate stress, enhancing plastic energy dissipation. Through the superior performance of these optimized designs, we demonstrate the framework’s effectiveness in tailoring elastoplastic responses across various spatial configurations, material types, hardening behaviors, and combinations of candidate materials. This work offers a systematic approach for optimizing next-generation multimaterial structures with elastoplastic behaviors under large deformations.

\keyword{Plasticity; Finite strain; Isochoric plastic flow; Topology optimization; Multimaterial; Energy dissipation}
\endkeyword
\end{abstract}

\maketitle

\section{Introduction}

Plasticity is a fundamental physical phenomenon describing the permanent deformation of solids \citep{chaboche_review_2008} --- a trait that transcends elasticity and is inherent to critical materials such as metals, woods, foams, and soils. These materials form the backbone of engineering structures. The irreversible nature of plasticity holds two faces: it can cause undesirable deformations and a loss of structural load-carrying capacity or enable the formation of desired geometries and the dissipation of energy. In addition, plasticity is an inherently complex behavior, involving history-dependent stress-strain responses, material hardening, and large deformations that interact across multiple scales. Thus, there is a need for innovative methods to design and control plasticity at will.

To harness material plasticity, topology optimization \citep{bendsoe_generating_1988, bendsoe_topology_2003, wang_comprehensive_2021} as a form of computational morphogenesis offers great promise. This approach strategically distributes limited materials to optimize structural responses under given constraints. Therefore, it is naturally suited for controlling mechanical responses such as displacement fields \citep{jia_topology_2024, jia_unstructured_2024}, stress distributions \citep{jia_modulate_2024}, and fracture nucleation and propagation \citep{jia_controlling_2023}.

In light of its potential for controlling mechanical responses, topology optimization has recently achieved notable progress in tailoring plastic behaviors under the assumption of infinitesimal strains. For instance, employing a single material type with restricted usage, researchers have shown success in limiting plastic yielding \citep{amir_stress-constrained_2017, zhang_framework_2023, li_three-dimensional_2023}, maximizing end compliance/force \citep{boissier_elastoplastic_2021, desai_topology_2021}, and maximizing energy \citep{maute_adaptive_1998, zhang_topology_2017, li_topology_2017, li_design_2017, li_design_2017-1, alberdi_topology_2017, alberdi_design_2019}. Furthermore, studies have explored maximizing both compliance and energy \citep{abueidda_topology_2021} and fitting target stress--strain responses \citep{kim_microstructure_2020}. Beyond single-material strategies, \citet{nakshatrala_topology_2015} minimized energy propagation by filling distinct materials into fixed geometries, while \citet{li_topology_2021} optimized end force in similar setups.

Despite these significant advances in controlling structural plasticity under small deformations, optimizing finite strain plastic responses remains in its infancy. Only a few pioneering studies have tackled this challenge in the past decade. For example, \citet{wallin_topology_2016} and \citet{ivarsson_topology_2018, ivarsson_topology_2020} focused on optimizing structural geometries to improve system energy by specifying a single candidate material. Similarly, \citet{ivarsson_plastic_2021} maximized compliance while constraining plastic work through single-material topology optimization. In addition, \citet{alberdi_bi-material_2019} employed two candidate materials in fixed geometries to maximize plastic work. More recently, \citet{han_topology_2024} separately optimized structural geometries and material phases in different design cases to enhance stiffness and plastic hardening.

A review of past efforts highlights a key limitation: most studies either optimize structural geometries while fixing material types or optimize material phases while fixing structural geometries. To fully leverage topology optimization’s potential and further improve structural performance, simultaneous optimization of structural geometries and material phases is desired.
Such an approach would enable the creation of free-form, multimaterial structures with optimized plastic responses. Recently, \citet{jia_multimaterial_2025} pursued this goal and demonstrated preliminary success for infinitesimal strain plasticity; however, simultaneous optimization for finite strain plasticity remains under-explored.

In this study, we propose a multimaterial topology optimization framework for finite strain elastoplasticity. As illustrated in Fig. \ref{Fig: Idea Figure}(a), our goal is to \textit{concurrently} optimize structural geometries and material phases to maximize stiffness, strength (end force), and total energy (effective structural toughness) of structures undergoing finite strain plasticity. To achieve this goal, we present a multimaterial optimization approach in Fig. \ref{Fig: Idea Figure}(b), which addresses two major challenges --- accurate prediction of finite strain elastoplastic responses and path-dependent sensitivity analysis. To precisely predict plastic behaviors, we adopt a rigorous mechanics-based elastoplasticity theory from \citet{simo_framework_1988-1, simo_framework_1988}, which accounts for large deformations. We further adapt it to enforce isochoric plastic flow, accommodating a wide range of plastic materials including metals. To handle the path-dependent sensitivity analysis necessary for updating design variables, we employ the reversed adjoint method \citep{alberdi_unified_2018, jia_multimaterial_2025} to address history dependency and leverage automatic differentiation to effortlessly compute the required partial derivatives. By resolving these challenges, the framework establishes a closed-loop process that enables the creation of freeform, multimaterial optimized elastoplastic structures.

\begin{figure}[!htbp]
    \centering
    \includegraphics[width=18cm]{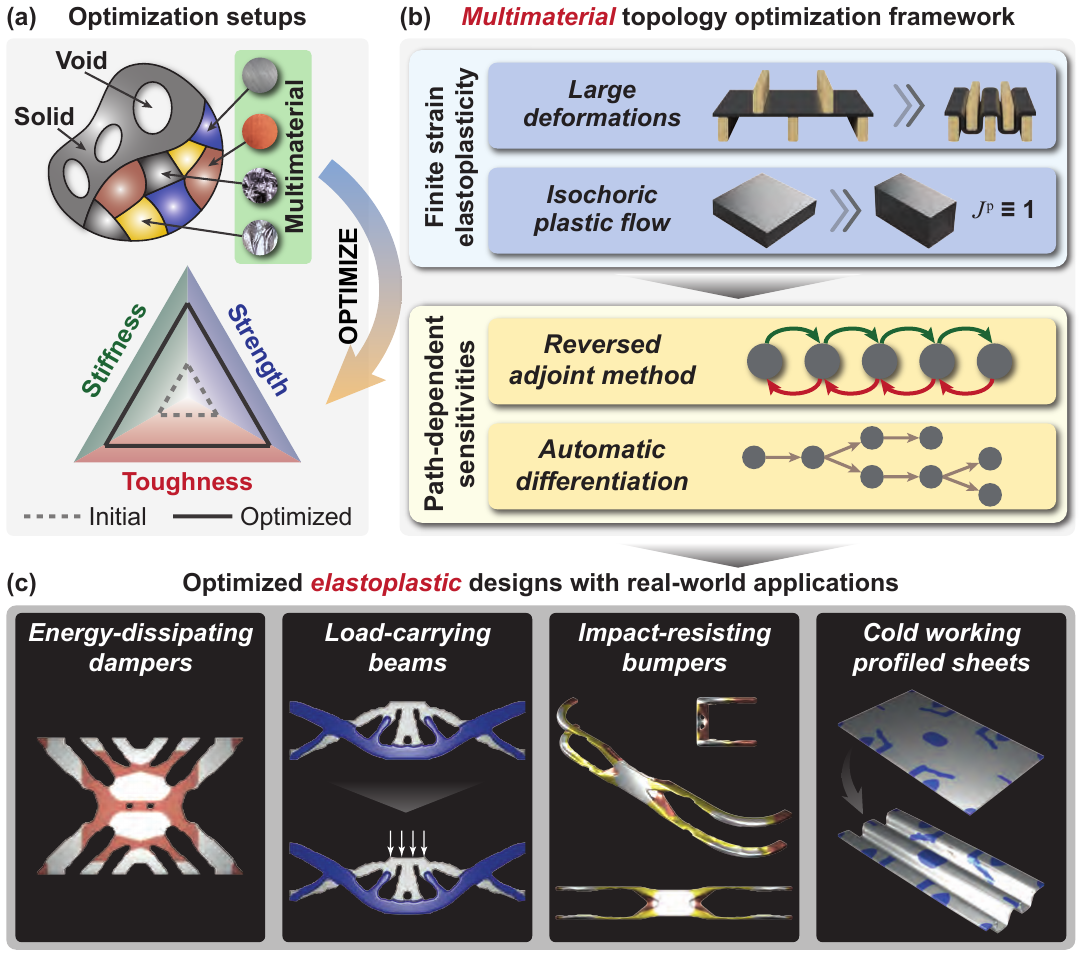}
    \caption{Multimaterial and multiobjective topology optimization for finite strain elastoplasticity. (a) Optimization setups. (b) Multimaterial topology optimization framework. The variable $J^\texttt{p}$ is the determinant of the plastic part of the deformation gradient. (c) Optimized elastoplastic designs with real-world applications.}
    \label{Fig: Idea Figure}
\end{figure}

Based on the proposed framework, we optimize a spectrum of representative multimaterial elastoplastic structures with real-world applications (Fig. \ref{Fig: Idea Figure}(c)) and uncover the mechanisms behind their optimized behaviors. For example, we design two-dimensional (2D) metallic yielding dampers with maximized energy dissipation for vibration mitigation under cyclic loadings, revealing a shift in dominance from kinematic to isotropic hardening as displacement amplitudes increase. We also explore the synergistic use of hyperelastic and elastoplastic materials to achieve various stiffness--strength balances in load-carrying beams. Extending the design to three dimensions (3D), we maximize the crashworthiness of impact-resisting bumpers, uncovering the formation of X-shaped structures in the middle to shorten the load path and enhance elastic energy absorption, as well as twisted regions aside to concentrate stress and increase plastic energy dissipation. Finally, we present optimized cold working profiled sheets undergoing ultra-large plastic deformations, demonstrating the simultaneous optimization of metal-forming and load-carrying stages while incorporating practical constraints such as cost, lightweight, and sustainability \citep{kundu2025sustainability}.

The optimized designs showcase the effectiveness of the proposed framework in freely controlling elastoplastic responses under large deformations in 2D and 3D, across different material types (elastic/plastic/mixed), various hardening behaviors (perfect/isotropic/kinematic/combined and linear/nonlinear), and arbitrary numbers of candidate materials (single-material/bi-material/multimaterial). Ultimately, the framework can provide a systematic tool for designing the next generation of multimaterial elastoplastic structures.

The remaining parts of this study are organized as follows. Section \ref{Sec: Finite Strain Elastoplasticity} recaps the finite strain elastoplasticity theory presented in \citet{simo_framework_1988-1, simo_framework_1988}, with additional emphasis on enforcing isochoric plastic flow. Building on this theory, Section \ref{Sec: Topology Optimization Framework} introduces the proposed multimaterial topology optimization framework for tailoring structural elastoplastic responses. To demonstrate the framework’s effectiveness, Section \ref{Sec: Sample Examples} presents a broad spectrum of optimized elastoplastic designs with real-world applications. Finally, Section \ref{Sec: Conclusions} provides concluding remarks.

This study is further supplemented by six appendices. \ref{Sec: Derivation of Updating Formula of be_bar} derives the proposed updating formulae for the internal variable to ensure isochoric plastic flow. \ref{Sec: Second Elastoplastic Moduli} presents the derivation of the second elastoplastic moduli, a key component of the elastoplasticity theory for a quadratic convergence rate. \ref{Sec: FEA Verification} verifies the accuracy of the theory and implementation by comparing it to analytical solutions. \ref{Sec: FEA Convergence} investigates the convergence, precision, and computational time of the finite element analysis (FEA). \ref{Sec: Sensitivity Analysis and Verification} details the derivation and verification of the path-dependent sensitivity analysis. Finally, \ref{Sec: Comparison of Meshes} examines the consistency of elastoplastic responses across various mesh and finite element combinations for the optimized designs.

\section{Finite strain elastoplasticity: theory and implementation}
\label{Sec: Finite Strain Elastoplasticity}

In this section, we provide a recap of the finite strain elastoplasticity theory from \citet{simo_framework_1988-1, simo_framework_1988} and outline its numerical implementation. We begin with an overview of the fundamentals of finite strain deformation and strain tensors, followed by a presentation of the local governing equations. Lastly, we introduce the global governing equations for finite strain elastoplasticity. The detailed explanations are presented below.

\subsection{Prerequisites: finite strain deformation and strain tensors}

We consider an open bounded domain ($\Omega_0 \subset \mathbb{R}^3$) occupied by a piece of undeformed material. For any material point $\mathbf{X} \in \Omega_0$, we define its displacement field as $\mathbf{u}(\mathbf{X})$ and the total deformation gradient as $\mathbf{F}(\mathbf{X}) = \mathbf{I} + \nabla \mathbf{u}(\mathbf{X})$. We also define the left and right Cauchy--Green deformation tensors as
\begin{equation*}
    \mathbf{b}(\mathbf{X}) = \mathbf{F}(\mathbf{X}) \mathbf{F}^\top(\mathbf{X})
    \quad \text{and} \quad
    \mathbf{C}(\mathbf{X}) = \mathbf{F}^\top(\mathbf{X}) \mathbf{F}(\mathbf{X}),
\end{equation*}
respectively. Additionally, we define the Lagrangian strain tensor as
\begin{equation} \label{Finite Strain Tensors}
    \mathbf{E}(\mathbf{X}) = \dfrac{1}{2}(\mathbf{C} - \mathbf{I}).
\end{equation}
Based on the multiplicative decomposition, the total deformation gradient can be split as
\begin{equation*}
    \mathbf{F}(\mathbf{X}) = \mathbf{F}^\texttt{e}(\mathbf{X}) \mathbf{F}^\texttt{p}(\mathbf{X})
\end{equation*}
where $\mathbf{F}^\texttt{e}(\mathbf{X})$ and $\mathbf{F}^\texttt{p}(\mathbf{X})$ are the elastic and plastic parts of $\mathbf{F}(\mathbf{X})$, respectively, with $\mathbf{F}^\texttt{p}$ corresponding to stress-free plastic deformation. For later use, we further define the elastic part of $\mathbf{b}(\mathbf{X})$ as
\begin{equation} \label{Elastic Part of b}
    \mathbf{b}^\texttt{e}(\mathbf{X})
    = \mathbf{F}^\texttt{e}(\mathbf{X}) {\mathbf{F}^\texttt{e}}^\top(\mathbf{X})
\end{equation}
and the plastic part of $\mathbf{C}(\mathbf{X})$ as
\begin{equation} \label{Plastic Part of C}
    \mathbf{C}^\texttt{p}(\mathbf{X})
    = {\mathbf{F}^\texttt{p}}^\top(\mathbf{X}) \mathbf{F}^\texttt{p}(\mathbf{X}).
\end{equation}

The relationships among these deformation tensors are illustrated in Fig. \ref{Fig: Configurations}. At load step $n$, the undeformed configuration ($\Omega_0$) is mapped to the intermediate configuration ($\Xi_n$) through plastic deformation ($\mathbf{F}_n^\texttt{p}$ or $\mathbf{C}_n^\texttt{p}$), and further to the deformed configuration ($\Omega_n$) via elastic deformation ($\mathbf{F}_n^\texttt{e}$ or $\mathbf{b}_n^\texttt{e}$). Alternatively, $\Omega_0$ can be directly mapped to $\Omega_n$ through the total deformation ($\mathbf{F}_n$). Similarly, at the next load step $n+1$, $\Omega_0$ maps to the intermediate configuration ($\Xi_{n+1}$) through $\mathbf{F}_{n+1}^\texttt{p}$ or $\mathbf{C}_{n+1}^\texttt{p}$, and subsequently to the deformed configuration ($\Omega_{n+1}$) via $\mathbf{F}_{n+1}^\texttt{e}$ or $\mathbf{b}_{n+1}^\texttt{e}$. Note that $\Omega_0$ can also directly map to $\Omega_{n+1}$ through $\mathbf{F}_{n+1}$. Additionally, $\Omega_n$ can be mapped to $\Omega_{n+1}$ using the relative deformation gradient expressed as
\begin{equation*}
    \mathbf{f}_{n+1} = \mathbf{F}_{n+1} \mathbf{F}_n^{-1}.
\end{equation*}
Finally, for later use, we define the volume-preserving parts of these deformation and strain tensors as
\begin{equation*}
    \overline{\mathbf{F}} = J^{-1/3} \mathbf{F}, \quad
    \overline{\mathbf{b}}^\texttt{e} = {(J^\texttt{e})}^{-2/3} \mathbf{b}^\texttt{e}, \quad \text{and} \quad
    \overline{\mathbf{f}} = J_f^{-1/3} \mathbf{f},
\end{equation*}
and express the determinants as
\begin{equation*}
    J = \text{det}(\mathbf{F}), \quad
    J^\texttt{e} = \text{det}(\mathbf{F}^\texttt{e}), \quad
    J^\texttt{p} = \text{det}(\mathbf{F}^\texttt{p}), \quad \text{and} \quad
    J_f = \text{det}(\mathbf{f}).
\end{equation*}

\begin{figure}[!htbp]
    \centering
    \includegraphics[width=13cm]{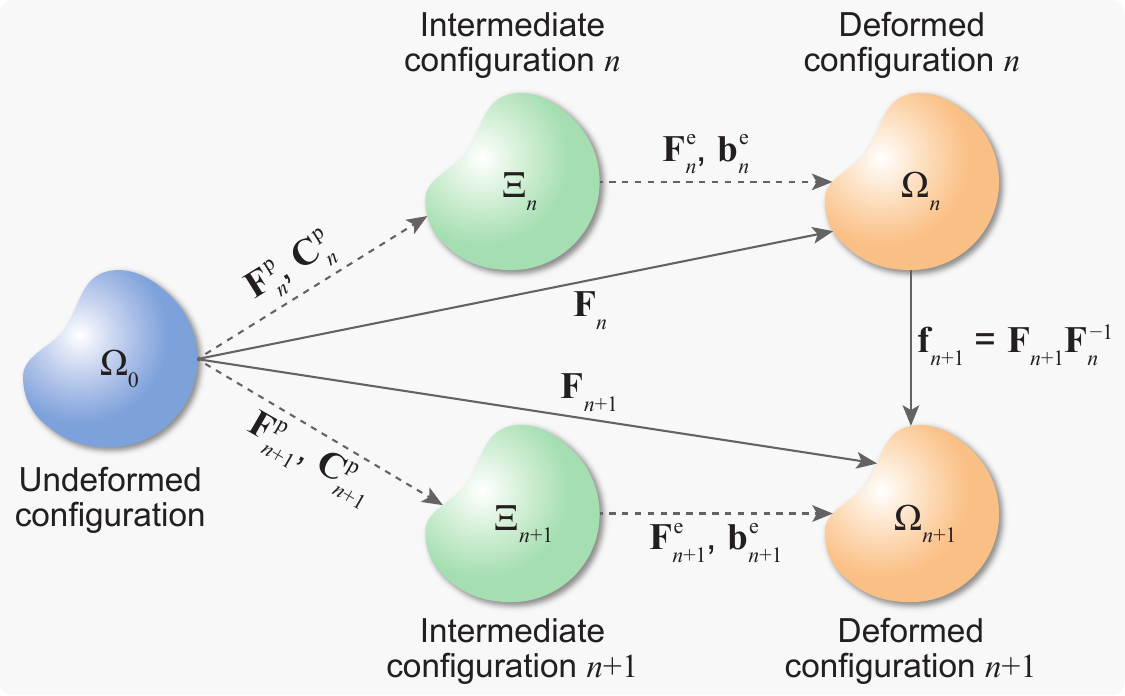}
    \caption{Relationships among the deformation tensors used in the finite strain elastoplasticity theory.}
    \label{Fig: Configurations}
\end{figure}

\subsection{Local governing equations for finite strain elastoplasticity}

In this subsection, we introduce the local governing equations for finite strain elastoplasticity, including the constitutive relationships, flow rules, hardening laws, and yield conditions. Combining these principles, we conclude with the radial return mapping scheme, which is employed to determine the elastoplastic response of a material point.

\subsubsection{Constitutive relationships}

We consider a hyperelastic material with a strain energy density function expressed as
\begin{equation} \label{Strain Energy Density Function}
    W(J^\texttt{e}, \overline{\mathbf{b}}^\texttt{e}) = U(J^\texttt{e}) + \overline{W}(\overline{\mathbf{b}}^\texttt{e}),
\end{equation}
which is characterized by separable volumetric ($U$) and deviatoric ($\overline{W}$) parts. One classical choice of such function reads as \citep{simo_computational_2006}
\begin{equation} \label{Volumetric and Deviatoric Engery}
    U(J^\texttt{e}) = \dfrac{\kappa}{2} \left\{ \frac{1}{2}\left[(J^\texttt{e})^2-1\right] - \ln J^\texttt{e} \right\}
    \quad \text{and} \quad
    \overline{W}(\overline{\mathbf{b}}^\texttt{e})
    = \dfrac{\mu}{2} \left[ \text{tr}(\overline{\mathbf{b}}^\texttt{e}) - 3 \right],
\end{equation}
where $\mu$ and $\kappa$ are the initial shear and bulk moduli, respectively. Based on the strain energy density function in \eqref{Strain Energy Density Function}, we derive the Kirchhoff stress tensor as
\begin{equation} \label{Kirchhoff Stress}
    \widehat{\boldsymbol{\tau}} = \dfrac{\partial W}{\partial \mathbf{F}^\texttt{e}} {\mathbf{F}^\texttt{e}}^\top
    = \boldsymbol{\tau}^\texttt{vol} + \widehat{\mathbf{s}}
    = J^\texttt{e} U'(J^\texttt{e}) \mathbf{I} + \mu\ \text{dev} (\overline{\mathbf{b}}^\texttt{e}),
\end{equation}
where we separate $\widehat{\boldsymbol{\tau}}$ into the volumetric ($\boldsymbol{\tau}^\texttt{vol} = J^\texttt{e} U'(J^\texttt{e}) \mathbf{I}$) and deviatoric ($\widehat{\mathbf{s}} = \mu\ \text{dev} (\overline{\mathbf{b}}^\texttt{e})$) parts. Finally, we define the net stress tensor as
\begin{equation*}
    \boldsymbol{\xi} = \widehat{\mathbf{s}} - \text{dev}(\overline{\boldsymbol{\beta}})
    \quad \text{with} \quad
    \overline{\boldsymbol{\beta}} = J^{-2/3} \boldsymbol{\beta}
\end{equation*}
where $\boldsymbol{\beta}$ is the back stress tensor triggered by the kinematic hardening of materials.

\subsubsection{Flow rules and kinematic hardening laws}

After defining the deformation and strain tensors as well as the constitutive relationships, we proceed with the flow rules and hardening laws. In the context of the finite strain elastoplasticity \citep{simo_framework_1988-1, simo_framework_1988}, we present the flow rules as 
\begin{equation} \label{Flow Rules}
    \mu J^{-2/3} \text{dev}( \mathcal{L}_v \mathbf{b}^\texttt{e} )
    = - 2 \overline{\overline{\mu}} \gamma \mathbf{n}
    \quad \text{and} \quad
    \text{tr}( \mathcal{L}_v \mathbf{b}^\texttt{e} ) = 0,
\end{equation}
and Prager--Ziegler-type kinematic hardening laws as
\begin{equation} \label{Kinematic Hardening Laws}
    \mu J^{-2/3} \text{dev}( \mathcal{L}_v \boldsymbol{\beta} )
    = \dfrac{2}{3} h \overline{\overline{\mu}} \gamma \mathbf{n}
    \quad \text{and} \quad
    \text{tr}( \mathcal{L}_v \boldsymbol{\beta} ) = 0.
\end{equation}
Here, $h$ is the kinematic hardening modulus of materials, and $\gamma \geq 0$ is a consistency parameter that will be determined later. The variable $\mathbf{n}$ is a unit tensor defined as $\mathbf{n} = \boldsymbol{\xi} / \lVert \boldsymbol{\xi} \rVert$ where $\lVert \boldsymbol{\xi} \rVert = \sqrt{\boldsymbol{\xi} : \boldsymbol{\xi}}$. The parameter $\overline{\overline{\mu}}$ is defined as
\begin{equation*}
    \overline{\overline{\mu}} = \overline{\mu} - \dfrac{1}{3} \text{tr}(\overline{\boldsymbol{\beta}})
    \quad \text{with} \quad
    \overline{\mu} = \dfrac{\mu}{3} J^{-2/3} \text{tr}(\mathbf{b}^\texttt{e}).
\end{equation*}
Additionally, the operator $\mathcal{L}_v$ in \eqref{Flow Rules}--\eqref{Kinematic Hardening Laws} is the Lie time derivative defined as
\begin{equation*}
    \mathcal{L}_v \mathbf{T} = \varphi_* \left[ \dfrac{\partial}{\partial t} \varphi^* (\mathbf{T}) \right]
\end{equation*}
for a dummy tensor $\mathbf{T}$ where $t$ represents the time. The symbols $\varphi^*$ and $\varphi_*$ are the pull-back and push-forward operators expressed as
\begin{equation*}
    \varphi^*(\mathbf{T}) = \mathbf{F}^{-1} \mathbf{T} \mathbf{F}^{-\top}
    \quad \text{and} \quad
    \varphi_*(\mathbf{T}) = \mathbf{F} \mathbf{T} \mathbf{F}^\top,
\end{equation*}
respectively.

\subsubsection{Yield criterion and isotropic hardening laws}

To account for the yielding of materials, we adopt the classical von Mises yield criterion defined as
\begin{equation} \label{Yield Criterion}
    f(\boldsymbol{\xi}, \alpha) = \lVert \boldsymbol{\xi} \rVert - \sqrt{\dfrac{2}{3}} k(\alpha) \leq 0
\end{equation}
where $f(\boldsymbol{\xi}, \alpha) < 0$ represents that the material deforms elastically, and $f(\boldsymbol{\xi}, \alpha) = 0$ denotes the onset of yielding. To enforce the yield criterion of $f(\boldsymbol{\xi}, \alpha) \leq 0$, we apply the Kuhn--Tucker conditions expressed as
\begin{equation} \label{Kuhn Tucker Conditions}
    \gamma \geq 0
    \quad \text{and} \quad
    \gamma f(\boldsymbol{\xi}, \alpha) = 0.
\end{equation}

In the above equations, the non-decreasing variable $\alpha$ is the equivalent plastic strain, whose evolution is governed by \citep{simo_framework_1988-1, simo_framework_1988}
\begin{equation} \label{Evolution of Equivalent Plastic Strain}
    \dfrac{\partial \alpha}{\partial t} = \sqrt{\dfrac{2}{3}} \gamma.
\end{equation}
The function $k(\alpha)$ in \eqref{Yield Criterion} determines the radius of the yield surface and characterizes the isotropic hardening of materials, which can follow either a linear or nonlinear form. For linear isotropic hardening, we set
\begin{equation} \label{Linear Isotropic Hardening Law}
    k(\alpha) = \sigma_y + K \alpha
\end{equation}
where $\sigma_y$ is the initial yield strength of the material, and $K$ is the isotropic hardening modulus. For nonlinear isotropic hardening, we adopt a classical expression as \citep{simo_framework_1988}
\begin{equation} \label{Nonlinear Isotropic Hardening Law}
    k(\alpha) = \sigma_y + K \alpha + (\sigma_\infty - \sigma_y)(1 - e^{- \delta \alpha})
\end{equation}
where $\sigma_\infty$ is the residual yield strength, and $\delta>0$ is a saturation exponent.

\subsubsection{Radial return mapping scheme}
\label{Return mapping scheme}

In previous discussions, we have established the constitutive relationship in \eqref{Kirchhoff Stress}, flow rules in \eqref{Flow Rules}, kinematic hardening laws in \eqref{Kinematic Hardening Laws}, yield criterion in \eqref{Yield Criterion}, Kuhn--Tucker conditions in \eqref{Kuhn Tucker Conditions}, the evolution of the equivalent plastic strain in \eqref{Evolution of Equivalent Plastic Strain}, and isotropic hardening laws in \eqref{Linear Isotropic Hardening Law}--\eqref{Nonlinear Isotropic Hardening Law}. A collection of these requirements fully governs the finite strain elastoplastic responses of any material point $\mathbf{X} \in \Omega_0$. To reconcile these governing equations and solve for the material elastoplastic responses, we adopt a canonical radial return mapping scheme \citep{wilkins1969calculation, krieg_accuracies_1977} as follows.

Applying a backward Euler difference scheme to the above local governing equations, we derive their time-discretized versions as
\begin{equation} \label{Local Governing Equations}
    \left\{ \begin{array}{l}
        \overline{\mathbf{b}}_{n+1}^\texttt{e}
        = \overline{\mathbf{f}}_{n+1} \overline{\mathbf{b}}_n^\texttt{e} \overline{\mathbf{f}}_{n+1}^\top
        - \dfrac{2 \overline{\overline{\mu}}_{n+1}}{\mu} \widehat{\gamma}_{n+1} \mathbf{n}_{n+1}, \quad
        \overline{\boldsymbol{\beta}}_{n+1} = \overline{\mathbf{f}}_{n+1} \overline{\boldsymbol{\beta}}_n \overline{\mathbf{f}}_{n+1}^\top
        + \dfrac{2 h \overline{\overline{\mu}}_{n+1}}{3 \mu} \widehat{\gamma}_{n+1} \mathbf{n}_{n+1}, \\[10pt]

        \alpha_{n+1} = \alpha_{n} + \sqrt{\dfrac{2}{3}} \widehat{\gamma}_{n+1}, \quad
        \widehat{\mathbf{s}}_{n+1} = \mu\ \text{dev}(\overline{\mathbf{b}}_{n+1}^\texttt{e}), \quad
        \boldsymbol{\xi}_{n+1} = \widehat{\mathbf{s}}_{n+1} - \text{dev} \left( \overline{\boldsymbol{\beta}}_{n+1} \right),
        \\[10pt]

        \mathbf{n}_{n+1} = \boldsymbol{\xi}_{n+1} / \lVert \boldsymbol{\xi}_{n+1} \rVert, \quad
        \overline{\mu}_{n+1} = \dfrac{\mu}{3} \text{tr}(\overline{\mathbf{b}}_{n+1}^\texttt{e}), \quad
        \overline{\overline{\mu}}_{n+1} = \overline{\mu}_{n+1} - \dfrac{1}{3} \text{tr} (\overline{\boldsymbol{\beta}}_{n+1}), \\[10pt]

        f_{n+1} = \lVert \boldsymbol{\xi}_{n+1} \rVert - \sqrt{\dfrac{2}{3}} k(\alpha_{n+1}) \leq 0, \quad
        \widehat{\gamma}_{n+1} \geq 0, \quad
        \widehat{\gamma}_{n+1} f_{n+1} = 0.
    \end{array} \right.
\end{equation}
In these expressions, we recall that $\mathbf{f}_{n+1} = \mathbf{F}_{n+1} \mathbf{F}_n^{-1}$ is the relative deformation gradient between load steps $n$ and $n+1$, and $\overline{\mathbf{f}}_{n+1} = (J_{n+1}^f)^{-1/3} \mathbf{f}_{n+1}$ is its volume-preserving part with $J_{n+1}^f = \det (\mathbf{f}_{n+1})$. The variable $\widehat{\gamma}_{n+1}$ is shorted for $\Delta t \gamma_{n+1}$, and $\Delta t$ is the time interval.

\begin{remark}
For the derivation of $\overline{\mathbf{b}}_{n+1}^\texttt{e}$ and $\overline{\mu}_{n+1}$ in \eqref{Local Governing Equations} and all subsequent analyses, we adopt the assumption of isochoric plastic flow ($J_n^\texttt{p} = J_{n+1}^\texttt{p} = 1$). This assumption is commonly accepted and even required \citep{weber_finite_1990, simo_algorithms_1992, simo_associative_1992, wang_how_2017} for many elastoplastic materials such as metals, and it implies that the volume of the material does not change during plastic deformation.
\end{remark}

To solve \eqref{Local Governing Equations} with given $\overline{\mathbf{b}}_n^\texttt{e}$, $\overline{\boldsymbol{\beta}}_n$, $\alpha_n$, $\mathbf{F}_n$, and $\mathbf{F}_{n+1}$, we \textit{temporarily} assume no new plastic flow ($\widehat{\gamma}_{n+1}=0$) and define the trial variables as
\begin{equation} \label{Trial Variables}
    \left\{ \begin{array}{llll}
        \overline{\mathbf{b}}_{n+1}^\texttt{e,tr}
        = \overline{\mathbf{f}}_{n+1} \overline{\mathbf{b}}_n^\texttt{e} \overline{\mathbf{f}}_{n+1}^\top,

        & \overline{\boldsymbol{\beta}}_{n+1}^\texttt{tr}
        = \overline{\mathbf{f}}_{n+1} \overline{\boldsymbol{\beta}}_n \overline{\mathbf{f}}_{n+1}^\top,

        & \alpha_{n+1}^\texttt{tr} = \alpha_n, \\[10pt]

        \mathbf{s}_{n+1}^\texttt{tr} = \mu\ \text{dev} \left( \overline{\mathbf{b}}_{n+1}^\texttt{e,tr} \right),

        &\boldsymbol{\xi}_{n+1}^\texttt{tr}
        = \mathbf{s}_{n+1}^\texttt{tr} - \text{dev} \left( \overline{\boldsymbol{\beta}}_{n+1}^\texttt{tr} \right),

        &\mathbf{n}_{n+1}^\texttt{tr} = \boldsymbol{\xi}_{n+1}^\texttt{tr} / \lVert \boldsymbol{\xi}_{n+1}^\texttt{tr} \rVert, \\[10pt]

        \overline{\mu}_{n+1}^\texttt{tr} = \dfrac{\mu}{3} \text{tr}\left( \overline{\mathbf{b}}_{n+1}^\texttt{e,tr} \right),

        & \overline{\overline{\mu}}_{n+1}^\texttt{tr} = \overline{\mu}_{n+1}^\texttt{tr} - \dfrac{1}{3} \text{tr} \left( \overline{\boldsymbol{\beta}}_{n+1}^\texttt{tr} \right),

        & f_{n+1}^\texttt{tr} = \lVert \boldsymbol{\xi}_{n+1}^\texttt{tr} \rVert
        - \sqrt{\dfrac{2}{3}} k \left( \alpha_{n+1}^\texttt{tr} \right).
    \end{array} \right.
\end{equation}
Next, we correct $\overline{\boldsymbol{\beta}}_{n+1}$ and $\alpha_{n+1}$ with
\begin{equation} \label{Updated beta and alpha}
    \overline{\boldsymbol{\beta}}_{n+1} = \overline{\boldsymbol{\beta}}_{n+1}^\texttt{tr}
    + \dfrac{2 h \overline{\overline{\mu}}_{n+1}^\texttt{tr}}{3 \mu} \widehat{\gamma}_{n+1} \mathbf{n}_{n+1}
    \quad \text{and} \quad
    \alpha_{n+1} = \alpha_{n+1}^\texttt{tr} + \sqrt{\dfrac{2}{3}} \widehat{\gamma}_{n+1}
\end{equation}
based on $\eqref{Local Governing Equations}_{2,3}$ and Proposition \ref{Compute mu_bar and mu_bar_bar} and correct $\overline{\mathbf{b}}_{n+1}^\texttt{e}$ with
\begin{equation} \label{Updated be_bar}
    \left\{ \begin{array}{l}
        \overline{\mathbf{b}}_{n+1}^\texttt{e}
        = \text{dev} \left( \overline{\mathbf{b}}_{n+1}^\texttt{e} \right) + \dfrac{1}{3} \mathcal{I}_1 \mathbf{I}, \\[12pt]

        \text{dev} \left( \overline{\mathbf{b}}_{n+1}^\texttt{e} \right) = \text{dev} \left( \overline{\mathbf{b}}_{n+1}^\texttt{e,tr} \right) - \dfrac{2 \overline{\overline{\mu}}_{n+1}^\texttt{tr}}{\mu} \widehat{\gamma}_{n+1} \mathbf{n}_{n+1}, \\[12pt]

        \mathcal{I}_1 = \left\{ \begin{array}{ll}
            \left( -\dfrac{\mathcal{Q}}{2} + \sqrt{-\Delta} \right)^{1/3} + \left( -\dfrac{\mathcal{Q}}{2} - \sqrt{-\Delta} \right)^{1/3}, & \text{if $\Delta < 0$;} \\[12pt]
    
            \max \left\{ \dfrac{3 \mathcal{Q}}{\mathcal{P}}, - \dfrac{3 \mathcal{Q}}{2 \mathcal{P}} \right\}, & \text{if $\Delta = 0$;} \\[12pt]  
    
            \mathcal{S}_1, & \text{if $\Delta > 0$, $\left( \overline{\mathbf{b}}_{n+1}^\texttt{e} \right)_{11} > 0$,
            and $\mathcal{B} > 0$}; \\[12pt]
    
            \mathcal{S}_2, & \text{otherwise.}
        \end{array} \right.
    \end{array} \right.
\end{equation}
Here, the related variables are defined as
\begin{equation*}
    \left\{ \begin{array}{l}
        \mathcal{P} = - \mathcal{J}_2 \leq 0, \quad
        \mathcal{Q} = \mathcal{J}_3 - 1, \quad
        \Delta = - \left( \dfrac{\mathcal{P}^3}{27} + \dfrac{\mathcal{Q}^2}{4} \right), \\[12pt]

        \mathcal{J}_2 = \dfrac{1}{2} \left\lVert \text{dev} \left( \overline{\mathbf{b}}_{n+1}^\texttt{e} \right) \right\rVert^2, \quad 
        \mathcal{J}_3 = \det \left[ \text{dev} \left( \overline{\mathbf{b}}_{n+1}^\texttt{e} \right) \right], \\[12pt]
   
        \mathcal{B} = \left( \overline{\mathbf{b}}_{n+1}^\texttt{e} \right)_{11} \left( \overline{\mathbf{b}}_{n+1}^\texttt{e} \right)_{22} - \left( \overline{\mathbf{b}}_{n+1}^\texttt{e} \right)_{12} \left( \overline{\mathbf{b}}_{n+1}^\texttt{e} \right)_{21}.
    \end{array} \right.
\end{equation*}
Additionally, we define $\mathcal{S}_1$ and $\mathcal{S}_2$ as the largest and second to largest values among $r_1$, $r_2$, and $r_3$ with $r_k$ defined as
\begin{equation} \label{Trigonometric Solutions}
    r_k = 2 \sqrt{-\dfrac{\mathcal{P}}{3}} \cos \left[ \dfrac{1}{3} \arccos \left( \dfrac{3 \mathcal{Q}}{2 \mathcal{P}} \sqrt{-\dfrac{3}{\mathcal{P}}} \right) - \dfrac{2 k \pi}{3} \right]
    \quad \text{for} \quad
    k = 1, 2, 3.
\end{equation}

\begin{proposition} \label{Compute mu_bar and mu_bar_bar}
The variables $\overline{\mu}$ and $\overline{\overline{\mu}}$ satisfy $\overline{\mu}_{n+1} = \overline{\mu}_{n+1}^\texttt{tr}$ and $\overline{\overline{\mu}}_{n+1} = \overline{\overline{\mu}}_{n+1}^\texttt{tr}$, respectively.
\end{proposition}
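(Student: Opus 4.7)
The plan is to exploit the fact that the discrete updates from the trial quantities to the corrected quantities, both for $\overline{\mathbf{b}}^\texttt{e}$ and for $\overline{\boldsymbol{\beta}}$, are scalar multiples of a single tensor $\mathbf{n}_{n+1}$ which I will show is deviatoric. Taking traces then annihilates the corrections and yields the two identities essentially for free; no root-finding or inversion of the cubic for $\mathcal{I}_1$ is needed.

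First, I would combine $\eqref{Local Governing Equations}_{1,2}$ with the trial quantities in $\eqref{Trial Variables}$ to rewrite the corrected updates as
\begin{equation*}
    \overline{\mathbf{b}}_{n+1}^\texttt{e}
    = \overline{\mathbf{b}}_{n+1}^\texttt{e,tr}
    - \dfrac{2 \overline{\overline{\mu}}_{n+1}}{\mu} \widehat{\gamma}_{n+1} \mathbf{n}_{n+1},
    \qquad
    \overline{\boldsymbol{\beta}}_{n+1}
    = \overline{\boldsymbol{\beta}}_{n+1}^\texttt{tr}
    + \dfrac{2 h \overline{\overline{\mu}}_{n+1}}{3 \mu} \widehat{\gamma}_{n+1} \mathbf{n}_{n+1}.
\end{equation*}
If $\text{tr}(\mathbf{n}_{n+1}) = 0$, taking traces on both sides at once gives $\text{tr}(\overline{\mathbf{b}}_{n+1}^\texttt{e}) = \text{tr}(\overline{\mathbf{b}}_{n+1}^\texttt{e,tr})$ and $\text{tr}(\overline{\boldsymbol{\beta}}_{n+1}) = \text{tr}(\overline{\boldsymbol{\beta}}_{n+1}^\texttt{tr})$, regardless of the (as yet undetermined) values of the scalars $\widehat{\gamma}_{n+1}$ and $\overline{\overline{\mu}}_{n+1}$ multiplying $\mathbf{n}_{n+1}$.

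The only nontrivial step — and what I would emphasize as the key observation — is that $\mathbf{n}_{n+1}$ is indeed traceless. From $\mathbf{n}_{n+1} = \boldsymbol{\xi}_{n+1} / \lVert \boldsymbol{\xi}_{n+1} \rVert$, this reduces to showing that $\boldsymbol{\xi}_{n+1}$ is deviatoric. Reading $\boldsymbol{\xi}_{n+1} = \widehat{\mathbf{s}}_{n+1} - \text{dev}(\overline{\boldsymbol{\beta}}_{n+1})$ from $\eqref{Local Governing Equations}_{2}$ and noting that $\widehat{\mathbf{s}}_{n+1} = \mu\,\text{dev}(\overline{\mathbf{b}}_{n+1}^\texttt{e})$ is also a deviator, one sees that $\boldsymbol{\xi}_{n+1}$ is a difference of two traceless tensors and is thus itself traceless. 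The subtlety worth flagging is that $\overline{\mathbf{b}}_{n+1}^\texttt{e}$ and $\overline{\boldsymbol{\beta}}_{n+1}$ are in general \emph{not} deviatoric; the definition of $\boldsymbol{\xi}_{n+1}$ extracts only their deviatoric parts, which is precisely what makes the argument work.

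To close, I would multiply the first trace identity by $\mu/3$ to obtain $\overline{\mu}_{n+1} = \overline{\mu}_{n+1}^\texttt{tr}$, then combine both trace identities with the definition $\overline{\overline{\mu}}_{n+1} = \overline{\mu}_{n+1} - \tfrac{1}{3}\text{tr}(\overline{\boldsymbol{\beta}}_{n+1})$ and its trial analogue to conclude $\overline{\overline{\mu}}_{n+1} = \overline{\overline{\mu}}_{n+1}^\texttt{tr}$. The whole argument is essentially the discrete counterpart of the continuum identity $\text{tr}(\mathcal{L}_v \mathbf{b}^\texttt{e}) = 0$ encoded in the flow rule $\eqref{Flow Rules}$, so the real \textbf{hard part} is conceptual rather than computational: recognizing that the return-mapping correction direction $\mathbf{n}_{n+1}$ lies in the deviatoric subspace, so that traces are automatically conserved by the corrector step.
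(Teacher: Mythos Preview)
Your proof is correct but takes a different, more elementary route than the paper. You work directly with the already-discretized equations $\eqref{Local Governing Equations}_{1,2}$ and observe that the corrector direction $\mathbf{n}_{n+1}$ is deviatoric, so that taking traces immediately yields $\text{tr}(\overline{\mathbf{b}}_{n+1}^\texttt{e}) = \text{tr}(\overline{\mathbf{b}}_{n+1}^\texttt{e,tr})$ and $\text{tr}(\overline{\boldsymbol{\beta}}_{n+1}) = \text{tr}(\overline{\boldsymbol{\beta}}_{n+1}^\texttt{tr})$. The paper instead returns to the continuum trace conditions $\eqref{Flow Rules}_2$ and $\eqref{Kinematic Hardening Laws}_2$, applies backward Euler to them separately, and then uses the isochoric plastic flow assumption $J_n^\texttt{p}=J_{n+1}^\texttt{p}=1$ to convert the resulting identities on $\mathbf{b}^\texttt{e}$ and $\boldsymbol{\beta}$ into identities on their volume-preserving counterparts $\overline{\mathbf{b}}^\texttt{e}$ and $\overline{\boldsymbol{\beta}}$. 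Your approach is shorter and avoids the $J$-factor bookkeeping; the paper's approach makes more explicit that the result is the discrete analogue of the continuum trace constraints in the flow and hardening rules, and shows where the isochoric assumption enters. Both are valid; the paper's route emphasizes modeling provenance, yours emphasizes algebraic economy.
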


\begin{proof}
Applying the backward Euler difference on $\eqref{Flow Rules}_2$ derives
\begin{equation*}
    0 = \text{tr} \left[ \mathbf{F}_{n+1} \left( \mathbf{F}_{n+1}^{-1} \mathbf{b}_{n+1}^\texttt{e} \mathbf{F}_{n+1}^{-\top} - \mathbf{F}_n^{-1} \mathbf{b}_n^\texttt{e} \mathbf{F}_n^{-\top} \right) \mathbf{F}_{n+1}^\top \right]
    = \text{tr} \left( \mathbf{b}_{n+1}^\texttt{e} \right) - \text{tr} \left( \mathbf{f}_{n+1} \mathbf{b}_n^\texttt{e} \mathbf{f}_{n+1}^\top \right).
\end{equation*}
Note that
\begin{equation*}
    J_{n+1}^{-2/3} \mathbf{b}_{n+1}^\texttt{e} = (J_{n+1}^\texttt{e})^{-2/3} \mathbf{b}_{n+1}^\texttt{e} = \overline{\mathbf{b}}_{n+1}^\texttt{e}
\end{equation*}
and
\begin{equation*}
    J_{n+1}^{-2/3} \mathbf{f}_{n+1} \mathbf{b}_n^\texttt{e} \mathbf{f}_{n+1}^\top = \left( J_{n+1}^f J_n^\texttt{e} \right)^{-2/3} \mathbf{f}_{n+1} \mathbf{b}_n^\texttt{e} \mathbf{f}_{n+1}^\top = \overline{\mathbf{f}}_{n+1} \overline{\mathbf{b}}_n^\texttt{e} \overline{\mathbf{f}}_{n+1}^\top = \overline{\mathbf{b}}_{n+1}^\texttt{e,tr}.
\end{equation*}
We then prove
\begin{equation*}
    \text{tr} \left( \overline{\mathbf{b}}_{n+1}^\texttt{e} \right)
    = \text{tr} \left( \overline{\mathbf{b}}_{n+1}^\texttt{e,tr} \right)
    \quad \Rightarrow \quad
    \overline{\mu}_{n+1} = \dfrac{\mu}{3} \text{tr} \left( \overline{\mathbf{b}}_{n+1}^\texttt{e} \right) = \dfrac{\mu}{3} \text{tr} \left( \overline{\mathbf{b}}_{n+1}^\texttt{e,tr} \right) = \overline{\mu}_{n+1}^\texttt{tr}.
\end{equation*}
Similarly, applying the backward Euler difference on $\eqref{Kinematic Hardening Laws}_2$ derives
\begin{equation*}
    \text{tr} \left( \boldsymbol{\beta}_{n+1} \right) = \text{tr} \left( \mathbf{f}_{n+1} \boldsymbol{\beta}_n \mathbf{f}_{n+1}^\top \right)
    \quad \Rightarrow \quad
    \text{tr} \left( \overline{\boldsymbol{\beta}}_{n+1} \right) = \text{tr} \left( \overline{\boldsymbol{\beta}}_{n+1}^\texttt{tr} \right),
\end{equation*}
and we prove
\begin{equation*}
    \overline{\overline{\mu}}_{n+1} = \overline{\mu}_{n+1} - \dfrac{1}{3} \text{tr} \left( \overline{\boldsymbol{\beta}}_{n+1} \right) = \overline{\mu}_{n+1}^\texttt{tr} - \dfrac{1}{3} \text{tr} \left( \overline{\boldsymbol{\beta}}_{n+1}^\texttt{tr} \right) = \overline{\overline{\mu}}_{n+1}^\texttt{tr}.
\end{equation*}
\end{proof}

\begin{remark}
The updating formulae for $\overline{\mathbf{b}}_{n+1}^\texttt{e}$ in \eqref{Updated be_bar} is different from $\eqref{Local Governing Equations}_1$ as used in \citet{simo_framework_1988-1, simo_framework_1988}. It is because the expression in $\eqref{Local Governing Equations}_1$ is a necessary but \textit{insufficient} condition of the isochoric plastic flow ($J_{n+1}^\texttt{p} = 1$). As pointed out in \citet{wang_how_2017} and demonstrated in \ref{Sec: Comparison with Analytical Solution}, the violation of the isochoric plastic flow leads to incorrect stress prediction, especially during the unloading stage. Many remedies such as the exponential map algorithm \citep{simo_algorithms_1992, miehe_exponential_1996} can fix this issue. In the current work, we propose the updating formulae in \eqref{Updated be_bar} that require minimum modifications to the original theory \citep{simo_framework_1988-1, simo_framework_1988}, and this formula is essentially a generalized version compared to the one used in \citet{simo_associative_1992}. The derivation of \eqref{Updated be_bar} is presented in \ref{Sec: Derivation of Updating Formula of be_bar}. A comparison between the resultant FEA prediction and the analytical solution is shown in \ref{Sec: FEA Verification}, where we observe that the isochoric plastic flow is enforced precisely.
\end{remark}

We emphasize that the updating formulae in \eqref{Updated beta and alpha} and \eqref{Updated be_bar} are fully explicit up to the unit tensor ($\mathbf{n}_{n+1}$) and consistency parameter ($\widehat{\gamma}_{n+1}$), and we proceed to determine them. To derive the unit tensor ($\mathbf{n}_{n+1}$), we utilize $\eqref{Local Governing Equations}_4$, $\eqref{Trial Variables}_4$, and $\eqref{Updated be_bar}_2$ and yield
\begin{equation} \label{Deviatoric Kirchhoff Stress}
    \mathbf{s}_{n+1} = \mu\ \text{dev}(\overline{\mathbf{b}}_{n+1}^\texttt{e,tr})
    - 2 \overline{\overline{\mu}}_{n+1}^\texttt{tr} \widehat{\gamma}_{n+1} \mathbf{n}_{n+1}
    = \mathbf{s}_{n+1}^\texttt{tr}
    - 2 \overline{\overline{\mu}}_{n+1}^\texttt{tr} \widehat{\gamma}_{n+1} \mathbf{n}_{n+1}
\end{equation}
by noticing $\mathbf{n}_{n+1}$ is deviatoric. Based on $\eqref{Local Governing Equations}_2$, $\eqref{Trial Variables}_{2,5}$, and \eqref{Deviatoric Kirchhoff Stress}, we rewrite $\eqref{Local Governing Equations}_5$ as
\begin{equation*}
    \boldsymbol{\xi}_{n+1} = \mathbf{s}_{n+1} - \text{dev} \left( \overline{\boldsymbol{\beta}}_{n+1} \right)
    = \boldsymbol{\xi}_{n+1}^\texttt{tr}
    - 2 \overline{\overline{\mu}}_{n+1}^\texttt{tr}
    \left( 1 + \dfrac{h}{3\mu} \right) \widehat{\gamma}_{n+1} \mathbf{n}_{n+1},
\end{equation*}
which further renders
\begin{equation*}
    \lVert \boldsymbol{\xi}_{n+1} \rVert \mathbf{n}_{n+1}
    = \lVert \boldsymbol{\xi}_{n+1}^\texttt{tr} \rVert \mathbf{n}_{n+1}^\texttt{tr}
    - 2 \overline{\overline{\mu}}_{n+1}^\texttt{tr}
    \left( 1 + \dfrac{h}{3\mu} \right) \widehat{\gamma}_{n+1} \mathbf{n}_{n+1}
\end{equation*}
and implies
\begin{equation} \label{Relationship between xi and xi_trial}
    \lVert \boldsymbol{\xi}_{n+1} \rVert
    + 2 \overline{\overline{\mu}}_{n+1}^\texttt{tr}
    \left( 1 + \dfrac{h}{3\mu} \right) \widehat{\gamma}_{n+1}
    = \lVert \boldsymbol{\xi}_{n+1}^\texttt{tr} \rVert
    \quad \text{and} \quad
    \mathbf{n}_{n+1} = \mathbf{n}_{n+1}^\texttt{tr}
\end{equation}
under the assumption of $\overline{\overline{\mu}}_{n+1}^\texttt{tr} > 0$ and due to $\lVert \mathbf{n}_{n+1} \rVert = \lVert \mathbf{n}_{n+1}^\texttt{tr} \rVert = 1$.

Finally, to compute the consistency parameter ($\widehat{\gamma}_{n+1}$), we discuss the signs of the trial yield indicator, $f_{n+1}^\texttt{tr}$. In the case of $f_{n+1}^\texttt{tr} \leq 0$, the yield criterion is satisfied ($f_{n+1} < 0$ and $\widehat{\gamma}_{n+1} = 0$) with $\overline{\mathbf{b}}_{n+1}^\texttt{e,tr}$, $\overline{\boldsymbol{\beta}}_{n+1}^\texttt{tr}$, and $\alpha_{n+1}^\texttt{tr}$. We then use them to update the internal variables at load step $n+1$ as
\begin{equation*}
    \overline{\mathbf{b}}_{n+1}^\texttt{e} = \overline{\mathbf{b}}_{n+1}^\texttt{e,tr}, \quad
    \overline{\boldsymbol{\beta}}_{n+1} = \overline{\boldsymbol{\beta}}_{n+1}^\texttt{tr},
    \quad \text{and} \quad
    \alpha_{n+1} = \alpha_{n+1}^\texttt{tr}.
\end{equation*}
In the case of $f_{n+1}^\texttt{tr} > 0$, the yield criterion is violated ($f_{n+1} > 0$) with $\overline{\mathbf{b}}_{n+1}^\texttt{e,tr}$, $\overline{\boldsymbol{\beta}}_{n+1}^\texttt{tr}$, and $\alpha_{n+1}^\texttt{tr}$. We need to make them ``return" along the $\mathbf{n}_{n+1}$ direction with a distance determined by $\widehat{\gamma}_{n+1} > 0$. To figure out $\widehat{\gamma}_{n+1}$, we enforce the yield criterion
\begin{equation} \label{Function of gamma}
    0 = f_{n+1} = \lVert \boldsymbol{\xi}_{n+1} \rVert - \sqrt{\dfrac{2}{3}} k (\alpha_{n+1})
    = \lVert \boldsymbol{\xi}_{n+1}^\texttt{tr} \rVert
    - 2 \overline{\overline{\mu}}_{n+1}^\texttt{tr}
    \left( 1 + \dfrac{h}{3\mu} \right) \widehat{\gamma}_{n+1}
    - \sqrt{\dfrac{2}{3}} k \left( \alpha_{n+1}^\texttt{tr} + \sqrt{\dfrac{2}{3}} \widehat{\gamma}_{n+1} \right) := \mathcal{G}(\widehat{\gamma}_{n+1})
\end{equation}
based on $\eqref{Updated beta and alpha}_2$ and $\eqref{Relationship between xi and xi_trial}_1$. The expression in \eqref{Function of gamma} is an algebraic equation of $\widehat{\gamma}_{n+1}$. For the linear isotropic hardening law in \eqref{Linear Isotropic Hardening Law}, we analytically solve \eqref{Function of gamma} for $\widehat{\gamma}_{n+1}$ as
\begin{equation} \label{Linear Solution of gamma}
    \widehat{\gamma}_{n+1} =
    \dfrac{1}{2 \overline{\overline{\mu}}_{n+1}^\texttt{tr}}
    \left( 1 + \dfrac{h}{3\mu} + \dfrac{K}{3 \overline{\overline{\mu}}_{n+1}^\texttt{tr}} \right)^{-1}
    \left[ \lVert \boldsymbol{\xi}_{n+1}^\texttt{tr} \rVert - \sqrt{\dfrac{2}{3}} (\sigma_y + K \alpha_{n+1}^\texttt{tr}) \right].
\end{equation}
For the nonlinear hardening law in \eqref{Nonlinear Isotropic Hardening Law}, we adopt the Newton's method to iteratively solve $\widehat{\gamma}_{n+1}$ as
\begin{equation} \label{Nonlinear Solution of gamma}
    \widehat{\gamma}_{n+1}^{(k+1)} = \widehat{\gamma}_{n+1}^{(k)}
    - \mathcal{G}(\widehat{\gamma}_{n+1}^{(k)}) / \mathcal{G}'(\widehat{\gamma}_{n+1}^{(k)})
\end{equation}
with
\begin{equation*}
    \mathcal{G}'(\widehat{\gamma}_{n+1}^{(k)})
    = - 2 \overline{\overline{\mu}}_{n+1}^\texttt{tr} \left( 1 + \dfrac{h}{3\mu} \right)
    - \dfrac{2}{3} k' \left( \alpha_{n+1}^\texttt{tr} + \sqrt{\dfrac{2}{3}} \widehat{\gamma}_{n+1}^{(k)} \right).
\end{equation*}

After computing the unit tensor ($\mathbf{n}_{n+1}$) with $\eqref{Relationship between xi and xi_trial}_2$ and the consistency parameter ($\widehat{\gamma}_{n+1}$) with \eqref{Linear Solution of gamma} or \eqref{Nonlinear Solution of gamma}, we send them back to \eqref{Updated beta and alpha} and \eqref{Updated be_bar} for correcting $\overline{\mathbf{b}}_{n+1}^\texttt{e}$, $\overline{\boldsymbol{\beta}}_{n+1}$, and $\alpha_{n+1}$. This entire prediction--correction procedure forms a closed loop and is typically referred to as the return mapping scheme for elastoplasticity. For the reader's convenience, we present the implementation of this scheme in Algorithm \ref{Algorithm of Return Mapping Scheme} and the relationships between the primary state variables at load steps $n$ and $n+1$ in Fig. \ref{Fig: Independent Variables}.

\begin{algorithm}[!htbp]
\caption{Return mapping scheme for finite strain elastoplasticity}
\label{Algorithm of Return Mapping Scheme}

\textbf{Inputs}:

Volume-preserving part of the elastic left Cauchy--Green deformation tensor, $\overline{\mathbf{b}}_n^\texttt{e}$;

``Volume-preserving" part of the back stress tensor, $\overline{\boldsymbol{\beta}}_n$;

Equivalent plastic strain, $\alpha_n$;

Deformation gradients at load steps $n$ and $n+1$, $\mathbf{F}_n$ and $\mathbf{F}_{n+1}$, respectively;

\textbf{Prediction stage:}

Compute the relative deformation gradient as $\mathbf{f}_{n+1} = \mathbf{F}_{n+1} \mathbf{F}_n^{-1}$;

Compute the volume-preserving part of $\mathbf{f}_{n+1}$ as $\overline{\mathbf{f}}_{n+1} = {\left( J_{n+1}^f \right)}^{-1/3} \mathbf{f}_{n+1}$;

Compute the trial version of $\overline{\mathbf{b}}_{n+1}^\texttt{e}$ as 
$\overline{\mathbf{b}}_{n+1}^\texttt{e,tr} = \overline{\mathbf{f}}_{n+1} \overline{\mathbf{b}}_n^\texttt{e} \overline{\mathbf{f}}_{n+1}^\top$;

Compute the trial version of $\overline{\boldsymbol{\beta}}_{n+1}$ as $\overline{\boldsymbol{\beta}}_{n+1}^\texttt{tr} = \overline{\mathbf{f}}_{n+1} \overline{\boldsymbol{\beta}}_n \overline{\mathbf{f}}_{n+1}^\top$;

Compute the trial version of $\alpha_{n+1}$ as $\alpha_{n+1}^\texttt{tr} = \alpha_n$;

Compute the trial deviatoric Kirchhoff stress tensor as $\mathbf{s}_{n+1}^\texttt{tr} = \mu\ \text{dev} \left( \overline{\mathbf{b}}_{n+1}^\texttt{e,tr} \right)$;

Compute the trial net stress tensor as $\boldsymbol{\xi}_{n+1}^\texttt{tr} = \mathbf{s}_{n+1}^\texttt{tr} - \text{dev} \left( \overline{\boldsymbol{\beta}}_{n+1}^\texttt{tr} \right)$;

Compute the unit tensor as $\mathbf{n}_{n+1} = \mathbf{n}_{n+1}^\texttt{tr} = \boldsymbol{\xi}_{n+1}^\texttt{tr} / \lVert \boldsymbol{\xi}_{n+1}^\texttt{tr} \rVert$;

Compute the parameters $\overline{\mu}_{n+1}^\texttt{tr} = \mu\ \text{tr}\left( \overline{\mathbf{b}}_{n+1}^\texttt{e,tr} \right) / 3$ and $\overline{\overline{\mu}}_{n+1}^\texttt{tr} = \overline{\mu}_{n+1}^\texttt{tr} - \text{tr} \left( \overline{\boldsymbol{\beta}}_{n+1}^\texttt{tr} \right) / 3$;

Compute the trial yield indicator as $f_{n+1}^\texttt{tr} = \lVert \boldsymbol{\xi}_{n+1}^\texttt{tr} \rVert - \sqrt{2/3} k \left( \alpha_{n+1}^\texttt{tr} \right)$;

\textbf{Correction stage:}

\eIf{$f_{n+1}^\texttt{tr} \leq 0$}{
    Set the consistency parameter as $\widehat{\gamma}_{n+1} = 0$;
}{
    \eIf{the isotropic hardening law ($k(\alpha)$) is linear}{
        Compute $\widehat{\gamma}_{n+1} > 0$ with \eqref{Linear Solution of gamma};
    }{
        Compute $\widehat{\gamma}_{n+1} > 0$ with \eqref{Nonlinear Solution of gamma};
    }
}

Compute $\mathcal{I}_1$ based on $\eqref{Updated be_bar}_3$;

Update $\overline{\mathbf{b}}_{n+1}^\texttt{e}$ as $\overline{\mathbf{b}}_{n+1}^\texttt{e}= \text{dev} \left( \overline{\mathbf{b}}_{n+1}^\texttt{e,tr} \right) - 2 \left( \overline{\overline{\mu}}_{n+1}^\texttt{tr}/\mu \right) \widehat{\gamma}_{n+1} \mathbf{n}_{n+1} + \mathcal{I}_1 \mathbf{I}/3$;

Update $\overline{\boldsymbol{\beta}}_{n+1}$ as $\overline{\boldsymbol{\beta}}_{n+1} = \overline{\boldsymbol{\beta}}_{n+1}^\texttt{tr} + 2h/3 \left( \overline{\overline{\mu}}_{n+1}^\texttt{tr}/\mu \right) \widehat{\gamma}_{n+1} \mathbf{n}_{n+1}$;

Update $\alpha_{n+1}$ as $\alpha_{n+1} = \alpha_{n+1}^\texttt{tr} + \sqrt{2/3} \widehat{\gamma}_{n+1}$;
\end{algorithm}

\begin{figure}[!htbp]
    \centering
    \includegraphics[width=18cm]{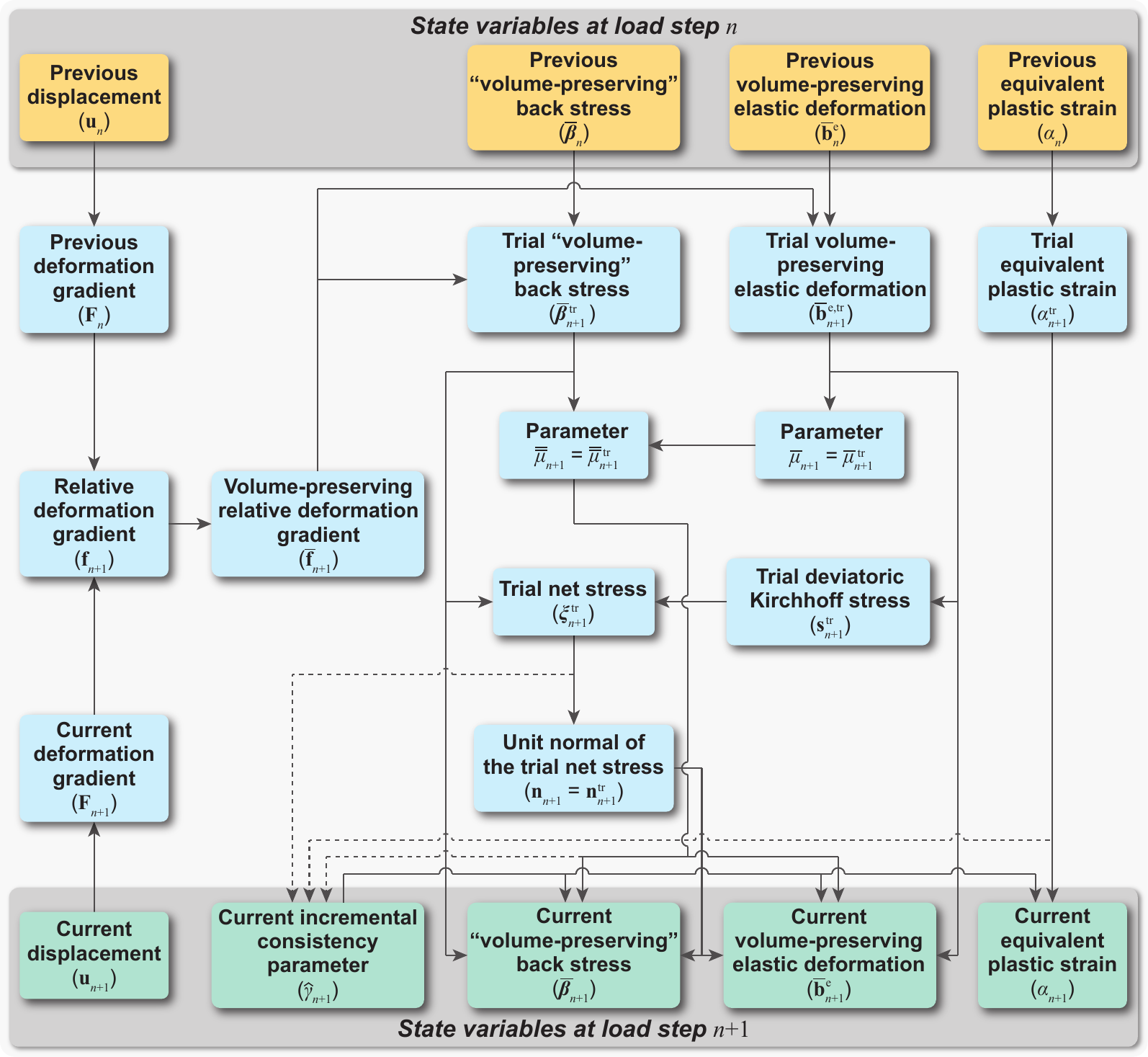}
    \caption{Relationships between the primary state variables at load steps $n$ and $n+1$. The dashed arrows signify that $\widehat{\gamma}_{n+1}$ generally does not admit to an explicit expression and needs to be numerically solved from \eqref{Function of gamma} with \eqref{Nonlinear Solution of gamma}.}
    \label{Fig: Independent Variables}
\end{figure}

\subsection{Global governing equations for finite strain elastoplasticity}

We remark again that the return mapping scheme in Section \ref{Return mapping scheme} pertains to the elastoplastic response of a single material point ($\mathbf{X} \in \Omega_0$). To predict the elastoplastic response of the entire structure defined on $\Omega_0$, we derive the global governing equations as follows.

\subsubsection{Stress measures and algorithmic tangent moduli}

Substituting \eqref{Deviatoric Kirchhoff Stress} into \eqref{Kirchhoff Stress} reformulates the Kirchhoff stress tensor ($\boldsymbol{\tau}_{n+1}$) at load step $n+1$ as
\begin{equation*}
    \boldsymbol{\tau}_{n+1} = \boldsymbol{\tau}_{n+1}^\texttt{vol}
    + \mathbf{s}_{n+1}^\texttt{tr}
    - 2 \overline{\overline{\mu}}_{n+1}^\texttt{tr} \widehat{\gamma}_{n+1} \mathbf{n}_{n+1}.
\end{equation*}
Accordingly, we compute the first Piola--Kirchhoff stress tensor ($\mathbf{P}_{n+1}$) as
\begin{equation} \label{First PK Stress}
    \mathbf{P}_{n+1} = \boldsymbol{\tau}_{n+1} \mathbf{F}_{n+1}^{-\top} = \left( \boldsymbol{\tau}_{n+1}^\texttt{vol}
    + \mathbf{s}_{n+1}^\texttt{tr}
    - 2 \overline{\overline{\mu}}_{n+1}^\texttt{tr} \widehat{\gamma}_{n+1} \mathbf{n}_{n+1} \right) \mathbf{F}_{n+1}^{-\top}
\end{equation}
and the second Piola--Kirchhoff stress tensor ($\mathbf{S}_{n+1}$) as
\begin{equation} \label{Second PK Stress}
    \mathbf{S}_{n+1} = \varphi^*(\boldsymbol{\tau}_{n+1})
    = \varphi^*(\boldsymbol{\tau}_{n+1}^\texttt{vol})
    + \varphi^*(\mathbf{s}_{n+1}^\texttt{tr})
    - 2 \overline{\overline{\mu}}_{n+1}^\texttt{tr} \widehat{\gamma}_{n+1} \varphi^*(\mathbf{n}_{n+1}).
\end{equation}

Based on the second Piola--Kirchhoff stress tensor ($\mathbf{S}_{n+1}$), we compute the second \textit{algorithmic} tangent moduli as
\begin{equation} \label{Second Algorithmic Tangent Moduli Initial}
    \mathbb{C}_{n+1} = \dfrac{\partial \mathbf{S}_{n+1}}{\partial \mathbf{E}_{n+1}} = \left\{ \begin{array}{cc}
        \mathbb{C}_{n+1}^\texttt{el}, & \text{if $f_{n+1}^\texttt{tr} \leq 0$ ($\widehat{\gamma}_{n+1} = 0$; hyperelastic)}; \\[12pt]
        \mathbb{C}_{n+1}^\texttt{ep}, & \text{if $f_{n+1}^\texttt{tr} > 0$ ($\widehat{\gamma}_{n+1} > 0$; elastoplastic)}.
    \end{array} \right.
\end{equation}
Here, the variable $\mathbb{C}_{n+1}^\texttt{el}$ is the second elastic moduli given as
\begin{equation} \label{Second Elastic Moduli}
    \begin{array}{ll}
        \mathbb{C}_{n+1}^\texttt{el} = \dfrac{\partial \varphi^*(\widehat{\boldsymbol{\tau}}_{n+1})}{\partial \mathbf{E}_{n+1}}
        = &\left( 2 \overline{\mu}_{n+1} - 2 J_{n+1} U'_{n+1} \right) \mathbb{I}_{\mathbf{C}_{n+1}^{-1}}
        + \left[ J_{n+1} (J_{n+1} U'_{n+1})'- \dfrac{2 \overline{\mu}_{n+1}}{3} \right] \mathbf{C}_{n+1}^{-1} \otimes \mathbf{C}_{n+1}^{-1} \\[12pt]
        &- \dfrac{2}{3} \left[ \varphi^* \left( \widehat{\mathbf{s}}_{n+1} \right) \otimes \mathbf{C}_{n+1}^{-1}
        + \mathbf{C}_{n+1}^{-1} \otimes \varphi^* \left( \widehat{\mathbf{s}}_{n+1} \right) \right]
    \end{array}
\end{equation}
where we define
\begin{equation*}
    \left( \mathbb{I}_{\mathbf{C}_{n+1}^{-1}} \right)_{ijkl} = \dfrac{1}{2} \left[ \left( C_{n+1}^{-1} \right)_{ik} \left( C_{n+1}^{-1} \right)_{jl} + \left( C_{n+1}^{-1} \right)_{il} \left( C_{n+1}^{-1} \right)_{jk} \right].
\end{equation*}
The variable $\mathbb{C}_{n+1}^\texttt{ep}$ is the second elastoplastic moduli (see \ref{Sec: Second Elastoplastic Moduli} for derivation) expressed as
\begin{equation} \label{Second Elastoplastic Moduli Final}
    \begin{array}{ll}
        \mathbb{C}_{n+1}^\texttt{ep} = &\left( 2 \overline{\mu}_{n+1}^\texttt{tr} - 2 c_1 \overline{\overline{\mu}}_{n+1}^\texttt{tr} - 2 J_{n+1} U'_{n+1} \right) \mathbb{I}_{\mathbf{C}_{n+1}^{-1}}
        
        + \left[ J_{n+1} (J_{n+1} U'_{n+1})' - \dfrac{2 \overline{\mu}_{n+1}^\texttt{tr}}{3} + \dfrac{2 c_1 \overline{\overline{\mu}}_{n+1}^\texttt{tr}}{3} \right] \mathbf{C}_{n+1}^{-1} \otimes \mathbf{C}_{n+1}^{-1} \\[12pt]
        
        &- \dfrac{2}{3} \left[ \varphi^*(\mathbf{s}_{n+1}^\texttt{tr}) \otimes \mathbf{C}_{n+1}^{-1} + \mathbf{C}_{n+1}^{-1} \otimes \varphi^*(\mathbf{s}_{n+1}^\texttt{tr}) \right]

        + \dfrac{2 c_1}{3} \left[ \varphi^*(\boldsymbol{\xi}_{n+1}^\texttt{tr}) \otimes \mathbf{C}_{n+1}^{-1} + \mathbf{C}_{n+1}^{-1} \otimes \varphi^*(\boldsymbol{\xi}_{n+1}^\texttt{tr}) \right] \\[12pt]

        &- c_3 \varphi^*(\mathbf{n}_{n+1}) \otimes \varphi^*(\mathbf{n}_{n+1})

        - \dfrac{c_4}{2} \left\{ \varphi^*(\mathbf{n}_{n+1}) \otimes \varphi^*\left[ \text{dev} (\mathbf{n}_{n+1}^2) \right] + \varphi^*\left[ \text{dev} (\mathbf{n}_{n+1}^2) \right] \otimes \varphi^*(\mathbf{n}_{n+1}) \right\}
    \end{array}
\end{equation}
where we define
\begin{equation} \label{Parameters in Elasticity}
    \left\{ \begin{array}{l}
        c_0 =  1 + \dfrac{h}{3\mu} + \dfrac{k'}{3 \overline{\overline{\mu}}_{n+1}^\texttt{tr}}, \quad
        c_1 = \dfrac{2 \overline{\overline{\mu}}_{n+1}^\texttt{tr} \widehat{\gamma}_{n+1}}{\lVert \overline{\boldsymbol{\xi}}_{n+1}^\texttt{tr} \rVert}, \quad
        c_2 = \dfrac{1}{c_0} - c_1, \quad \\[12pt]
        
        c_3 = 2 c_2 \overline{\overline{\mu}}_{n+1}^\texttt{tr}
        - \left[ \dfrac{1}{c_0} \left( 1 + \dfrac{h}{3 \mu} \right) - 1 \right] \dfrac{4}{3} \widehat{\gamma}_{n+1} \lVert \overline{\boldsymbol{\xi}}_{n+1}^\texttt{tr} \rVert, \quad
        c_4 = 2 c_2 \lVert \overline{\boldsymbol{\xi}}_{n+1}^\texttt{tr} \rVert.
    \end{array} \right.
\end{equation}

By examining the second elastic moduli ($\mathbb{C}_{n+1}^\texttt{el}$) in \eqref{Second Elastic Moduli} and elastoplastic moduli ($\mathbb{C}_{n+1}^\texttt{ep}$) in \eqref{Second Elastoplastic Moduli Final}, we reformulate the second algorithmic tangent moduli ($\mathbb{C}_{n+1}$) in \eqref{Second Algorithmic Tangent Moduli Initial} into a compact form as
\begin{equation} \label{Second Algorithmic Tangent Moduli Final}
    \begin{array}{ll}
        \mathbb{C}_{n+1} = &\left( 2 \overline{\mu}_{n+1}^\texttt{tr} - 2 \theta c_1 \overline{\overline{\mu}}_{n+1}^\texttt{tr} - 2 J_{n+1} U'_{n+1} \right) \mathbb{I}_{\mathbf{C}_{n+1}^{-1}}
        
        + \left[ J_{n+1} (J_{n+1} U'_{n+1})' - \dfrac{2 \overline{\mu}_{n+1}^\texttt{tr}}{3} + \dfrac{2 \theta c_1 \overline{\overline{\mu}}_{n+1}^\texttt{tr}}{3} \right] \mathbf{C}_{n+1}^{-1} \otimes \mathbf{C}_{n+1}^{-1} \\[12pt]
        
        &- \dfrac{2}{3} \left[ \varphi^*(\mathbf{s}_{n+1}^\texttt{tr}) \otimes \mathbf{C}_{n+1}^{-1} + \mathbf{C}_{n+1}^{-1} \otimes \varphi^*(\mathbf{s}_{n+1}^\texttt{tr}) \right]

        + \dfrac{2 \theta c_1}{3} \left[ \varphi^*(\boldsymbol{\xi}_{n+1}^\texttt{tr}) \otimes \mathbf{C}_{n+1}^{-1} + \mathbf{C}_{n+1}^{-1} \otimes \varphi^*(\boldsymbol{\xi}_{n+1}^\texttt{tr}) \right] \\[12pt]

        &- \theta c_3 \varphi^*(\mathbf{n}_{n+1}) \otimes \varphi^*(\mathbf{n}_{n+1})

        - \dfrac{\theta c_4}{2} \left\{ \varphi^*(\mathbf{n}_{n+1}) \otimes \varphi^*\left[ \text{dev} (\mathbf{n}_{n+1}^2) \right] + \varphi^*\left[ \text{dev} (\mathbf{n}_{n+1}^2) \right] \otimes \varphi^*(\mathbf{n}_{n+1}) \right\}
    \end{array}
\end{equation}
with
\begin{equation} \label{Theta Parameter}
    \theta = \dfrac{1}{2} \dfrac{f_{n+1}^\texttt{tr} + |f_{n+1}^\texttt{tr}|}{f_{n+1}^\texttt{tr}} = \left\{ \begin{array}{ll}
        0, & \text{if $f_{n+1}^\texttt{tr} \leq 0$ (hyperelastic)}; \\[8pt]
        1, & \text{if $f_{n+1}^\texttt{tr} > 0$ (elastoplastic)}.
    \end{array} \right.
\end{equation}

\subsubsection{Governing equations}

Based on the stress measures and tangent moduli defined above, we present the global governing equations in an incremental form as
\begin{equation} \label{Global Governing Equation}
    \left\{ \begin{array}{l}
        \displaystyle \int_{\Omega_0} \nabla \mathbf{v} : \widehat{\mathbb{C}}_{n+1}^{(k)} : \nabla \delta \mathbf{u}_{n+1}^{(k)}\ \text{d} \mathbf{X}
        = \int_{\Omega_0} \overline{\mathbf{q}}_{n+1} \cdot \mathbf{v}\ \text{d} \mathbf{X} 
        + \int_{\partial \Omega_0^\mathcal{N}} \overline{\mathbf{t}}_{n+1} \cdot \mathbf{v}\ \text{d} \mathbf{X}
        - \int_{\Omega_0} \mathbf{P}_{n+1}^{(k)} : \nabla \mathbf{v}\ \text{d} \mathbf{X}, \\[12pt]

        \delta \mathbf{u}_{n+1}^{(k)} = \overline{\mathbf{u}}_{n+1} - \mathbf{u}_{n+1}^{(k)} \quad \text{for} \quad \mathbf{X} \in \partial \Omega_0^\mathcal{D}, \\[12pt]
        
        \mathbf{u}_{n+1}^{(k+1)} = \mathbf{u}_{n+1}^{(k)} + l_{n+1}^{(k)} \delta \mathbf{u}_{n+1}^{(k)} \quad \text{for} \quad \mathbf{X} \in \overline{\Omega}_0.
    \end{array} \right.
\end{equation}
Here, the variable $\delta \mathbf{u}_{n+1}$ is the incremental displacement field, and $\mathbf{v}$ is the test displacement field. The variables $\overline{\mathbf{q}}_{n+1}$, $\overline{\mathbf{t}}_{n+1}$, and $\overline{\mathbf{u}}_{n+1}$ are the prescribed body force, traction, and displacement, respectively. The symbol $l_{n+1} \in (0, 1]$ represents a line search parameter such that the updated displacement field reduces the absolute residual (right-hand side of $\eqref{Global Governing Equation}_1$). The superscript $(k)$ of a variable represents the value evaluated at Newton iteration $k$. The symbol $\overline{\Omega}_0$ is the closed bounded domain in its undeformed configuration with Neumann ($\partial \Omega_0^\mathcal{N}$) and Dirichlet ($\partial \Omega_0^\mathcal{D}$) boundaries. The variable $\widehat{\mathbb{C}}_{n+1}$ is the first algorithmic tangent moduli, which can be computed from $\mathbf{S}_{n+1}$ in \eqref{Second PK Stress} and $\mathbb{C}_{n+1}$ in \eqref{Second Algorithmic Tangent Moduli Final} and expressed in indicial notation as
\begin{equation*}
    ( \widehat{\mathbb{C}}_{n+1} )_{rjsl} = \left( F_{n+1} \right)_{ri} \left( F_{n+1} \right)_{sk} \left( \mathbb{C}_{n+1} \right)_{ijkl} + \delta_{rs} \left( S_{n+1} \right)_{jl}.
\end{equation*}

After some tensor algebra, we write out the integrand of the left-hand side in $\eqref{Global Governing Equation}_1$ as
\begin{equation} \label{Integrand of Left-hand Side}
    \begin{array}{ll}
        \nabla \mathbf{v} : \widehat{\mathbb{C}} : \nabla \delta \mathbf{u} = &\left( 2 \overline{\mu}^{\texttt{tr}} - 2 \theta c_1 \overline{\overline{\mu}}^{\texttt{tr}} - 2 J U' \right) \left( \mathcal{U} \right)^\texttt{sym} : \mathcal{V}
        
        + \left[ J \left( J U' \right)' - \dfrac{2 \overline{\mu}^{\texttt{tr}}}{3} + \dfrac{2 \theta c_1 \overline{\overline{\mu}}^{\texttt{tr}}}{3} \right] \text{tr} \left( \mathcal{U} \right) \text{tr} \left( \mathcal{V} \right) \\[12pt]
        
        &- \dfrac{2}{3} \left[ \text{tr} \left( \mathcal{U} \right) \mathbf{s}^{\texttt{tr}} : \mathcal{V} + \text{tr} \left( \mathcal{V} \right) \mathbf{s}^{\texttt{tr}} : \mathcal{U} \right]

        + \dfrac{2 \theta c_1}{3} \left[ \text{tr} \left( \mathcal{U} \right) \boldsymbol{\xi}^{\texttt{tr}} : \mathcal{V} + \text{tr} \left( \mathcal{V} \right) \boldsymbol{\xi}^{\texttt{tr}} : \mathcal{U} \right] \\[12pt]

        &- \theta c_3 \left( \mathbf{n} : \mathcal{V} \right) \left( \mathbf{n} : \mathcal{U} \right)

        - \dfrac{\theta c_4}{2} \left\{ \left( \mathbf{n} : \mathcal{V} \right) \left[ \text{dev} \left( \mathbf{n}^2 \right) : \mathcal{U} \right] + \left[ \text{dev} \left( \mathbf{n}^2 \right) : \mathcal{V} \right] \left( \mathbf{n} : \mathcal{U} \right) \right\}

        + \left( \nabla \delta \mathbf{u}\ \mathbf{S} \right) : \nabla \mathbf{v}
    \end{array}
\end{equation}
by defining
\begin{equation*}
    \mathcal{U} = \nabla \delta \mathbf{u}\ \mathbf{F}^{-1}
    \quad \text{and} \quad
    \mathcal{V} = \nabla \mathbf{v}\ \mathbf{F}^{-1}
\end{equation*}
where we temporarily drop the subscript $n+1$ and superscript $(k)$ for conciseness. We remark that \eqref{Global Governing Equation} are explicit, linear equations of $\delta \mathbf{u}_{n+1}^{(k)}$, where all coefficients can be predetermined by $\overline{\mathbf{b}}_n^\texttt{e}$, $\overline{\boldsymbol{\beta}}_n$, $\alpha_n$, $\mathbf{u}_n$, and $\mathbf{u}_{n+1}^{(k)}$. Consequently, we can iteratively solve $\delta \mathbf{u}_{n+1}^{(k)}$ from \eqref{Global Governing Equation} and update $\mathbf{u}_{n+1}^{(k+1)}$ until convergence.

For the reader's convenience, we outline the solution scheme in Algorithm \ref{Algorithm of Overall Solution Scheme}. To enhance the convergence of the highly non-convex elastoplasticity problem, we incorporate an adaptive line search scheme (Algorithm \ref{Algorithm of Line Search}), which iteratively reduces the line search parameter ($l_{n+1}^{(k)}$) until the residual starts to decrease. Furthermore, we employ an adaptive loading scheme (Algorithm \ref{Algorithm of Adaptive Loading}) that dynamically adjusts the applied boundary values ($\overline{\mathbf{q}}$, $\overline{\mathbf{t}}$, and $\overline{\mathbf{u}}$) in cases where Newton iterations diverge occasionally. With the combination of Algorithms \ref{Algorithm of Return Mapping Scheme}–\ref{Algorithm of Adaptive Loading}, we successfully obtain converged FEA solutions for all examples introduced in Section \ref{Sec: Sample Examples}. The convergence, precision, and computational time of FEA for a sample design case are demonstrated in \ref{Sec: FEA Convergence}.

\begin{algorithm}[!htbp]
\caption{Overall solution scheme for finite strain elastoplasticity}
\label{Algorithm of Overall Solution Scheme}

\textbf{Inputs}: Initial state variables, $\overline{\mathbf{b}}_0^\texttt{e} \leftarrow \mathbf{I}$, $\overline{\boldsymbol{\beta}}_0 \leftarrow \mathbf{0}$, $\alpha_0 \leftarrow 0$, and $\mathbf{u}_0 \leftarrow \mathbf{0}$;
Prescribed body force ($\overline{\mathbf{q}}$), traction ($\overline{\mathbf{t}}$), and displacement ($\overline{\mathbf{u}}$);
Prescribed time steps, $\{t_1, t_2, \ldots, t_{M-1}, t_M\}$;
Maximum Newton iterations, $N^\texttt{iter}$;
Absolute and relative tolerances of Newton iterations, $\varepsilon^\texttt{abs}$ and $\varepsilon^\texttt{rel}$, respectively;  

Initialize the converged time step, $t^\texttt{conv} \leftarrow 0$;

Initialize the time step count, $m \leftarrow 1$, load step count, $n \leftarrow 0$, and try count, $r \leftarrow 0$;

\While{$m \leq M$}{
    Initialize the current time step, $t^\texttt{now} \leftarrow t_m$;

    \While{$t^\texttt{conv} < t_m$}{
        Update the boundary values, $\overline{\mathbf{q}}_{n+1} \leftarrow  t^\texttt{now}\ \overline{\mathbf{q}}$, $\overline{\mathbf{t}}_{n+1} \leftarrow  t^\texttt{now}\ \overline{\mathbf{t}}$, and $\overline{\mathbf{u}}_{n+1} \leftarrow t^\texttt{now}\ \overline{\mathbf{u}}$;

        Compute the previous deformation gradient, $\mathbf{F}_n \leftarrow \mathbf{I} + \nabla \mathbf{u}_n$;
    
        Initialize the displacement field at load step $n+1$ and Newton iteration 0, $\mathbf{u}_{n+1}^{(0)} \leftarrow \mathbf{u}_n$;
            
        Compute the absolute residual (right-hand side of $\eqref{Global Governing Equation}_1$), $a_{n+1}^{(0)}$;

        Initialize the convergence flag, $\texttt{convergence} \leftarrow \texttt{false}$;

        Initialize the Newton iteration count, $k \leftarrow 0$;
        
        \While{$k < N^\texttt{iter}$}{ 
            Compute the current deformation gradient, $\mathbf{F}_{n+1}^{(k)} \leftarrow \mathbf{I} + \nabla \mathbf{u}_{n+1}^{(k)}$;
    
            Compute $\alpha_{n+1}^{\texttt{tr},(k)}$, $\mathbf{s}_{n+1}^{\texttt{tr},(k)}$, $\boldsymbol{\xi}_{n+1}^{\texttt{tr},(k)}$, $\mathbf{n}_{n+1}^{\texttt{tr},(k)}$, $\overline{\mu}_{n+1}^{\texttt{tr},(k)}$, $\overline{\overline{\mu}}_{n+1}^{\texttt{tr},(k)}$, $f_{n+1}^{\texttt{tr},(k)}$, and $\widehat{\gamma}_{n+1}^{(k)}$ by providing Algorithm \ref{Algorithm of Return Mapping Scheme} with $\overline{\mathbf{b}}_n^\texttt{e}$, $\overline{\boldsymbol{\beta}}_n$,  $\alpha_n$, $\mathbf{F}_n$, and $\mathbf{F}_{n+1}^{(k)}$;
            
            Compute $\mathbf{P}_{n+1}^{(k)}$ in \eqref{First PK Stress} and $\mathbf{S}_{n+1}^{(k)}$ in \eqref{Second PK Stress} based on $\mathbf{F}_{n+1}^{(k)}$, $\mathbf{s}_{n+1}^{\texttt{tr},(k)}$, $\mathbf{n}_{n+1}^{\texttt{tr},(k)}$, $\overline{\overline{\mu}}_{n+1}^{\texttt{tr},(k)}$, and $\widehat{\gamma}_{n+1}^{(k)}$;
            
            Compute $c_1$, $c_3$, and $c_4$ in \eqref{Parameters in Elasticity} based on $\alpha_{n+1}^{\texttt{tr},(k)}$, $\boldsymbol{\xi}_{n+1}^{\texttt{tr},(k)}$, $\overline{\overline{\mu}}_{n+1}^{\texttt{tr},(k)}$, and $\widehat{\gamma}_{n+1}^{(k)}$;
    
            Compute $\theta$ in \eqref{Theta Parameter} based on $f_{n+1}^{(k)}$;
    
            Compute $\nabla \mathbf{v} : \widehat{\mathbb{C}}_{n+1}^{(k)} : \nabla \delta \mathbf{u}_{n+1}^{(k)}$ in \eqref{Integrand of Left-hand Side} (as a function of $\delta \mathbf{u}_{n+1}^{(k)}$ and $\mathbf{v}$) based on $\mathbf{F}_{n+1}^{(k)}$, $\mathbf{s}_{n+1}^{\texttt{tr},(k)}$,  
            $\boldsymbol{\xi}_{n+1}^{\texttt{tr},(k)}$,
            $\mathbf{n}_{n+1}^{\texttt{tr},(k)}$,
            $\overline{\mu}_{n+1}^{\texttt{tr},(k)}$, $\overline{\overline{\mu}}_{n+1}^{\texttt{tr},(k)}$, $c_1$, $c_3$, $c_4$, and $\theta$;
    
            Solve $\delta \mathbf{u}_{n+1}^{(k)}$ from $\eqref{Global Governing Equation}_{1,2}$ based on $\nabla \mathbf{v} : \widehat{\mathbb{C}}_{n+1}^{(k)} : \nabla \delta \mathbf{u}_{n+1}^{(k)}$, $\mathbf{P}_{n+1}^{(k)}$, $\overline{\mathbf{q}}_{n+1}$, $\overline{\mathbf{t}}_{n+1}$, and $\overline{\mathbf{u}}_{n+1}$;
            
            Compute $\mathbf{u}_{n+1}^{(k+1)}$ and $a_{n+1}^{(k+1)}$ by providing $\mathbf{u}_{n+1}^{(k)}$, $a_{n+1}^{(k)}$, and $\delta \mathbf{u}_{n+1}^{(k)}$ to the line search scheme in Algorithm \ref{Algorithm of Line Search};
    
            Compute the relative residual, $r_{n+1}^{(k+1)} \leftarrow a_{n+1}^{(k+1)} / a_{n+1}^{(1)}$;

            Update the Newton iteration count, $k \leftarrow k+1$;

            \If{$a_{n+1}^{(k)} \leq \varepsilon^\texttt{abs}$ or $r_{n+1}^{(k)} \leq \varepsilon^\texttt{rel}$}{
                Update the convergence flag, $\texttt{convergence} \leftarrow \texttt{true}$;

                Break the Newton iterations due to convergence;
            }
        }

        Update the current state variables, $\overline{\mathbf{b}}_{n+1}^\texttt{e}$, $\overline{\boldsymbol{\beta}}_{n+1}$, $\alpha_{n+1}$, $\widehat{\gamma}_{n+1}$, and $\mathbf{u}_{n+1}$, by providing $\overline{\mathbf{b}}_n^\texttt{e}$, $\overline{\boldsymbol{\beta}}_n$, $\alpha_n$, $\mathbf{F}_n$, $\mathbf{u}_{n+1}^{(k)}$, $t^\texttt{conv}$, $t^\texttt{now}$, $t_m$, $n$, and $r$ to the adaptive loading scheme in Algorithm \ref{Algorithm of Adaptive Loading};
    }

    Update the time step count, $m \leftarrow m+1$;
}

\end{algorithm}

\begin{algorithm}[!htbp]
\caption{Adaptive line search scheme}
\label{Algorithm of Line Search}

\textbf{Inputs}: Displacement and absolute residual from previous Newton iteration, $\mathbf{u}_{n+1}^{(k)}$ and $a_{n+1}^{(k)}$, respectively;
Incremental displacement, $\delta \mathbf{u}_{n+1}^{(k)}$;
Maximum iterations and tolerance for line search, $N^\texttt{search}$ and $\varepsilon^\texttt{search} \in (0, 1]$, respectively;

Initialize the line search parameter, $l_{n+1}^{(k)} \leftarrow 1$;

Initialize the iteration count for linear search, $s \leftarrow 1$;

\For{$s \leq N^\texttt{search}$}{
    Update $\mathbf{u}_{n+1}^{(k+1)}$ in $\eqref{Global Governing Equation}_3$ based on $\mathbf{u}_{n+1}^{(k)}$, $\delta \mathbf{u}_{n+1}^{(k)}$, and $l_{n+1}^{(k)}$;

    Compute the absolute residual, $a_{n+1}^{(k+1)}$;

    \eIf{$a_{n+1}^{(k+1)} < \varepsilon^\texttt{search}\ a_{n+1}^{(k)}$}{
        Break the line search due to convergence;
    }{
        Shrink the line search parameter, $l_{n+1}^{(k)} \leftarrow l_{n+1}^{(k)}/2$;

        Update the line search count, $s \leftarrow s+1$;
    }
}
\end{algorithm}

\begin{algorithm}[!htbp]
\caption{Adaptive loading scheme}
\label{Algorithm of Adaptive Loading}

\textbf{Inputs}: Convergence flag, \texttt{convergence};
Internal variables from the previous load step, $\overline{\mathbf{b}}_n^\texttt{e}$, $\overline{\boldsymbol{\beta}}_n$, and $\alpha_n$;
Deformation gradient from the previous load step, $\mathbf{F}_n$;
Displacement in the current Newton iteration, $\mathbf{u}_{n+1}^{(k)}$;
Maximum tries of the adaptive loading, $N^\texttt{try}$;
Converged, current, and target time steps, $t^\texttt{conv}$, $t^\texttt{now}$, and $t_m$, respectively;
Load step count, $n$;
Try count, $r$;

\eIf{\texttt{convergence} is \texttt{false} and $r \leq N^\texttt{try}$}{
    Shrink the current time step, $t^\texttt{now} \leftarrow (t^\texttt{conv} + t^\texttt{now}) / 2$;

    Update the try count, $r \leftarrow r + 1$; 
}{
    Reset the try count, $r \leftarrow 0$; 

    Update the converged time step, $t^\texttt{conv} \leftarrow t^\texttt{now}$;

    Update the current time step, $t^\texttt{now} \leftarrow t_m$;

    Update the current incremental consistency parameter, $\widehat{\gamma}_{n+1} \leftarrow \widehat{\gamma}_{n+1}^{(k)}$, and displacement field, $\mathbf{u}_{n+1} \leftarrow \mathbf{u}_{n+1}^{(k)}$;

    Update the current internal variables, $\overline{\mathbf{b}}_{n+1}^\texttt{e}$, $\overline{\boldsymbol{\beta}}_{n+1}$, and $\alpha_{n+1}$, by providing Algorithm \ref{Algorithm of Return Mapping Scheme} with $\overline{\mathbf{b}}_n^\texttt{e}$, $\overline{\boldsymbol{\beta}}_n$,  $\alpha_n$, $\mathbf{F}_n$, and $\mathbf{F}_{n+1} \leftarrow \mathbf{I} + \nabla \mathbf{u}_{n+1}$;

    Update the load step count, $n \leftarrow n+1$;
}
\end{algorithm}

\section{Topology optimization framework for finite strain elastoplasticity}
\label{Sec: Topology Optimization Framework}

Built upon the finite strain elastoplasticity theory in Section \ref{Sec: Finite Strain Elastoplasticity}, we now present the topology optimization framework for optimizing elastoplastic structures undergoing large deformations. This framework comprises design space parameterization, material property interpolation, topology optimization formulation, and sensitivity analysis and verification. This section details the first three components while the sensitivity analysis and verification are provided in \ref{Sec: Sensitivity Analysis and Verification}.

\subsection{Design space parameterization}

Before proceeding with topology optimization, one prerequisite is to identify suitable design variables that parameterize the large design space of structural geometries and material phases. Following the blueprint of topology optimization for infinitesimal strain elastoplasticity \citep{jia_multimaterial_2025}, we start by defining one density variable, $\rho(\mathbf{X}): \Omega_0 \mapsto [0, 1]$, and $N^\xi \geq 1$ material variables, $\xi_1(\mathbf{X}),\ \xi_2(\mathbf{X}),\ \ldots,\ \xi_{N^\xi}(\mathbf{X}): \Omega_0 \mapsto [0, 1]$. To reduce the mesh dependence and avoid the checkerboard patterns of designs, we apply a linear filter \citep{bourdin_filters_2001} on all design variables and derive their filtered versions as
\begin{equation*}
    \widetilde{\zeta}(\mathbf{X}) = \dfrac{\displaystyle \int_{\Omega_0} w_\zeta(\mathbf{X}, \mathbf{X}') \zeta(\mathbf{X}') \text{d} \mathbf{X}'}{\displaystyle \int_{\Omega_0} w_\zeta(\mathbf{X}, \mathbf{X}') \text{d} \mathbf{X}'}
    \quad \text{for} \quad
    \zeta \in \{\rho, \xi_1, \xi_2, \ldots, \xi_{N^\xi}\}.
\end{equation*}
Here, $w_\zeta(\mathbf{X}, \mathbf{X}') = \max \{ 0, R_\zeta-\lVert \mathbf{X}-\mathbf{X}' \rVert_2 \}$ is a weighting factor, and $R_\zeta \geq 0$ is the filter radius. 

To promote pure solid--void designs with discrete material interfaces, we further apply a Heaviside projection \citep{bendsoe_topology_2003} on the filtered design variables, $\widetilde{\rho}(\mathbf{X}),\ \widetilde{\xi}_1(\mathbf{X}),\ \widetilde{\xi}_2(\mathbf{X}),\ \ldots,\ \widetilde{\xi}_{N^\xi}(\mathbf{X}): \Omega_0 \mapsto [0, 1]$, and derive their projected versions as
\begin{equation} \label{Heaviside Projection}
    \widehat{\zeta}(\mathbf{X}) = \dfrac{\tanh(\beta_\zeta \theta_\zeta) + \tanh(\beta_\zeta (\widetilde{\zeta}(\mathbf{X}) - \theta_\zeta))}{\tanh(\beta_\zeta \theta_\zeta) + \tanh(\beta_\zeta (1 - \theta_\zeta))}
    \quad \text{for} \quad
     \zeta \in \{\rho, \xi_1, \xi_2, \ldots, \xi_{N^\xi}\},
\end{equation}
where $\beta_\zeta$ is a sharpness parameter, and $\theta_\zeta$ is the projection threshold. Based on the projected density variable, $\widehat{\rho}(\mathbf{X}): \Omega_0 \mapsto [0, 1]$, we define the physical density variable as $\overline{\rho}(\mathbf{X}) \equiv \widehat{\rho}(\mathbf{X}): \Omega_0 \mapsto [0, 1]$. Here, $\overline{\rho}(\mathbf{X}) = 1$ represents that material point $\mathbf{X}$ is occupied by the solid material, and $\overline{\rho}(\mathbf{X}) = 0$ signifies the void.

As for the projected material variables, $\widehat{\xi}_1(\mathbf{X}),\ \widehat{\xi}_2(\mathbf{X}),\ \ldots,\ \widehat{\xi}_{N^\xi}(\mathbf{X}): \Omega_0 \mapsto [0, 1]$, we further apply a modified version of the hypercube-to-simplex-projection (HSP) \citep{zhou_multi-component_2018} as
\begin{equation*}
    \left\{ \begin{array}{l}
        \overline{\xi}_n(\mathbf{X}) = \displaystyle{ \sum_{k=1}^{2^{N^\xi}} b_{nk}
        \left\{ (-1)^{N^\xi + \sum_{l=1}^{N^\xi}c_{kl}}
        \prod_{m=1}^{N^\xi} (\widehat{\xi}_m(\mathbf{X}) + c_{km} - 1) \right\} }
        \quad \text{for} \quad n=1,2,\ldots,N^\xi, \\[15pt]
        \overline{\xi}_{N^\texttt{mat}}(\mathbf{X}) = \displaystyle{ 1 - \sum_{n=1}^{N^\xi} } \overline{\xi}_n(\mathbf{X}),
    \end{array} \right.
\end{equation*}
to ensure that the summation of the physical material variables, $\overline{\xi}_1(\mathbf{X}),\ \overline{\xi}_2(\mathbf{X}),\ \ldots,\ \overline{\xi}_{N^\texttt{mat}}(\mathbf{X}): \Omega_0 \mapsto [0, 1]$, is equal to one ($\sum_{n=1}^{N^\texttt{mat}}\overline{\xi}_n(\mathbf{X}) = 1$). Here, the parameter $N^\texttt{mat} = N^\xi + 1$ represents the number of candidate materials. The parameter $c_{kl} \in \{0, 1\}$ is the $l$th coordinate component of the $k$th vertex of a unit hypercube in the $N^\texttt{mat}-1$ dimensional space. The parameter $b_{nk}$ is computed as
\begin{equation*}
    b_{nk} = \left\{ \begin{array}{cc}
        \dfrac{c_{nk}}{\sum_{l} c_{nl}}, & \text{if $\sum_{l} c_{nl} \geq 1$}; \\[8pt]
        0, & \text{otherwise}.
    \end{array} \right.
\end{equation*}

Note that for a material point with $\overline{\rho}(\mathbf{X}) = 1$, the expression $\overline{\xi}_n(\mathbf{X}) = 1$ for $n=1,2,\ldots,N^\texttt{mat}$ signifies that material $n$ is used. Consequently, we can use the physical density ($\overline{\rho}(\mathbf{X})$) and material ($\overline{\xi}_1(\mathbf{X}),\ \overline{\xi}_2(\mathbf{X}),\ \ldots,\ \overline{\xi}_{N^\texttt{mat}}(\mathbf{X})$) variables to describe the structural geometries and material phases in the design domain ($\Omega_0$), respectively.

\subsection{Material property interpolation}

After design space parameterization, one immediate next step is to interpolate the elasticity- and plasticity-related material constants/functions for various values of physical design variables, $\overline{\rho}(\mathbf{X}) \in [0, 1]$ and $\overline{\xi}_n(\mathbf{X}) \in [0, 1]$ for $n=1,2,\ldots,N^\texttt{mat}$. Following \citet{jia_multimaterial_2025}, we interpolate the initial bulk modulus ($\kappa$), initial shear modulus ($\mu$), isotropic hardening function ($k$), and kinematic hardening modulus ($h$) as
\begin{equation} \label{Material Interpolation Functions}
    \overline{\chi}(\mathbf{X}) = \left[\varepsilon_\rho + (1-\varepsilon_\rho) \overline{\rho}^{p_\chi}(\mathbf{X}) \right] 
    \sum_{n=1}^{N^\texttt{mat}} \overline{\xi}_n^{p_\xi}(\mathbf{X}) \chi_n
    \quad \text{for} \quad
    \chi \in \{ \kappa, \mu, k, h \}.
\end{equation}
Here, the variables $\overline{\kappa}(\mathbf{X})$, $\overline{\mu}(\mathbf{X})$, $\overline{k}(\mathbf{X})$, and $\overline{h}(\mathbf{X})$ are the interpolated material constants/functions. The material constant $\chi_n \in \{\kappa_n, \mu_n, k_n, h_n\}$ for $n=1,2,\ldots,N^\texttt{mat}$ represents the property/function of candidate material $n$. The parameters $p_\kappa = p_\mu$, $p_k$, and $p_h$ penalize the intermediate values of the physical density variable, $\overline{\rho}(\mathbf{X}) \in (0, 1)$, and $p_\xi$ penalizes the intermediate values of the physical material variables, $\overline{\xi}_n(\mathbf{X}) \in (0, 1)$ for $n=1,2,\ldots,N^\texttt{mat}$. The parameter $\varepsilon_\rho$ is a small positive number to prevent the singularity of the stiffness of void materials.

In addition to the interpolation rules in \eqref{Material Interpolation Functions} for material constants/functions, it is essential to interpolate the constitutive laws across the range of the physical density variable, $\overline{\rho} \in [0, 1]$. This consideration arises because many topology optimization problems involving finite strain analysis \citep{li_programming_2023} encounter divergence in FEA solutions. Such issues are typically caused by the excessive distortion of elements filled with void materials when subjected to large deformations. To address this challenge, a widely adopted and effective strategy for hyperelastic designs \citep{wang_interpolation_2014} assumes that void materials exhibit linear elastic behavior while solid materials retain their hyperelastic characteristics.

In the context of finite strain elastoplasticity, we propose to interpolate the first Piola--Kirchhoff stress tensor as
\begin{equation} \label{Interpolated First PK Stress}
    \widecheck{\mathbf{P}}(\mathbf{F}, \widecheck{\mathbf{F}}; \phi)
    = \phi \mathbf{P}(\widecheck{\mathbf{F}}) - \phi \boldsymbol{\sigma}^\texttt{l}(\boldsymbol{\epsilon}(\widecheck{\mathbf{F}})) + \boldsymbol{\sigma}^\texttt{l}(\boldsymbol{\epsilon}(\mathbf{F})).
\end{equation}
The variable $\widecheck{\mathbf{F}}$ is the interpolated total deformation gradient defined as
\begin{equation*}
    \widecheck{\mathbf{F}} = \mathbf{F} (\phi \mathbf{u}) = \mathbf{I} + \phi \nabla \mathbf{u} = (1-\phi) \mathbf{I} + \phi \mathbf{F}(\mathbf{u})
\end{equation*}
with
\begin{equation*}
    \phi(\mathbf{X}) = \dfrac{\tanh(\beta_\phi \theta_\phi) + \tanh(\beta_\phi (\overline{\rho}^{p_\kappa} (\mathbf{X}) - \theta_\phi))}{\tanh(\beta_\phi \theta_\phi) + \tanh(\beta_\phi (1 - \theta_\phi))}
\end{equation*}
where we assume $\phi(\mathbf{X})$ is element-wise constant and therefore $\nabla \phi(\mathbf{X}) = 0$ at the integration points. The parameters $\beta_\phi = 500$ and $\theta_\phi = 0.1$ are similar to $\beta_\zeta$ and $\theta_\zeta$ in \eqref{Heaviside Projection}, respectively. The variable $\boldsymbol{\sigma}^\texttt{l}$ is the infinitesimal stress tensor (associated with $W$ in \eqref{Strain Energy Density Function}) expressed as
\begin{equation*}
    \boldsymbol{\sigma}^\texttt{l}(\boldsymbol{\epsilon}) = \overline{\kappa}\ \text{tr}(\boldsymbol{\epsilon}) \mathbf{I} + 2 \overline{\mu}\ \text{dev}(\boldsymbol{\epsilon}),
\end{equation*}
and $\boldsymbol{\epsilon}$ is the infinitesimal strain tensor defined as
\begin{equation*}
    \boldsymbol{\epsilon}(\mathbf{F}) = \dfrac{1}{2} \left( \mathbf{F} + \mathbf{F}^\top \right) - \mathbf{I}.
\end{equation*}

\begin{remark}
The physical meaning of $\widecheck{\mathbf{P}}$ in \eqref{Interpolated First PK Stress} merits explicit clarification. For solid materials where $\overline{\rho}(\mathbf{X}) = 1$ and therefore $\phi(\mathbf{X}) = 1$, the variable $\widecheck{\mathbf{P}}$ reduces to $\mathbf{P}$ in \eqref{First PK Stress} for elastoplastic materials exactly. For void materials where $\overline{\rho}(\mathbf{X}) = 0$ and therefore $\phi(\mathbf{X}) = 0$, the variable $\widecheck{\mathbf{P}}$ reduces to $\boldsymbol{\sigma}^\texttt{l}$ for linear elastic materials exactly without any large deformations and yield behaviors (as in the air). This linear elasticity assumption for void materials brings negligible errors (as verified in \ref{Sec: Comparison of Meshes}) due to the tiny stiffness of void materials compared to solid ones.
\end{remark}

Corresponding to the interpolated first Piola--Kirchhoff stress ($\widecheck{\mathbf{P}}$) in \eqref{Interpolated First PK Stress}, we compute the interpolated first elastoplastic moduli as
\begin{equation*}
    \widecheck{\mathbb{C}} = \dfrac{\partial \widecheck{\mathbf{P}}}{\partial \mathbf{F}}
    = \phi^2 \widehat{\mathbb{C}}(\widecheck{\mathbf{F}}) + (1-\phi^2) \mathbb{C}^\texttt{l},
\end{equation*}
where $\mathbb{C}^\texttt{l}$ represents the tangent moduli of the associated linear elastic material expressed as
\begin{equation*}
    \mathbb{C}^\texttt{l} \equiv \overline{\kappa} \mathbf{I} \otimes \mathbf{I} + 2 \overline{\mu} \left( \mathbb{I} - \dfrac{1}{3} \mathbf{I} \otimes \mathbf{I} \right).
\end{equation*}
Consequently, the integrand of the left-hand side of the global incremental equilibrium equation in $\eqref{Global Governing Equation}_1$ becomes as
\begin{equation} \label{Interpolated Left-hand Side}
    \nabla \mathbf{v} : \widecheck{\mathbb{C}} : \nabla \delta \mathbf{u} = \phi^2 \nabla \mathbf{v} : \widehat{\mathbb{C}}(\widecheck{\mathbf{F}}) : \nabla \delta \mathbf{u} + (1-\phi^2) \nabla \mathbf{v} : \mathbb{C}^\texttt{l} : \nabla \delta \mathbf{u}.
\end{equation}
Note that one can effortlessly compute $\nabla \mathbf{v} : \widehat{\mathbb{C}}(\widecheck{\mathbf{F}}) : \nabla \delta \mathbf{u}$ by replacing $\mathbf{F}$ with $\widecheck{\mathbf{F}}$ in \eqref{Integrand of Left-hand Side}. As for the second term in \eqref{Interpolated Left-hand Side}, we write out
\begin{equation*}
    \nabla \mathbf{v} : \mathbb{C}^\texttt{l} : \nabla \delta \mathbf{u}
    = \left( \overline{\kappa} - \dfrac{2\overline{\mu}}{3} \right) \text{tr} (\nabla \delta \mathbf{u}) \text{tr} (\nabla \mathbf{v}) 
    + 2\overline{\mu} (\nabla \delta \mathbf{u})^\texttt{sym} : \nabla \mathbf{v}.
\end{equation*}
Finally, substituting the interpolation rules in \eqref{Material Interpolation Functions}, \eqref{Interpolated First PK Stress}, and \eqref{Interpolated Left-hand Side} into \eqref{Global Governing Equation} yields the parameterized global equilibrium equations, from which one can solve for the structural elastoplastic responses with various structural geometries and material phases.

\subsection{Topology optimization formulation}

Once deriving the parameterized global equilibrium equations, we can present the topology optimization formulation as
\begin{equation} \label{Topology Optimization Formulation}
    \left\{ \begin{array}{ll}
        \underset{\rho, \xi_1, \ldots, \xi_{N^\xi}}{\text{maximize:}} & J(\overline{\rho}, \overline{\xi}_1, \ldots, \overline{\xi}_{N^\texttt{mat}}; \mathbf{u}_1, \ldots, \mathbf{u}_N; \overline{\mathbf{b}}_1^\texttt{e}, \ldots, \overline{\mathbf{b}}_N^\texttt{e}; \overline{\boldsymbol{\beta}}_1, \ldots, \overline{\boldsymbol{\beta}}_N; \alpha_1, \ldots, \alpha_N; \widehat{\gamma}_1, \ldots, \widehat{\gamma}_N); \\[12pt]
        
        \text{subject to:} & \left\{ \begin{array}{l}
            g_{V0} = \displaystyle \dfrac{1}{|\Omega_0|} \int_{\Omega_0} \overline{\rho}(\mathbf{X}) \text{d} \mathbf{X} - \overline{V} \leq 0; \\[15pt]
            g_{Vn} = \displaystyle \dfrac{1}{|\Omega_0|} \int_{\Omega_0} \overline{\rho}(\mathbf{X}) \overline{\xi}_n(\mathbf{X}) \text{d} \mathbf{X} - \overline{V}_n \leq 0 \quad \text{for} \quad n = 1,2,\ldots,N^\texttt{mat}; \\[15pt]
            g_\Psi = \displaystyle \dfrac{1}{|\Omega_0|} \int_{\Omega_0} \overline{\rho}(\mathbf{X}) \sum_{n=1}^{N^\texttt{mat}} \left[ \overline{\xi}_n(\mathbf{X}) \Psi_n \right] \text{d} \mathbf{X} - \overline{\Psi} \leq 0 \quad \text{for} \quad \Psi \in \{P, M, C\}; \\[15pt]
            \rho(\mathbf{X}), \xi_1(\mathbf{X}), \ldots, \xi_{N^\xi}(\mathbf{X}) \in [0, 1];
        \end{array} \right. \\
        \\[2pt]

        \text{with:} & \left\{ \begin{array}{l}
            \text{Updating formulae for $\overline{\boldsymbol{\beta}}_1, \ldots, \overline{\boldsymbol{\beta}}_N$ and $\alpha_1, \ldots, \alpha_N$ in \eqref{Updated beta and alpha}}; \\[8pt]
            \text{Updating formulae for $\overline{\mathbf{b}}_1^\texttt{e}, \ldots, \overline{\mathbf{b}}_N^\texttt{e}$ in \eqref{Updated be_bar}}; \\[8pt]
            \text{Kuhn--Tucker condition for $\widehat{\gamma}_1, \ldots, \widehat{\gamma}_N$ in \eqref{Function of gamma}}; \\[8pt]
            \text{Global equilibrium equations for $\mathbf{u}_1, \ldots, \mathbf{u}_N$ in \eqref{Global Governing Equation}}.
        \end{array} \right.
    \end{array} \right.
\end{equation}
In these expressions, the variable $J$ represents a general objective function to be maximized, with its detailed formulation to be introduced later. The parameter $N$ represents the maximum load steps in FEA. The constraints $g_{V0}$ and $g_{Vn}$, for $n = 1, 2, \ldots, N^\texttt{mat}$, correspond to the total material volume fraction (with an upper bound $\overline{V} \in (0, 1]$) and the volume fraction of material $n$ (with an upper bound $\overline{V}_n \in [0, \overline{V}]$), respectively. Additionally, the constraints $g_P$, $g_M$, and $g_C$ impose limits on price, mass density, and CO$_2$ footprint, respectively, with $\overline{P}$, $\overline{M}$, and $\overline{C}$ representing their respective upper bounds. The parameters $P_n$, $M_n$, and $C_n$ denote the price, mass density, and CO$_2$ footprint of material $n$, respectively.

In this study, we aim to optimize the structural performance metrics of stiffness, strength, and effective structural toughness (through total energy). Following the footprint of \citet{jia_controlling_2023, jia_multimaterial_2025}, we design a comprehensive multi-objective function for finite strain elastoplasticity as
\begin{equation} \label{Multi-objective Function}
    J = w_\texttt{stiff} J_\texttt{stiff} + w_\texttt{force} J_\texttt{force} + w_\texttt{energy} J_\texttt{energy}.
\end{equation}
Here, $w_\texttt{stiff}$, $w_\texttt{force}$, and $w_\texttt{energy} \in [0, 1]$ are weighting factors subject to $w_\texttt{stiff} + w_\texttt{force} + w_\texttt{energy} = 1$. The energy-type variables $J_\texttt{stiff}$, $J_\texttt{force}$, and $J_\texttt{energy}$ represent stiffness, strength (end force), and structural toughness (total energy), respectively, and are defined as
\begin{equation} \label{Objective Function Terms}
    \left\{ \begin{array}{l}
        \displaystyle J_\texttt{stiff} = \dfrac{1}{2} \int_{\Omega_0} \widecheck{\mathbf{P}}_1 : \nabla \mathbf{u}_1 \text{d} \mathbf{X}, \quad 
        J_\texttt{force} = \int_{\Omega_0} \widecheck{\mathbf{P}}_N : \nabla \mathbf{u}_N \text{d} \mathbf{X}, \\[10pt]
        \displaystyle J_\texttt{energy} = \dfrac{1}{2} \sum_{n=1}^N \int_{\Omega_0} \left( \widecheck{\mathbf{P}}_n + \widecheck{\mathbf{P}}_{n-1} \right) : \left( \nabla \mathbf{u}_n - \nabla \mathbf{u}_{n-1} \right) \text{d} \mathbf{X}.
    \end{array} \right.
\end{equation}
Recall that $\mathbf{u}_n$ and $\widecheck{\mathbf{P}}_n$ for $n = 1,2,\ldots,N$ are the displacement and interpolated first Piola--Kirchhoff stress at load step $n$, respectively. The terms $J_\texttt{stiff}$, $J_\texttt{force}$, and $J_\texttt{energy}$ in \eqref{Objective Function Terms} are interpreted as follows. 
\begin{itemize}
    \item The variable $J_\texttt{stiff}$ represents the strain energy at the first load step, where plasticity has not yet evolved. Under displacement loading, $J_\texttt{stiff}$ is positively correlated with the initial stiffness of the structure. Therefore, maximizing $J_\texttt{stiff}$ enhances structural stiffness.
    \item The variable $J_\texttt{force}$ corresponds to the product of force and displacement at the final load step, also known as the ``end compliance" for hyperelastic structures. Under displacement loading, maximizing $J_\texttt{force}$ increases the final reaction force.
    \item Finally, the variable $J_\texttt{energy}$ captures the total energy absorption and dissipation of the structure, equivalent to the area enclosed by its force--displacement curve.
\end{itemize}

\begin{remark}
In the multi-objective function in \eqref{Multi-objective Function}, we opt to use the (interpolated) first Piola--Kirchhoff stress ($\widecheck{\mathbf{P}}$) as the the primary measure in contrast to the Kirchhoff stress ($\widehat{\boldsymbol{\tau}}$) in Section \ref{Sec: Finite Strain Elastoplasticity}. It is because $\widecheck{\mathbf{P}}$ is the easiest stress measure \citep{kumar_phase-field_2020} to gauge practically. However, we note that the multi-objective function in \eqref{Multi-objective Function} is general, and one can effortlessly replace $\widecheck{\mathbf{P}}$ with any other stress measures if desired.
\end{remark}

\begin{remark}
The proposed topology optimization formulation in \eqref{Topology Optimization Formulation} equipped with the general multi-objective function in \eqref{Multi-objective Function} can be simplified to recover several special cases that have demonstrated success. For example, setting $w_\texttt{stiff} = w_\texttt{energy} = 0$ and $w_\texttt{force}=1$ yields $J = J_\texttt{force}$, which mainly maximizes the ``end compliance" as in \citet{ivarsson_plastic_2021}. Similarly, setting $w_\texttt{stiff} = w_\texttt{force} = 0$ and $w_\texttt{energy} = 1$ derives $J = J_\texttt{energy}$, which primarily maximizes the energy of structures as in \citet{wallin_topology_2016}. Additionally, employing $w_\texttt{stiff} = 0$ and $w_\texttt{force} = w_\texttt{energy} = 0.5$ generates $J = 0.5 J_\texttt{force} + 0.5 J_\texttt{energy}$ as in \citet{abueidda_topology_2021}. As illustrated in Section \ref{Sec: Sample Examples}, utilizing the comprehensive multi-objective function in \eqref{Multi-objective Function} enables the generation of optimized structures exhibiting diverse and tailored elastoplastic responses under large deformations.
\end{remark}

\section{Optimized elastoplastic designs with real-world applications}
\label{Sec: Sample Examples}

In this section, we present optimized elastoplastic designs with real-world applications through four representative examples. These examples demonstrate the effectiveness of the proposed multimaterial topology optimization framework (Section \ref{Sec: Topology Optimization Framework}) in optimizing structural elastoplastic responses under large deformations. Additionally, we provide mechanical insights into achieving these optimized behaviors.

In the first example, we optimize the energy dissipation of multi-alloy dampers subjected to various cyclic loadings. This investigation reveals that employing multiple materials enhances energy dissipation compared to single-material ones under large deformations. Additionally, we demonstrate the transition from the kinematic hardening dominance to the isotropic hardening dominance, which improves energy dissipation as applied displacements increase. We also show that optimized dampers exhibit superior energy dissipation across multiple loading cycles compared to an intuitive design.

In the second example, we optimize the initial stiffness and end force of double-clamped beams. This example demonstrates the versatility of the proposed topology optimization framework in handling multiple design objectives for structures made of purely hyperelastic, purely elastoplastic, and mixed materials. We also explore the synergistic use of hyperelastic and elastoplastic materials to achieve various stiffness--strength interplays in composite structures.

In the third example, we maximize the crashworthiness of impact-resisting bumpers. Extending the optimization task to 3D geometries, we consider scenarios with more than two candidate materials. Through this example, we show that the proposed optimization framework applies to various spatial dimensions and an arbitrary number of candidate materials.

Finally, we optimize the end force of cold working profiled sheets accounting for both the processing (metal-forming) and service (load-carrying) stages, which involve ultra-large elastoplastic deformations. This example introduces multi-stage topology optimization, which incorporates both intra-stage and inter-stage history dependencies. Additionally, we integrate practical constraints --- cost, lightweight, and sustainability --- into the design and optimization of elastoplastic structures, bridging the gap between mechanical designs and non-mechanical considerations.

We remark that these examples span a spectrum of material properties, as summarized in Table \ref{Table: Material Properties}. For most examples, we consistently use titanium, bronze, nickel--chromium, and stainless steel, which exhibit perfect plasticity, linear isotropic hardening, nonlinear isotropic hardening, and kinematic hardening, respectively. Importantly, these four materials can be joined together using 3D printing techniques, as reported in \citet{wei_overview_2020, wei_cu10sn_2022}. In the double-clamped beam example, we utilize elastoplastic lithium, characterized by combined isotropic and kinematic hardening, alongside hyperelastic PCL, modeled by assigning a sufficiently large yield strength of $\sigma_y = 25$ MPa.

\begin{table}[!htbp]
    \caption{Material properties used in numerical examples}
    \label{Table: Material Properties}
    \centering
    \scriptsize
    \begin{tabular}{lccccccccc}
        \hline
        \textbf{Materials}
        & \begin{tabular}[x]{@{}c@{}}
            \textbf{Titanium} \\
            (Ti--6Al--4V)
        \end{tabular}
        & \begin{tabular}[x]{@{}c@{}}
            \textbf{Bronze} \\
            (CuSn10)
        \end{tabular}
        & \begin{tabular}[x]{@{}c@{}}
            \textbf{Nickel} \\
            \textbf{--chromium} \\
            (INCONEL 718)
        \end{tabular}
        & \begin{tabular}[x]{@{}c@{}}
            \textbf{Stainless steel} \\
            (AISI 316L)
        \end{tabular}
        & \begin{tabular}[x]{@{}c@{}}
            \textbf{Lithium} \\
            (commercial \\
            purity)
        \end{tabular}
        & \begin{tabular}[x]{@{}c@{}}
            \textbf{PCL} \\
            (polycapro \\
            -lactone)
        \end{tabular} \\
        \hline
        Features & \begin{tabular}[x]{@{}c@{}}
            Perfect \\
            plasticity
        \end{tabular} & \begin{tabular}[x]{@{}c@{}}
            Linear \\
            isotropic \\
            hardening
        \end{tabular} & \begin{tabular}[x]{@{}c@{}}
            Nonlinear \\
            isotropic \\
            hardening
        \end{tabular} & \begin{tabular}[x]{@{}c@{}}
            Kinematic \\
            hardening
        \end{tabular} & \begin{tabular}[x]{@{}c@{}}
            Combined \\
            hardening
        \end{tabular} & \begin{tabular}[x]{@{}c@{}}
            Near \\
            hyperelasticity \end{tabular} \\[10pt]

        \begin{tabular}[x]{@{}l@{}}
            Bulk moduli, \\
            $\kappa$ (GPa)
        \end{tabular} & 115.6 & 88.9 & 165.0 & 141.3 & 5.8 & 0.3880 \\[10pt]
        \begin{tabular}[x]{@{}l@{}}
            Shear moduli, \\
            $\mu$ (GPa)
        \end{tabular} & 41.4 & 29.6 & 76.2 & 76.8 & 1.8 & 0.0157 \\[10pt]
        \begin{tabular}[x]{@{}l@{}}
            Young's moduli, \\
            $E$ (GPa)
        \end{tabular} & 111.0 & 80.0 & 198.0 & 195.0 & 4.9 & 0.0466 \\[10pt]
        Poisson's ratios, $\nu$ & 0.34 & 0.35 & 0.30 & 0.27 & 0.36 & 0.48 \\[10pt]
        \begin{tabular}[x]{@{}l@{}}
            Kinematic hardening \\
            moduli, $h$ (MPa)
        \end{tabular} & 0.0 & 0.0 & 0.0 & 1339.1 & 2.5 & 0.0 \\[10pt]
        \begin{tabular}[x]{@{}l@{}}
            Isotropic hardening \\
            moduli, $K$ (MPa)
        \end{tabular} & 0.0 & 952.0 & 129.0 & 0.0 & 2.5 & 0.0 \\[10pt]
        \begin{tabular}[x]{@{}l@{}}
            Initial yield \\
            strengths, $\sigma_y$ (MPa)
        \end{tabular} & 853.0 & 145.0 & 450.0 & 226.0 & 1.0 & 25.0 \\[10pt]
        \begin{tabular}[x]{@{}l@{}}
            Residual yield \\
            strengths, $\sigma_\infty$ (MPa)
        \end{tabular} & 853.0 & 145.0 & 715.0 & 226.0 & 1.0 & 25.0 \\[10pt]
        \begin{tabular}[x]{@{}l@{}}
            Saturation \\
            exponents, $\delta$ 
        \end{tabular} & 0.0 & 0.0 & 16.9 & 0.0 & 0.0 & 0.0 \\[10pt]
        \begin{tabular}[x]{@{}l@{}}
            Isotropic hardening \\
            functions, $k(\alpha)$ (MPa)
        \end{tabular} & 853.0 & $145.0 + 952.0 \alpha$ & \begin{tabular}[x]{@{}c@{}}
            $450.0 + 129.0 \alpha$ \\
            $+ 265.0 (1-e^{-16.9 \alpha})$
        \end{tabular}  & 226.0 & $1.0 + 2.5 \alpha$ & 25.0 \\[10pt]
        Price (USD/kg) & 24.4 & 13.3 & 25.2 & 6.6 & 127.0 & 6.8 \\[10pt]
        \begin{tabular}[x]{@{}l@{}}
            Mass densities \\
             ($10^3$ kg/m$^3$)
        \end{tabular} & 4.4 & 8.8 & 8.2 & 8.0 & 0.5 & 1.1 \\[10pt]
        CO$_2$ footprint (kg/kg) & 40.4 & 6.0 & 16.6 & 7.4 & 79.6 & 2.3 \\
        \hline
    \end{tabular}
\end{table}

To implement the proposed topology optimization framework, we adapt the open-source software\footnote{Available in GitHub at \url{https://github.com/missionlab/fenitop}.} \citep{jia_fenitop_2024} to incorporate finite strain elastoplasticity. The software efficiently computes partial derivatives using automatic differentiation and accelerates computations through parallel processing. To ensure high solution precision, we utilize a direct solver with lower--upper factorization to solve the global equilibrium equations in \eqref{Global Governing Equation} and the adjoint equations in \eqref{Reduced Adjoint Equation at Final Step}--\eqref{Reduced Adjoint Equation at Remaining Steps}. Additionally, to determine the optimal design variables, we employ the well-established gradient-based optimizer, the method of moving asymptotes \citep{svanberg_method_1987}. This optimizer utilizes the sensitivity analysis presented in \ref{Sec: Sensitivity Analysis and Verification} and efficiently updates the design variables until convergence. The optimized elastoplastic structures and the mechanisms enabling these design objectives are detailed below.

\subsection{Metallic yielding dampers with maximized hysteretic energy}
\label{Sec: Dampers}

\subsubsection{Single-alloy versus multi-alloy dampers}

In this subsection, we optimize the energy dissipation of metallic dampers \citep{zhang_simplified_2018, jia_residual_2022}, which are typically used in structural engineering to suppress vibrations induced by earthquakes or winds. Through these optimized dampers, we aim to demonstrate the effectiveness of the proposed topology optimization framework in optimizing structural elastoplastic responses under large deformations. Additionally, we highlight the necessity of employing multiple alloys to achieve greater energy dissipation.

To achieve these goals, we revisit the damper optimization problem in \citet{jia_multimaterial_2025}. As shown in Fig. \ref{Fig: Damper-Part 1}(a), the design domain is a rectangle subjected to simple shear loadings under the plane strain condition. We aim to optimally distribute the bronze and steel materials, whose Kirchhoff stress--Lagrangian strain ($\tau_{11}$--$E_{11}$) curves are shown in Fig. \ref{Fig: Damper-Part 1}(b), to maximize the total hysteretic energy (Fig. \ref{Fig: Damper-Part 1}(c)). Notably, the current setup incorporates non-monotonic, cyclic loadings and accounts for large deformations, representing a more complex scenario compared to the original problem in \citet{jia_multimaterial_2025}.

To solve this problem, we employ the following FEA and optimization strategies. The design domain is discretized using a structured mesh consisting of 15,000 first-order quadrilateral elements. The filter radius is set to $R_\zeta = 10$ mm for $\zeta \in \{\rho, \xi_1, \ldots, \xi_{N^\xi} \}$. The Heaviside sharpness parameter, $\beta_\zeta$, is initially set to 1 and is doubled every 40 optimization iterations starting from iteration 41, until it reaches a maximum value of $\beta_\zeta = 512$. The penalty parameters for the density variable are fixed at $p_\kappa = p_\mu = p_h = 3$ and $p_k = 2.5$, while the penalty parameter for the material variables, $p_\xi$, starts at 1 and increases by 0.25 every 40 optimization iterations from iteration 41, up to a maximum of $p_\xi = 3$. The weighting factors in \eqref{Multi-objective Function} are set as $w_\texttt{stiff} = w_\texttt{force} = 0$ and $w_\texttt{energy} = 1$. The constraint is defined by the total material volume, $g_{V0} \leq 0$, with an upper bound of $\overline{V} = 0.5$. The maximum number of optimization iterations is set to 500. During the topology optimization process, we use 32 non-uniform load steps in the FEA. Following optimization, we refine the analysis by using 44 load steps to evaluate the elastoplastic responses of all designs.

\begin{figure}[!htbp]
    \centering
    \includegraphics[height=20cm]{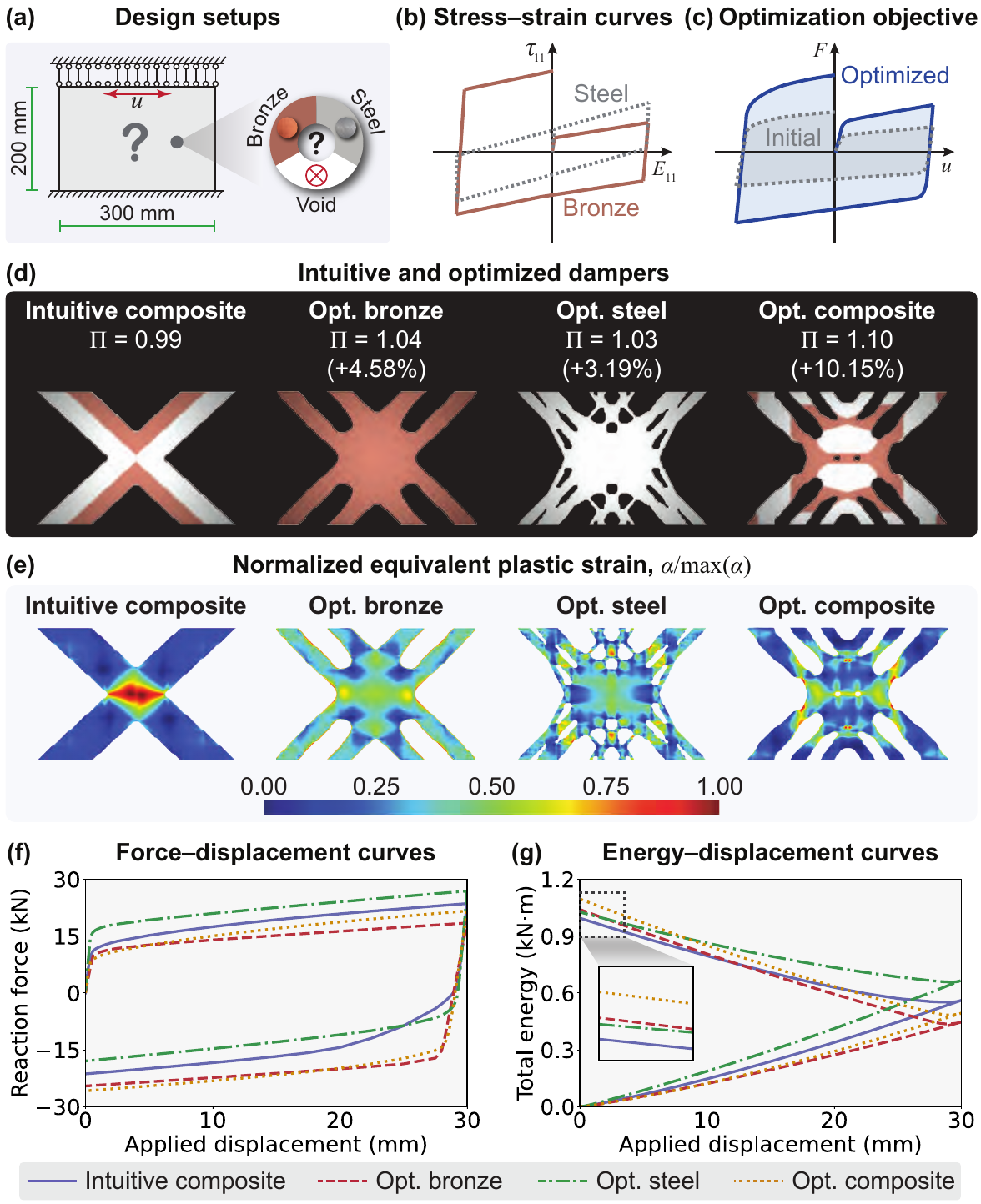}
    \caption{Design and optimization of metallic yielding dampers. (a) Design setups: the design domain, boundary conditions, and candidate materials (bronze, steel, and void). The variable $u$ is the applied displacement. (b) Uniaxial Kirchhoff stress--Lagrangian strain ($\tau_{11}$--$E_{11}$) curves of candidate materials. (c) Optimization objective: maximizing the total energy of dampers. The variable $F$ is the reaction force. (d) Intuitive and optimized dampers. The total energy ($\Pi$) is in kN$\cdot$m. The percentages are the total energy increments compared to the intuitive design. (e) Normalized equivalent plastic strain ($\alpha/\max (\alpha)$). (f) Force--displacement ($F$--$u$) curves. (g) Energy--displacement ($\Pi$--$u$) curves.}
    \label{Fig: Damper-Part 1}
\end{figure}

The damper designs under half-cycle loadings, along with the hysteretic energy ($\Pi$ in kN$\cdot$m), are illustrated in Fig. \ref{Fig: Damper-Part 1}(d). As a baseline, we include one intuitive composite design that uses the same material volume fraction ($\overline{V} = 0.5$) as the three optimized designs. These optimized designs are tailored for bronze, steel, and multiple materials, respectively. Compared to the intuitive composite design, the three optimized dampers achieve greater hysteretic energy, with increases of 4.58\%, 3.19\%, and 10.15\%, respectively. These improvements highlight the effectiveness of the proposed topology optimization framework in enhancing elastoplastic responses under large deformations. Moreover, the optimized composite design naturally favors the use of two materials and achieves higher energy dissipation than the two single-material optimized designs. This performance gain aligns with the infinitesimal strain scenario in \citet{jia_multimaterial_2025} and reinforces the benefit of using multiple materials to improve design performance under large deformations.

To understand the performance gains, we present the normalized equivalent plastic strain ($\alpha/\max (\alpha)$) for all dampers in Fig. \ref{Fig: Damper-Part 1}(e). In the intuitive composite design, plastic strain is confined to the central region, whereas the three optimized designs distribute plastic deformation more evenly, allowing a greater portion of the material to effectively contribute to total energy dissipation. Further, by examining the Kirchhoff stress--Lagrangian strain ($\tau_{11}$--$E_{11}$) curves in Fig. \ref{Fig: Damper-Part 1}(b), we observe that the steel material dissipates more energy during the first quadrant (loading stage), while the bronze material dissipates more energy during the fourth quadrant (unloading stage). This difference in energy dissipation between materials is manifested in the dissimilar hysteretic loops of the force--displacement ($F$--$u$) curves for the optimized bronze and steel designs (Fig. \ref{Fig: Damper-Part 1}(f)). In contrast, the optimized composite damper effectively utilizes both materials to balance energy dissipation during the loading and unloading stages, forming the largest hysteresis loop, which ultimately results in the highest total energy dissipation shown in Fig. \ref{Fig: Damper-Part 1}(g).

\subsubsection{Dampers under complete cyclic loadings with increasing displacement amplitudes}

Having demonstrated the effectiveness of the topology optimization framework and the necessity of employing multiple materials, we now evaluate the performance of optimized composite dampers under complete cyclic loadings with increasing displacement amplitudes. As shown in Fig. \ref{Fig: Damper-Part 2}(a), we present three optimized dampers with applied displacement amplitudes of $u_\texttt{max} = 10$, 20, and 30 mm, respectively. For each optimized damper, we report the total energy ($\Pi$) and its increment compared to the intuitive design in Fig. \ref{Fig: Damper-Part 1}(d). All three optimized dampers demonstrate performance gains --- 20.25\%, 17.28\%, and 18.60\%, respectively --- relative to the intuitive design. These gains are attributed to the optimized dampers' ability to engage more materials in energy dissipation, as evidenced by the normalized total energy density ($W/\max (W)$) in Fig. \ref{Fig: Damper-Part 2}(c). Notably, these performance improvements exceed the 10.15\% gain achieved by the optimized composite damper under half-cycle loadings (Fig. \ref{Fig: Damper-Part 1}(d)). This enhanced performance is due to the full-cycle optimized designs achieving significant energy increases in the second quadrant (reloading stage), despite slight energy decreases in the first quadrant (loading stage) for $u_\texttt{max} = 20$ and 30 mm, as shown in Fig. \ref{Fig: Damper-Part 2}(b) and (c).

A comparison among the three optimized designs in Fig. \ref{Fig: Damper-Part 2}(a) further reveals the influence of displacement amplitude ($u_\texttt{max}$) on material distribution. As displacement amplitudes increase, the optimized dampers increasingly favor the use of bronze over steel. This trend can be explained by the material behaviors illustrated in the Kirchhoff stress--Lagrangian strain ($\tau_{11}$--$E_{11}$) curves in Fig. \ref{Fig: Damper-Part 1}(b). The steel material, which exhibits kinematic hardening, produces a hysteretic loop approximating a parallelogram. Its total energy density is expressed as
\begin{equation} \label{Total Energy Density of Steel}
    W^\texttt{s} \approx 4 \sigma_y^\texttt{s} E_\texttt{max}
\end{equation}
where $\sigma_y^\texttt{s}$ is the yield strength of steel and $E_\texttt{max}$ is the Lagrangian strain amplitude. In contrast, the hysteretic loop of the bronze material expands with larger strains due to isotropic hardening, with its total energy density given by
\begin{equation} \label{Total Energy Density of Bronze}
    W^\texttt{b} \approx 2 E_\texttt{max} \left[ \sigma_y^\texttt{b} + (\sigma_y^\texttt{b} + 4 E_\texttt{t}^\texttt{b} E_\texttt{max}) \right]
    = 8 E_\texttt{t}^\texttt{b} E_\texttt{max}^2 + 4 \sigma_y^\texttt{b} E_\texttt{max}
\end{equation}
where $\sigma_y^\texttt{b}$ and $E_\texttt{t}^\texttt{b} > 0$ are the yield strength and average plastic tangent modulus of bronze, respectively. At small strains, $W^\texttt{b}$ is dominated by the linear term, with $W^\texttt{b} \rightarrow 4 \sigma_y^\texttt{b} E_\texttt{max} < W^\texttt{s}$ as $E_\texttt{max} \rightarrow 0$. At larger strains, $W^\texttt{b}$ is dominated by the quadratic term, with $W^\texttt{b} \rightarrow 8 E_\texttt{t}^\texttt{b} E_\texttt{max}^2 > W^\texttt{s}$ as $E_\texttt{max} \rightarrow \infty$. Consequently, as $E_\texttt{max}$ (and $u_\texttt{max}$) increases, the optimized dampers gradually favor bronze over steel.

We also note that these findings are specific to the context of large deformations, where both the materials (Fig. \ref{Fig: Damper-Part 1}(b)) and structures (Fig. \ref{Fig: Damper-Part 2}(b)) yield immediately upon loading. Therefore, the transition from kinematic to isotropic hardening --- rather than the stiffness-to-strength transition observed in the infinitesimal strain scenario \citep{jia_multimaterial_2025} --- dominates the material distribution in the optimized dampers.

\begin{figure}[!htbp]
    \centering
    \includegraphics[height=12.5cm]{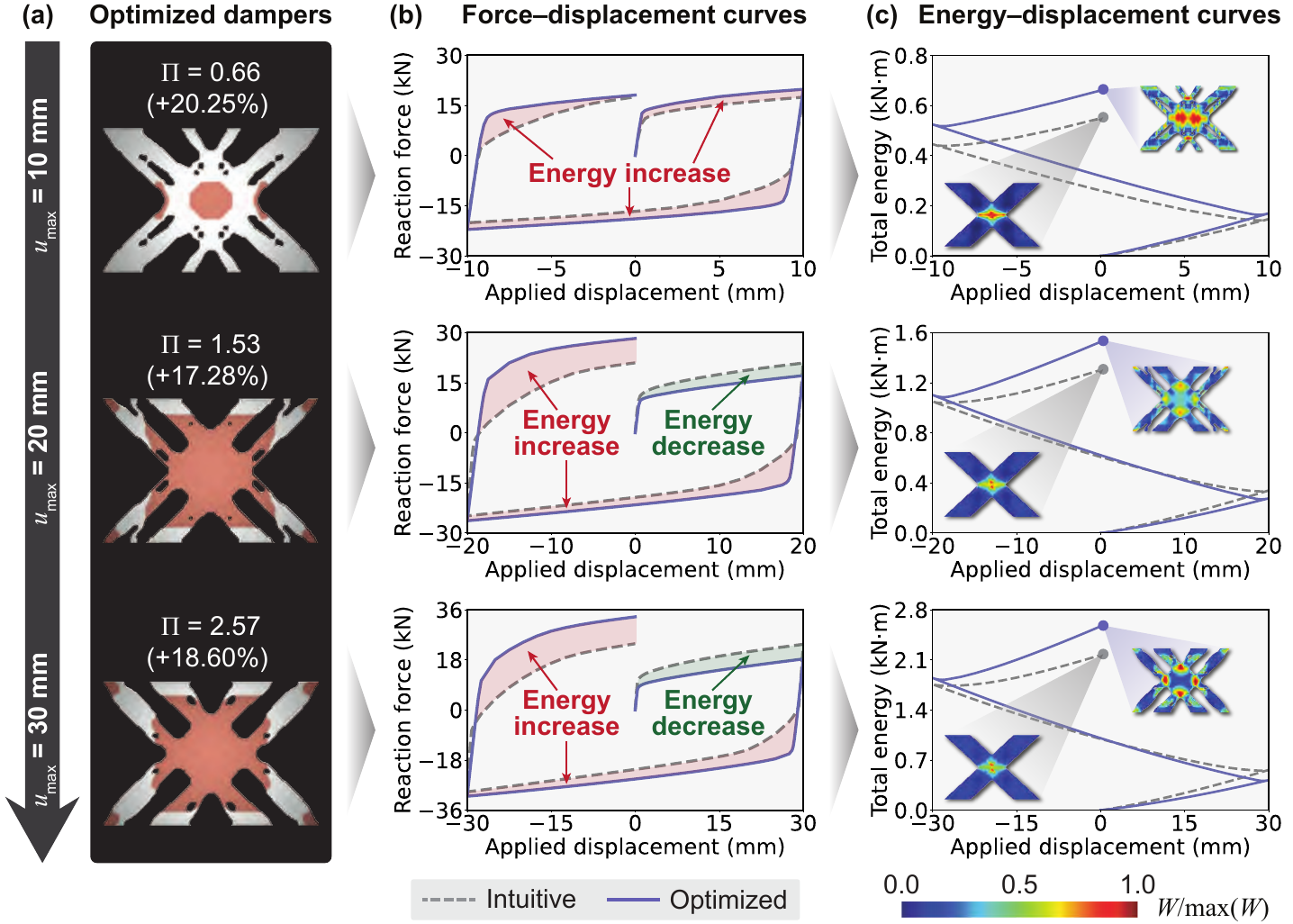}
    \caption{Optimized dampers under increasing applied displacements. (a) Optimized dampers. The total energy ($\Pi$) is in kN$\cdot$m, and the percentages are the total energy increments compared to the intuitive design in Fig. \ref{Fig: Damper-Part 1}(d). (b) Force--displacement ($F$--$u$) curves. (c) Energy--displacement ($\Pi$--$u$) curves. The insets are the normalized total energy density ($W/\max(W)$) at the final load step.}
    \label{Fig: Damper-Part 2}
\end{figure}

\subsubsection{Dampers under multiple-cycle loadings}

Practically, energy-dissipating dampers are subjected to multiple-cycle loadings with non-constant displacement amplitudes \citep{zhang_simplified_2018, jia_novel_2019, jia_double_2021, jia_residual_2022}. In this subsection, we apply such complex loading conditions and evaluate the total energy of the optimized damper. As shown in Fig. \ref{Fig: Damper-Part 3}(a), we compare the intuitive design from Fig. \ref{Fig: Damper-Part 1}(d) with a damper optimized under multiple-cycle loadings. The optimized design exhibits an organic material distribution and achieves a total energy increase of 32.84\%, rising from 4.45 kN$\cdot$m in the intuitive design to 5.91 kN$\cdot$m.

The force--displacement ($F$--$u$) and energy--displacement ($\Pi$--$u$) curves of the two designs are shown in Fig. \ref{Fig: Damper-Part 3}(b) and (c), respectively. In Fig. \ref{Fig: Damper-Part 3}(b), each force--displacement curve consists of 166 non-uniform load steps and spans three cycles with increasing displacement amplitudes of 10, 20, and 30 mm. Compared to the intuitive design, the optimized damper demonstrates a slightly smaller hysteresis loop during the first cycle but more expanded loops in the second and third cycles. Here is why.

Based on \eqref{Total Energy Density of Steel}, the total energy densities of steel during the three cycles are expressed as
\begin{equation*}
    W_1^\texttt{s} \approx 4 \sigma_y^\texttt{s} E_{1,\texttt{max}},
    \quad W_2^\texttt{s} \approx 4 \sigma_y^\texttt{s} E_{2,\texttt{max}},
    \quad \text{and} \quad
    W_3^\texttt{s} \approx 4 \sigma_y^\texttt{s} E_{3,\texttt{max}}
\end{equation*}
where $W_i^\texttt{s}$ for $i=1,2,3$ is the total energy density of steel during cycle $i$, and $E_{i,\texttt{max}}$ is the strain amplitude of that cycle. Similarly, based on \eqref{Total Energy Density of Bronze}, the total energy densities of bronze during the three cycles are
\begin{equation*}
    \left\{ \begin{array}{ll}
        W_1^\texttt{b} 
        & \approx 2 E_{1,\texttt{max}} \left[ \sigma_y^\texttt{b} + (\sigma_y^\texttt{b} + 4 E_\texttt{t}^\texttt{b} E_{1,\texttt{max}}) \right]
        = 8 E_\texttt{t}^\texttt{b} E_{1,\texttt{max}}^2 + 4 \sigma_y^\texttt{b} E_{1,\texttt{max}} \\[5pt]
        
        W_2^\texttt{b} 
        & \approx 2 E_{2,\texttt{max}} \left\{ (\sigma_y^\texttt{b} + 4 E_\texttt{t}^\texttt{b} E_{1,\texttt{max}}) + \left[ \sigma_y^\texttt{b} + 4 E_\texttt{t}^\texttt{b} (E_{1,\texttt{max}} + E_{2,\texttt{max}}) \right] \right\} \\[5pt]
        & = 8 E_\texttt{t}^\texttt{b} E_{2,\texttt{max}}^2 + 4 \sigma_y^\texttt{b} E_{2,\texttt{max}} + 16 E_\texttt{t}^\texttt{b} E_{1,\texttt{max}} E_{2,\texttt{max}} \\[5pt]

        W_3^\texttt{b} 
        & \approx 2 E_{3,\texttt{max}} \left\{ \left[ \sigma_y^\texttt{b} + 4 E_\texttt{t}^\texttt{b} (E_{1,\texttt{max}} + E_{2,\texttt{max}}) \right] + \left[ \sigma_y^\texttt{b} + 4 E_\texttt{t}^\texttt{b} (E_{1,\texttt{max}} + E_{2,\texttt{max}} + E_{3,\texttt{max}}) \right] \right\} \\[5pt]
        & = 8 E_\texttt{t}^\texttt{b} E_{3,\texttt{max}}^2 + 4 \sigma_y^\texttt{b} E_{3,\texttt{max}} + 16 E_\texttt{t}^\texttt{b} E_{1,\texttt{max}} E_{3,\texttt{max}} + 16 E_\texttt{t}^\texttt{b} E_{2,\texttt{max}} E_{3,\texttt{max}} \\[5pt]
    \end{array} \right.
\end{equation*}
where $W_i^\texttt{b}$ for $i=1,2,3$ is the total energy density of bronze during cycle $i$. Given that $E_{1,\texttt{max}} < E_{2,\texttt{max}} < E_{3,\texttt{max}}$, we observe $W_1^\texttt{s} < W_2^\texttt{s} < W_3^\texttt{s}$ and $W_1^\texttt{b} < W_2^\texttt{b} < W_3^\texttt{b}$. This progression indicates that the total energy contributions increase across successive cycles. Consequently, the optimized damper contracts the force--displacement curve and sacrifices the total energy during the first cycle. However, it compensates by expanding the force--displacement curves and increasing total energy during the second and third cycles. Eventually, the optimized damper shows greater end energy than the intuitive design (Fig. \ref{Fig: Damper-Part 3}(c)), demonstrating its superior performance under multiple-cycle loading.

\begin{figure}[!htbp]
    \centering
    \includegraphics[height=7.5cm]{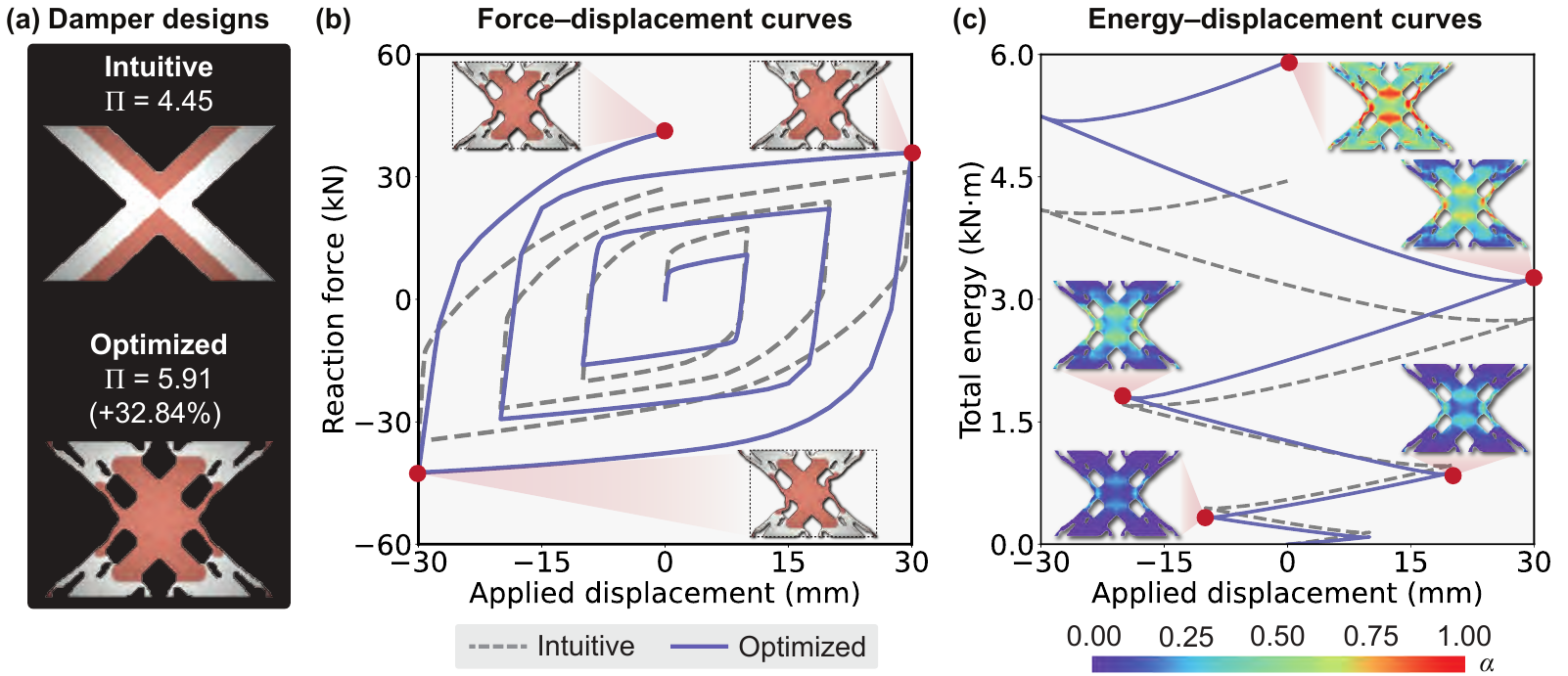}
    \caption{Dampers under multiple cycles of loadings. (a) Damper designs. The total energy ($\Pi$) is in kN$\cdot$m, and the percentage is the total energy increment compared to the intuitive design. (b) Force--displacement ($F$--$u$) curves. The insets are the deformed configurations of the optimized damper. (c) Energy--displacement ($\Pi$--$u$) curves. The insets are the equivalent plastic strains ($\alpha$), and the values above 1 are plotted as 1 for better visualization.}
    \label{Fig: Damper-Part 3}
\end{figure}

Through this example of optimizing metallic yielding dampers, we numerically prove the effectiveness of the proposed multimaterial topology optimization framework in enhancing elastoplastic responses of structures under large deformations. Exploiting this framework, we present a series of optimized dampers with superior energy dissipation compared to an intuitive design, irrespective of the applied loading conditions, including half-cycle, complete-cycle, and multiple-cycle loadings. Furthermore, we highlight the necessity of employing multiple materials to improve the energy dissipation of the optimized dampers. Our analysis also reveals key phenomena, such as the transition from kinematic to isotropic hardening under complete-cycle loadings with increasing displacement amplitudes and the dominance of later cycles over initial cycles in multiple-cycle loadings. Together with the framework for infinitesimal strain elastoplasticity in \citet{jia_multimaterial_2025}, this study completes a comprehensive narrative for optimizing energy-dissipating devices. 

\subsection{Hyperelastic--elastoplastic composite structures with tailored stiffness--strength balance}

In this example, we design and optimize composite structures composed of both hyperelastic and elastoplastic materials. These optimized designs showcase various stiffness--strength interplays by harnessing the complementary properties of elasticity and plasticity. Additionally, we demonstrate the versatility of the proposed topology optimization framework in optimizing material distributions across a range of configurations, including purely hyperelastic, purely elastoplastic, and mixed-material systems.

The design setups are shown in Fig. \ref{Fig: Beam}(a). We optimize a double-clamped beam, where the top middle edge is subjected to a downward displacement, $u$. The design objective is to distribute hyperelastic PCL and elastoplastic lithium, whose Kirchhoff stress--Lagrangian strain ($\tau_{11}$--$E_{11}$) curves are shown in Fig. \ref{Fig: Beam}(b), to simultaneously maximize the initial stiffness and end force of the structures (Fig. \ref{Fig: Beam}(c)).

\begin{figure}[!htbp]
    \centering
    \includegraphics[height=15cm]{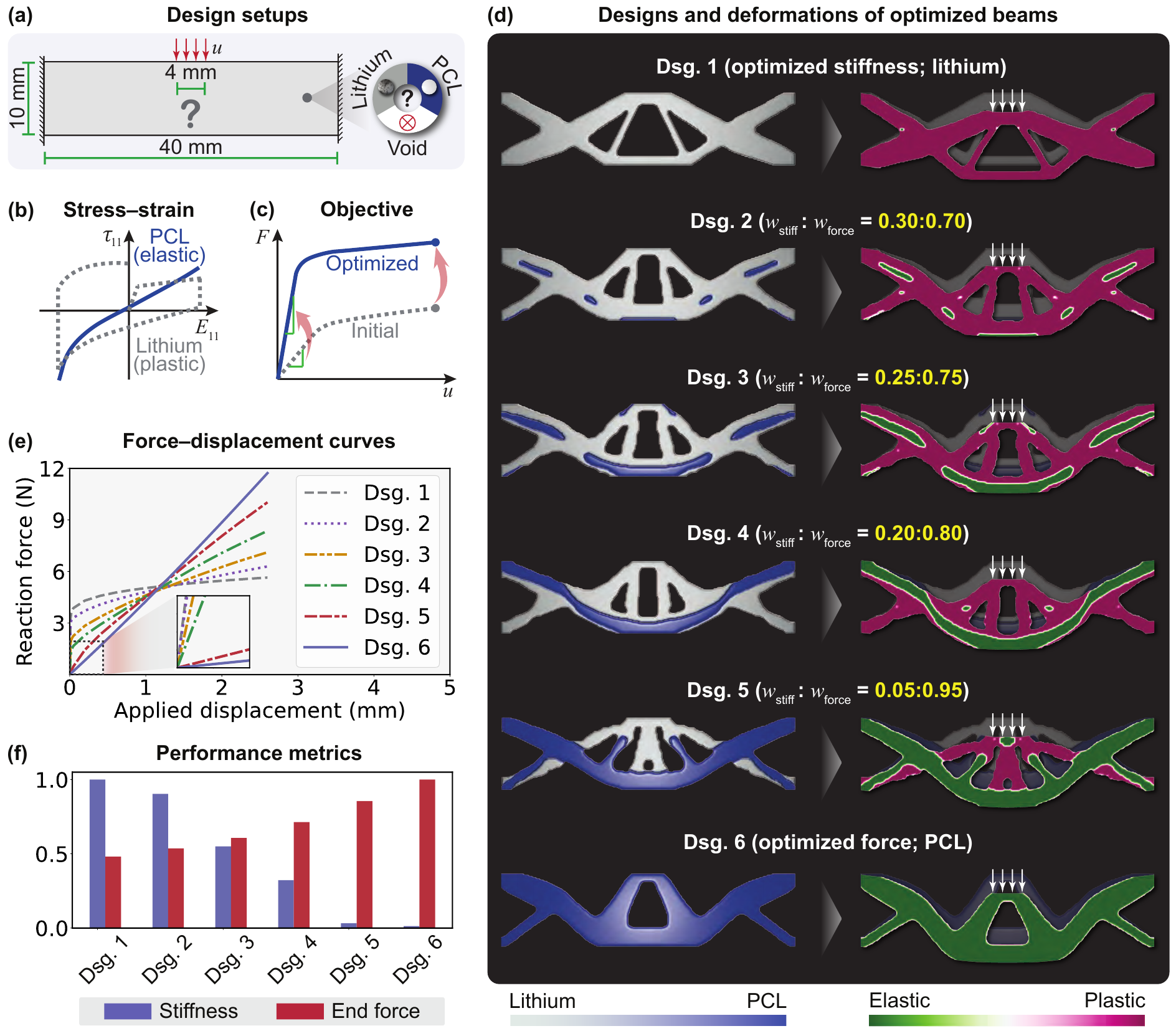}
    \caption{Design and optimization of double-clamped beams. (a) Design setups: the design domain, boundary conditions, and candidate materials (lithium, PCL, and void). The variable $u$ is the applied displacement. (b) Uniaxial Kirchhoff stress--Lagrangian strain ($\tau_{11}$--$E_{11}$) curves of candidate materials. (c) Optimization objective: maximizing the initial stiffness and end force of beams. The variable $F$ is the reaction force. (d) Designs and deformations of optimized beams. The deformed configurations also show the elastic/plastic regions. (e) Force--displacement ($F$--$u$) curves. (f) Performance metrics of the initial stiffness and end force. The two metrics are normalized by the maxima of all designs, respectively.}
    \label{Fig: Beam}
\end{figure}

To achieve the design objective, we employ the following FEA and optimization treatments. The design domain is discretized using a structured mesh of 14,400 first-order quadrilateral elements. The filter radius is set to $R_\rho=1$ mm for the density variable ($\rho$) and $R_\zeta = 3$ mm for $\zeta \in \{ \xi_1, \ldots, \xi_{N^\xi} \}$. The Heaviside sharpness parameter ($\beta_\zeta$) is initialized at 1 and doubled every 20 optimization iterations starting from iteration 21 until it reaches a maximum value of $\beta_\zeta = 512$. The penalty parameters for the density variable are fixed at $p_\kappa = p_\mu = p_h = p_k = 3$, while the penalty parameter for the material variables ($p_\xi$) starts at 1 and increases by 0.25 every 20 optimization iterations from iteration 21, up to a maximum of $p_\xi = 5$. The constraint is defined by the total material volume ($g_{V0} \leq 0$) with an upper bound of $\overline{V} = 0.5$. The maximum number of optimization iterations is set to 500. For performance evaluation using FEA, we consistently apply 18 non-uniform load steps across all designs.

The optimized beams and their deformations are shown in Fig. \ref{Fig: Beam}(d). We present four optimized bi-material designs (Dsgs. 2--5) with decreasing weighting ratios of initial stiffness ($w_\texttt{stiff}$ = 0.30, 0.25, 0.20, and 0.05, respectively) and increasing weighting ratios of end force ($w_\texttt{force}$ = 0.70, 0.75, 0.80, and 0.95, respectively). Additionally, two single-material designs are included as references: a lithium design (Dsg. 1) optimized for initial stiffness and a PCL design (Dsg. 6) optimized for the end force.

A comparison of Dsgs. 2--5 reveals a gradual shift in material preference as $w_\texttt{stiff}$ decreases and $w_\texttt{force}$ increases. The optimized designs progressively favor the hyperelastic PCL over the elastoplastic lithium. This trend is driven by the inherent properties of the materials: while metals like lithium are stiffer than polymers like PCL, lithium yields immediately under large deformations, with its Kirchhoff stress confined by the yield surface (Figs. \ref{Fig: Beam}(b) and (d)). In contrast, PCL’s hyperelastic behavior allows its Kirchhoff stress to increase rapidly under deformation. Consequently, lithium contributes more to stiffness and is preferred when $w_\texttt{stiff}$ is larger, whereas PCL contributes more to the end force and is favored when $w_\texttt{force}$ is larger.

These material contributions are further verified by the force--displacement curves in Fig. \ref{Fig: Beam}(e). As $w_\texttt{stiff}$ decreases and $w_\texttt{force}$ increases from Dsg. 2 to Dsg. 5, the initial stiffness diminishes while the peak force increases due to the greater incorporation of PCL in the optimized designs. Notably, the stiffness and peak force of the bi-material designs (Dsgs. 2--5) are bounded by the stiffness of the optimized lithium design (Dsg. 1) and the peak force of the optimized PCL design (Dsg. 6), respectively. Through these optimized designs, we demonstrate the generality of the proposed framework in optimizing structures composed of hyperelastic and/or elastoplastic materials under finite deformations. This generality enables tailoring stiffness--strength (end force) interplay for composite structures (Fig. \ref{Fig: Beam}(f)).

\subsection{Front bumpers with maximized crashworthiness}

In this subsection, we extend the optimization framework to 3D by maximizing the crashworthiness of impact-resisting front bumpers \citep{patel_crashworthiness_2009, sun_crashworthiness_2018, wang_structure_2018, ren_effective_2020, wang_multi-objective_2020} and demonstrate its capability to handle more than two candidate materials.

\subsubsection{Bi-material bumpers in 3D}

As shown in Fig. \ref{Fig: Bumper-Part 1}(a), we consider an arch-shaped design domain with four fixed corners, subjected to quasi-static displacement loading ($u$). The design objective is to optimally distribute titanium, bronze, nickel--chromium, and steel, whose Kirchhoff stress--Lagrangian strain ($\tau_{11}$--$E_{11}$) curves are shown in Fig. \ref{Fig: Bumper-Part 1}(b), to maximize the total energy --- encompassing both elastic energy absorption and plastic energy dissipation --- as shown in Fig. \ref{Fig: Bumper-Part 1}(c).

We begin by optimizing bi-material (titanium and bronze) bumpers in 3D, and FEA and optimization setups are as follows. The design domain is discretized using an unstructured mesh consisting of 138,400 first-order hexahedral elements. The filter radius is set to $R_\zeta = 40$ mm for $\zeta \in \{\rho, \xi_1, \ldots, \xi_{N^\xi} \}$. The Heaviside sharpness parameter, $\beta_\zeta$, is initially set to 1 and is doubled every 40 optimization iterations starting from iteration 41 until it reaches a maximum value of $\beta_\zeta = 512$. The penalty parameters for the density variable are fixed at $p_\kappa = p_\mu = p_h = 3$ and $p_k = 2.5$, while the penalty parameter for the material variables, $p_\xi$, starts at 1 and increases by 0.25 every 40 optimization iterations from iteration 41, up to a maximum of $p_\xi = 3$. The weighting factors in \eqref{Multi-objective Function} are set as $w_\texttt{stiff} = w_\texttt{force} = 0$ and $w_\texttt{energy} = 1$. The constraints include the total material volume ($g_{V0} \leq 0$) with an upper bound of $\overline{V} = 0.2$, and individual material volumes ($g_{V1} \leq 0$ and $g_{V2} \leq 0$) with upper bounds $\overline{V}_1 = \overline{V}_2 = 0.1$ to account for material availability. The maximum number of optimization iterations is set to 600. During the topology optimization process, we use 10 uniform load steps in the FEA. Following optimization, we refine the analysis by using 50 load steps to evaluate the elastoplastic responses of all designs.

\begin{figure}[!htbp]
    \centering
    \includegraphics[height=13.5cm]{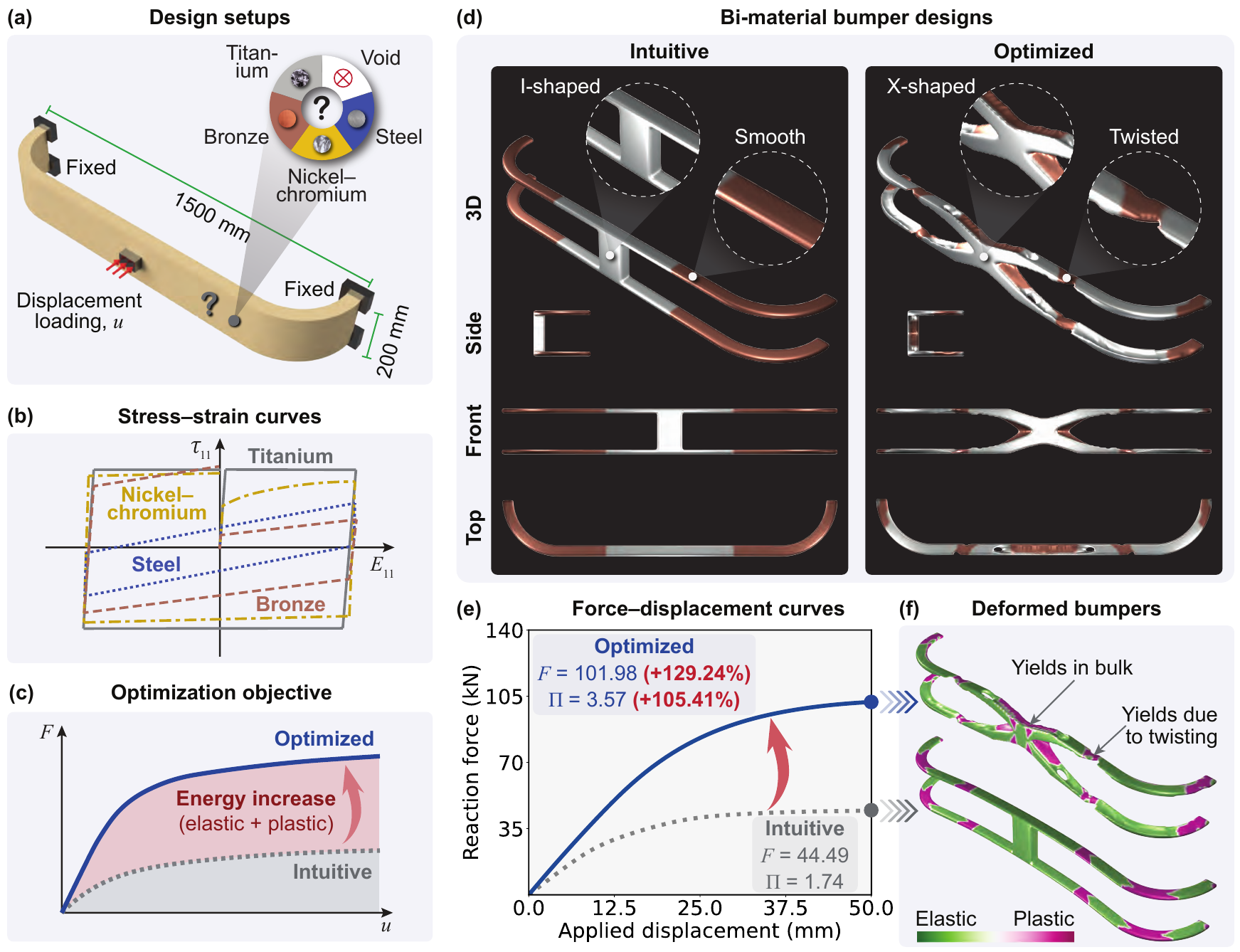}
    \caption{Design and optimization of 3D bumpers. (a) Design setups: the design domain, boundary conditions, and candidate materials (titanium, bronze, nickel--chromium, steel, and void). The variable $u$ is the applied displacement. (b) Uniaxial Kirchhoff stress--Lagrangian strain ($\tau_{11}$--$E_{11}$) curves of candidate materials. (c) Optimization objective: maximizing the total energy. The variable $F$ is the reaction force. (d) Various views of the intuitive and optimized bi-material bumper designs. (e) Force--displacement ($F$--$u$) curves. The force ($F$) is in kN, and the total energy ($\Pi$) is in kN$\cdot$m. The percentages are the improved values compared to the intuitive design. (f) Elastic/plastic regions in the deformed configuration at the final load step.}
    \label{Fig: Bumper-Part 1}
\end{figure}

The bi-material bumper designs are illustrated in Fig. \ref{Fig: Bumper-Part 1}(d). For comparison, we include an intuitive design as a reference, which shares the same total and individual material volumes as the optimized bumper. The comparison reveals that the optimized design favors an X-shaped part in the middle, in contrast to the I-shaped structure in the intuitive design. This X-shaped topology shortens load paths, thereby increasing structural stiffness and enhancing elastic energy absorption. Additionally, compared to the smooth members of the intuitive design, the optimized bumper features non-smooth, twisted regions that effectively concentrate stress, promoting localized material yielding and greater plastic energy dissipation.

Beyond the advantageous structural geometries, the optimized bumper also strategically distributes the material phases. By analyzing the material distribution in Fig. \ref{Fig: Bumper-Part 1}(d) alongside the elastic/plastic regions shown in Fig. \ref{Fig: Bumper-Part 1}(f), it is evident that the optimized design positions titanium primarily in elastic regions and bronze in plastic regions. This distribution aligns with the material properties illustrated in Fig. \ref{Fig: Bumper-Part 1}(b). Titanium, with its higher yield strength, remains in the elastic deformation regime, providing substantial elastic energy absorption. In contrast, bronze, with its lower yield strength, undergoes plastic deformation. Despite its lower initial strength, bronze exhibits isotropic hardening, allowing its Kirchhoff stress to increase progressively under loading. This behavior contrasts with the perfect plasticity (no hardening) of titanium, enabling bronze to dissipate certain plastic energy.

The combined effects of tailored structural geometries and material phases contribute to the better performance of the optimized bumper compared to the intuitive design. As shown in Fig. \ref{Fig: Bumper-Part 1}(e), the optimized bumper achieves a 105.41\% increase in total energy ($\Pi$), rising from 1.74 to 3.57 kN$\cdot$m, and a 129.24\% improvement in end force, increasing from 44.49 to 101.98 kN. These performance gains are a direct result of the simultaneous optimization of density and material variables enabled by the proposed framework.

\subsubsection{Tri-material and four-material bumpers}

After demonstrating the effectiveness of the proposed framework in optimizing bi-material bumpers in 3D, we now extend the approach to optimize bumpers with more than two candidate materials. Specifically, we optimize tri-material bumpers composed of titanium, bronze, and nickel--chromium. The constraints include the total material volume ($g_{V0} \leq 0$) with an upper bound of $\overline{V} = 0.2$, and individual material volumes ($g_{V1} \leq 0$, $g_{V2} \leq 0$, and $g_{V3} \leq 0$) with equal upper bounds of $\overline{V}_1 = \overline{V}_2 = \overline{V}_3 = \overline{V}/3$.

The intuitive and optimized tri-material bumper designs are illustrated in Fig. \ref{Fig: Bumper-Part 2}(a). Similar to the optimized bi-material design shown in Fig. \ref{Fig: Bumper-Part 1}(d), the optimized tri-material bumper retains the X-shaped structure in the middle, which shortens load paths to enhance structural stiffness. It also exhibits twisted regions that concentrate on plastic deformation. These features enhance both elastic energy absorption and plastic energy dissipation, contributing to the lifted force--displacement ($F$--$u$) curve in Fig. \ref{Fig: Bumper-Part 2}(b) and energy--displacement ($\Pi$--$u$) curve in Fig. \ref{Fig: Bumper-Part 2}(c). Ultimately, the optimized design achieves a 159.72\% increase in end force and a 160.49\% increase in total energy compared to the intuitive design.

\begin{figure}[!htbp]
    \centering
    \includegraphics[height=9.0cm]{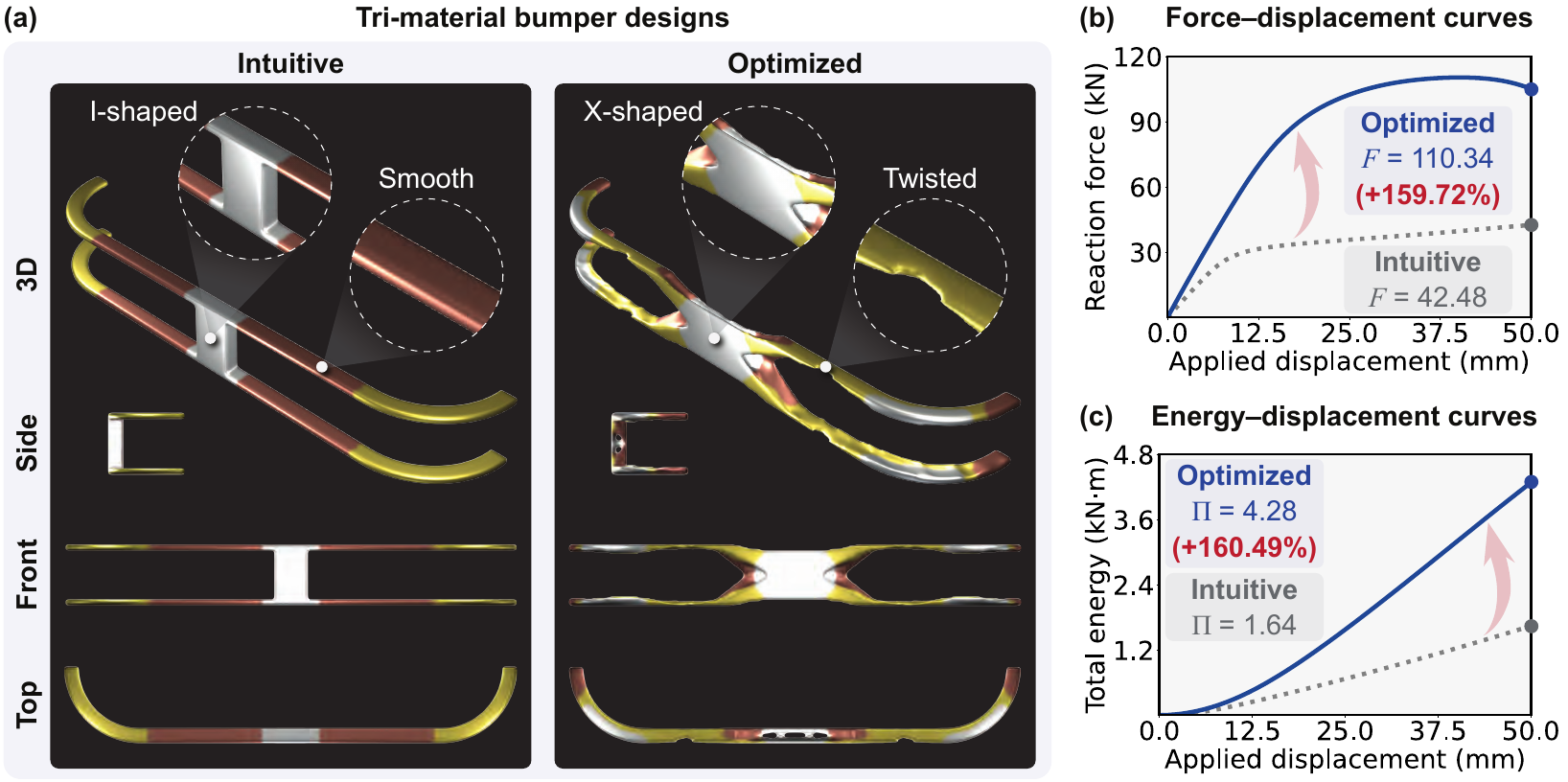}
    \caption{Design and optimization of tri-material bumpers. (a) Various views of the intuitive and optimized designs. (b)--(c) Force--displacement ($F$--$u$) and energy--displacement ($\Pi$--$u$) curves, respectively. The force ($F$) is in kN, and the total energy ($\Pi$) is in kN$\cdot$m. The percentages are the improved values compared to the intuitive design.}
    \label{Fig: Bumper-Part 2}
\end{figure}

We now optimize a four-material bumper by incorporating steel in addition to titanium, bronze, and nickel--chromium. The constraints include the total material volume ($g_{V0} \leq 0$) with an upper bound of $\overline{V} = 0.2$, and individual material volumes ($g_{V1} \leq 0$, $g_{V2} \leq 0$, $g_{V3} \leq 0$, and $g_{V4} \leq 0$) with equal upper bounds of $\overline{V}_1 = \overline{V}_2 = \overline{V}_3 = \overline{V}_4 = 0.05$. The intuitive and optimized designs are shown in Fig. \ref{Fig: Bumper-Part 3}(a), which share the same usage for each material. Compared to the intuitive design, the optimized four-material bumper features a bulky structure in the middle, connected to four twisted regions. This highly non-intuitive configuration strategically exploits material properties: bulk steel material yields around the displacement loading area, while bronze yields near the fixed ends (similar to the elastic/plastic distribution shown in Fig. \ref{Fig: Bumper-Part 1}(f)), which provide most plastic energy dissipation. Meanwhile, titanium and nickel--chromium primarily deform elastically (except in the twisted regions) due to their higher yield strengths, which mainly contribute to elastic energy absorption. Eventually, as shown in Figs. \ref{Fig: Bumper-Part 3}(b) and (c), the end force of the optimized bumper increases by 300.27\% compared to the intuitive design, while the total energy improves by 259.31\%.

Taking a broader view of the optimized bi-material, tri-material, and four-material bumpers, we draw several observations as follows. First, the proposed topology optimization framework is highly versatile, accommodating a wide range of design problems involving finite strain elastoplasticity, regardless of dimensionality, material hardening types, or the number of candidate materials. Second, as the number of materials increases and the design space expands, it is increasingly difficult to create effective elastoplastic structures based on intuition or experience alone. This is evident from the progressively larger total energy increments achieved by the optimized designs compared to the intuitive ones --- 105.41\% in the bi-material case, 160.49\% in the tri-material case, and 259.31\% in the four-material case. In contrast, the proposed topology optimization framework consistently delivers high-performance designs by leveraging rigorous mechanics-based elastoplastic analysis and gradient-based optimization, effectively navigating the complexities of large design spaces.

\begin{figure}[!htbp]
    \centering
    \includegraphics[height=9.0cm]{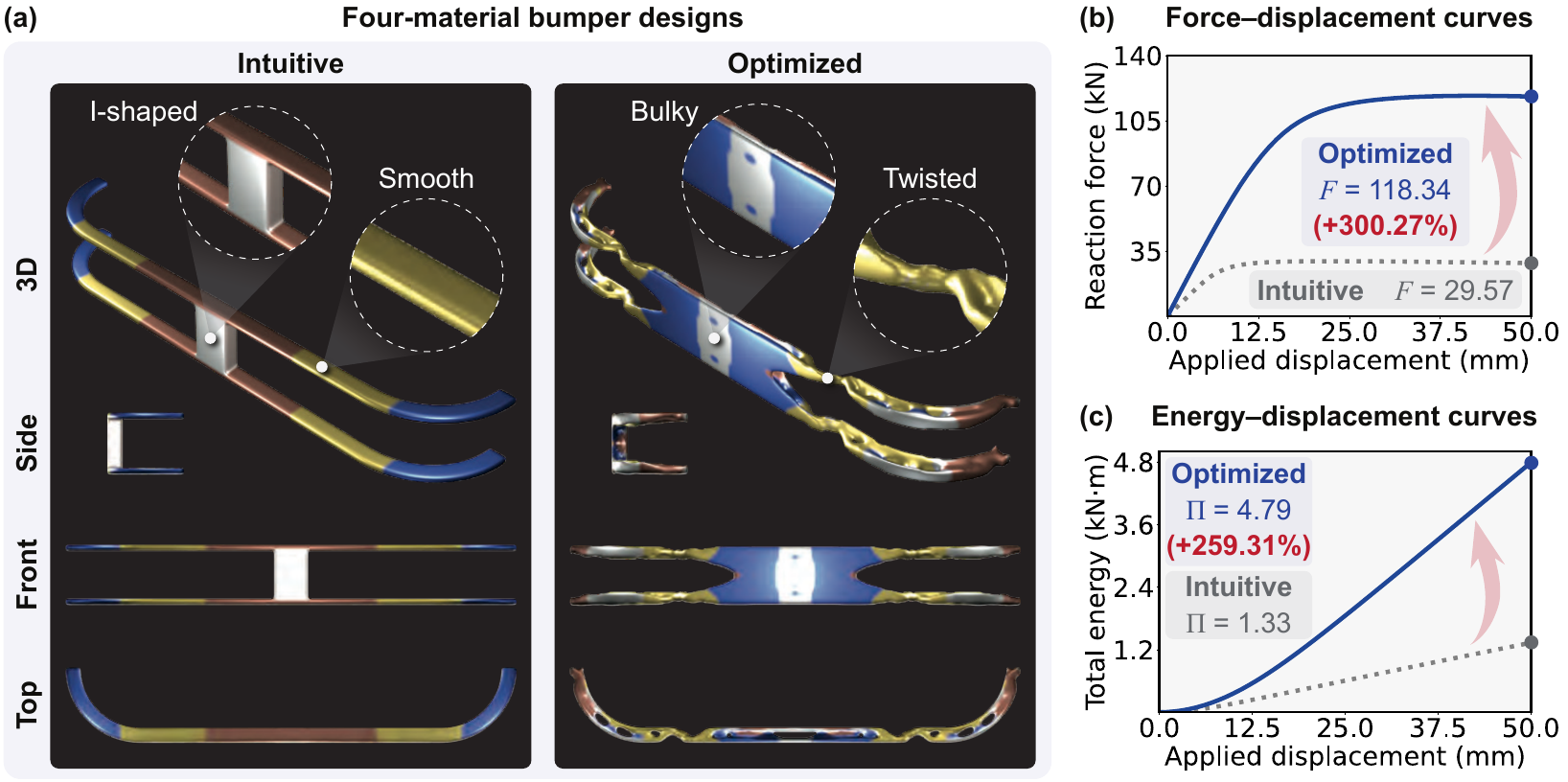}
    \caption{Design and optimization of four-material bumpers. (a) Various views of the intuitive and optimized designs. (b)--(c) Force--displacement ($F$--$u$) and energy--displacement ($\Pi$--$u$) curves, respectively. The force ($F$) is in kN, and the total energy ($\Pi$) is in kN$\cdot$m. The percentages are the improved values compared to the intuitive design.}
    \label{Fig: Bumper-Part 3}
\end{figure}

\subsection{Cold working profiled sheets with maximized load-bearing capacity and multiple engineering constraints}

In this final example, we highlight the full potential of the proposed framework in optimizing elastoplastic responses involving ultra-large deformations and complex load histories. Furthermore, we demonstrate the versatility and generality of the framework in bridging the gap between mechanical design and practical considerations such as cost, lightweight, and sustainability.

\subsubsection{Multi-stage topology optimization of profiled sheets}

To illustrate the optimization of large elastoplastic deformations and complex load histories, we design profiled sheets as depicted in Fig. \ref{Fig: Sheet-Part 1}. These sheets are manufactured by bending a flat metallic plate at room temperature, a process commonly known as cold working or metal-forming \citep{gearing_plasticity_2001, cvitanic_finite_2008}. The resulting corrugated shapes of the profiled sheets provide both mechanical (e.g., increased area moment of inertia and enhanced material strength due to strain hardening) and non-mechanical (e.g., improved rainwater drainage and better architectural aesthetics) advantages. These attributes make profiled sheets widely applicable in various domains \citep{wright_use_1987}, such as rooftops and walls in structural engineering.

To design such profiled sheets, we begin with a raw flat sheet, as shown in Fig. \ref{Fig: Sheet-Part 1}(a). Our goal is to optimally distribute titanium, bronze, nickel--chromium, and steel (material properties are provided in Table \ref{Table: Material Properties} and Fig. \ref{Fig: Bumper-Part 1}(b)) in the processing (metal-forming) stage while maximizing the end force during the service (load-carrying) stage (Fig. \ref{Fig: Sheet-Part 1}(b)).

The processing and service stages of the profiled sheets are detailed in Fig. \ref{Fig: Sheet-Part 1}(c). During the processing stage, the flat metallic sheet is bent along the short edge to achieve the desired corrugated profile. In the service stage, the sheet undergoes bending along its long edge under new boundary conditions to simulate practical usage. These stages together form the whole-life analysis of the profiled sheet. Notably, unlike traditional topology optimization problems where design variables and objectives belong to the same stage, in this case, the design variables (material distribution) are defined in the processing stage, while the design objective (maximizing load-carrying capacity) belongs to the service stage. Additionally, the irreversible nature of elastoplasticity introduces history dependence both within each stage (where only boundary values, $\overline{\mathbf{q}}$, $\overline{\mathbf{t}}$, and $\overline{\mathbf{u}}$, are updated) and across the stages (where Dirichlet and Neumann boundaries, $\partial \Omega_0^\mathcal{D}$ and $\partial \Omega_0^\mathcal{N}$, respectively, are updated).

To address this complex design problem, we propose a multi-stage topology optimization approach. Within each optimization iteration, boundary conditions are updated to account for different stages of the profiled sheet --- process and service stages; or more precisely, raw, cold-worked, undeformed, and deformed stages in Fig. \ref{Fig: Sheet-Part 1}(c). Note that the initial state variables ($\mathbf{u}$, $\overline{\mathbf{b}}^\texttt{e}$, $\overline{\boldsymbol{\beta}}$, and $\alpha$) of each stage inherit the converged values from the previous stage to capture history dependence. This multi-stage topology optimization approach is naturally accessible through the proposed framework thanks to the comprehensive history-dependent sensitivity analysis in \ref{Sec: Sensitivity Analysis and Verification}. Consequently, no modifications to the framework are required, and we proceed directly to solving the design problem using the following FEA and optimization parameters.

The design domain is discretized using a structured mesh consisting of 28,800 first-order hexahedral elements. The filter radius is set to $R_\zeta = 20$ mm for $\zeta \in \{\rho, \xi_1, \ldots, \xi_{N^\xi} \}$. The Heaviside sharpness parameter, $\beta_\zeta$, is initially set to 1 and doubles every 40 optimization iterations starting from iteration 41 until it reaches a maximum value of $\beta_\zeta = 256$. The penalty parameters for the density variable are fixed at $p_\kappa = p_\mu = p_h = p_k = 3$, while the penalty parameter for the material variables, $p_\xi$, starts at 1 and increases by 0.5 every 60 optimization iterations from iteration 41, up to a maximum of $p_\xi = 4$. The weighting factors in \eqref{Multi-objective Function} are set as $w_\texttt{stiff} = w_\texttt{energy} = 0$ and $w_\texttt{force} = 1$. The constraints include the individual material volumes ($g_{V1} \leq 0$, $g_{V2} \leq 0$, $g_{V3} \leq 0$, and $g_{V4} \leq 0$) with equal upper bounds $\overline{V}_1 = \overline{V}_2 = \overline{V}_3 = \overline{V}_4 = 0.25$ to account for material availability. The maximum number of optimization iterations is set to 400.

During the topology optimization process, 12 non-uniform load steps are used for the processing stage, 16 uniform load steps for the service stage, and 13 steps for transitioning from the processing to service stages to simulate the removal of support and load blocks and restore the sheet to a stress-free state. Following optimization, refined analyses are conducted with 40 load steps for the processing stage, 61 load steps for the service stage, and 13 load steps for the transition stage.

\begin{figure}[!htbp]
    \centering
    \includegraphics[width=17.5cm]{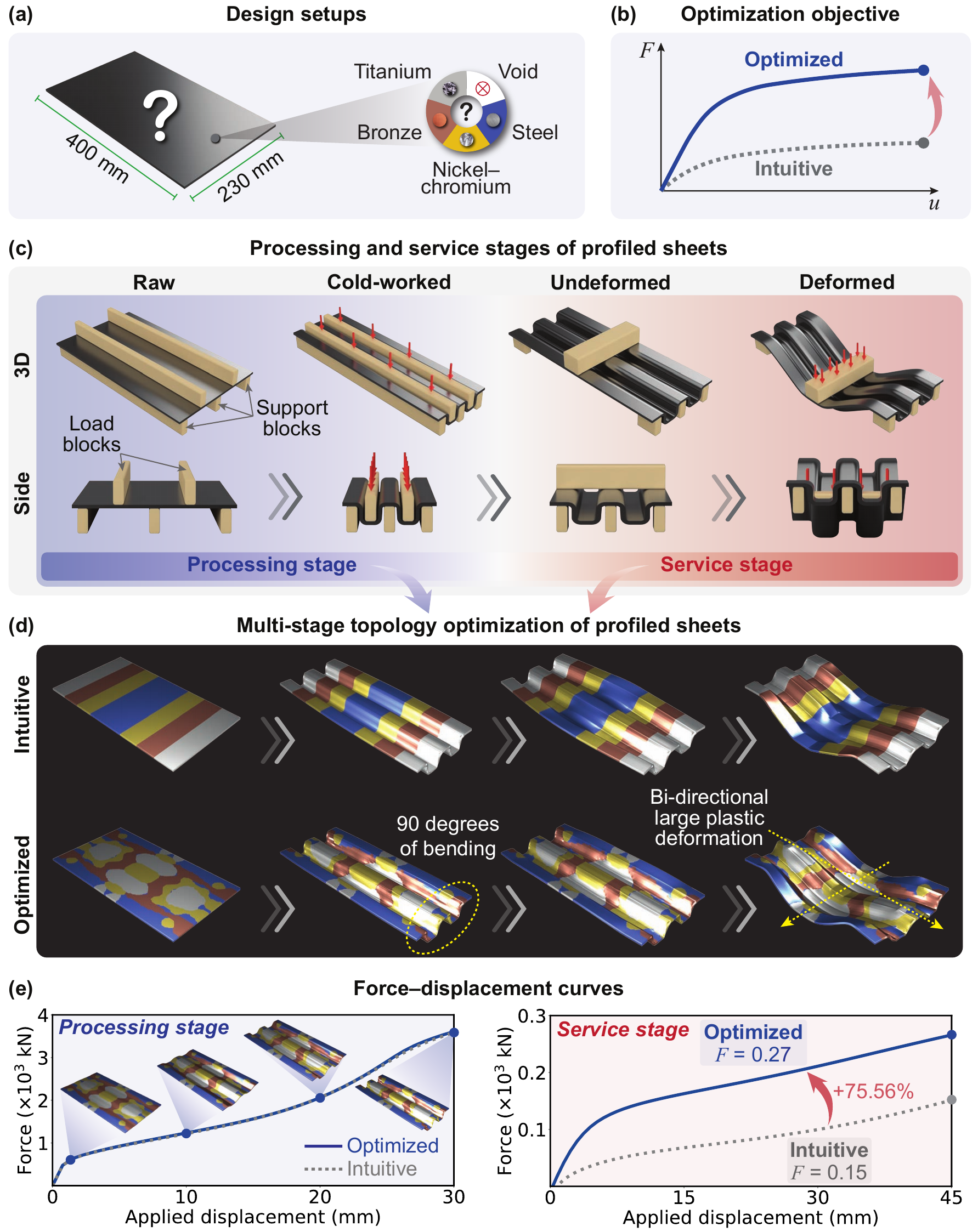}
    \caption{Design and optimization of profiled sheets. (a) Design setups: the design domain and candidate materials (titanium, bronze, nickel--chromium, steel, and void). (b) Optimization objective: maximizing the end force. The variables $F$ and $u$ are the reaction force and applied displacement, respectively. (c) Processing and service stages of profiled sheets. (d) Multi-stage topology optimization of profiled sheets. (e) Force--displacement ($F$--$u$) curves of the two stages.}
    \label{Fig: Sheet-Part 1}
\end{figure}

After multi-stage topology optimization, we present the optimized profiled sheet in Fig. \ref{Fig: Sheet-Part 1}(d), alongside an intuitive design with identical material usage for reference. For each design, four snapshots are shown, corresponding to the key stages depicted in Fig. \ref{Fig: Sheet-Part 1}(c). Unlike the serial arrangement of the four candidate materials in the intuitive design, the optimized profiled sheet features an intertwined material distribution that leverages the complementary properties of the materials shown in Fig. \ref{Fig: Bumper-Part 1}(b). Additionally, both designs exhibit large deformations in two directions, which necessitates the proposed framework to account for finite strain elastoplasticity --- an advancement beyond the reach of the infinitesimal strain version in \citet{jia_multimaterial_2025}.

The force--displacement ($F$--$u$) curves for both processing and service stages are compared in Fig. \ref{Fig: Sheet-Part 1}(e). In the processing stage, the force--displacement curves of the two designs are similar; however, in the service stage, the optimized structure exhibits a force--displacement curve that surpasses the intuitive design, achieving a 75.56\% increase in the end force. This performance improvement remarks the effectiveness of the proposed framework in tackling optimization challenges involving large elastoplastic deformations and multiple stages of loading.

\subsubsection{Optimized profiled sheets with practical constraints}

Despite the superior mechanical performance of the optimized profiled sheet endowed by the proposed framework, practical constraints --- such as cost, lightweight, and sustainability --- typically need to be integrated to create a useful mechanical product \citep{kundu2025sustainability}. In this case study, we incorporate these practical constraints into the framework, thereby bridging the gap between elastoplastic design optimization and broader real-world considerations.

To achieve this goal, we replace the individual material volume constraints ($g_{V1} \leq 0$, $g_{V2} \leq 0$, $g_{V3} \leq 0$, and $g_{V4} \leq 0$) imposed on the optimized design in Fig. \ref{Fig: Sheet-Part 2}(d) with practical constraints on price ($g_P \leq 0$), mass density ($g_M \leq 0$), and CO$_2$ footprint ($g_C \leq 0$). These constraints are defined with upper bounds set as the averages of the corresponding properties of the four candidate materials listed in Table \ref{Table: Material Properties}: $\overline{P} = 17.31$ USD/kg, $\overline{M} = 7,340$ kg/m$^3$, and $\overline{C} = 17.64$ kg/kg.

Under these updated constraints, we present the optimized profiled sheets in Fig. \ref{Fig: Sheet-Part 2}(a). Among the four optimized designs, Dsgs. 1--3 are constrained individually by price, mass density, and CO$_2$ footprint, respectively, while Dsg. 4 considers all three constraints simultaneously. These designs feature distinct material distributions (Fig. \ref{Fig: Sheet-Part 2}(a)) and usage patterns (Fig. \ref{Fig: Sheet-Part 2}(b)) as follows.
\begin{itemize}
    \item The price-constrained design (Dsg. 1) primarily incorporates titanium and steel due to their superior initial-strength-to-price ratios of 34.96 and 34.24 MPa$\cdot$kg/USD, respectively, compared to bronze (10.90 MPa$\cdot$kg/USD) and nickel--chromium (17.86 MPa$\cdot$kg/USD). A small amount of bronze appears sparingly, likely due to the strength enhancement achieved through linear isotropic hardening. On the other hand, nickel--chromium usage is minimized due to its highest absolute price.
    \item The weight-constrained design (Dsg. 2) is overwhelmingly dominated by titanium, a result of its optimal combination of the highest initial yield strength and the lowest mass density among the four materials.
    \item In contrast, the CO$_2$-constrained design (Dsg. 3) is largely composed of nickel--chromium. This material is favored because of its medium initial-strength-to-CO$_2$ ratio of 27.11 MPa$\cdot$kg/kg, which lies between the ratios of titanium (21.11 MPa$\cdot$kg/kg), bronze (24.17 MPa$\cdot$kg/kg), and steel (30.54 MPa$\cdot$kg/kg). Furthermore, with nonlinear isotropic hardening, the strength-to-CO$_2$ ratio of nickel--chromium improves to 43.07 MPa$\cdot$kg/kg, making it a suitable option.
    \item Finally, the balanced design (Dsg. 4) leverages similar amounts of all four candidate materials. This distribution reflects the fact that the upper bounds for the price, mass density, and CO$_2$ footprint constraints are averages of the properties of all four materials. As a result, the design naturally balances the contributions of titanium, bronze, nickel--chromium, and steel.
\end{itemize}

\begin{remark}
In addition to the mechanical performance improvements demonstrated in Fig. \ref{Fig: Damper-Part 1}, the incorporation of practical constraints also leads to the automatic inclusion of multiple materials in the optimized designs. This outcome further shows the necessity of a multimaterial topology optimization framework.
\end{remark}

\begin{figure}[!htbp]
    \centering
    \includegraphics[width=18.0cm]{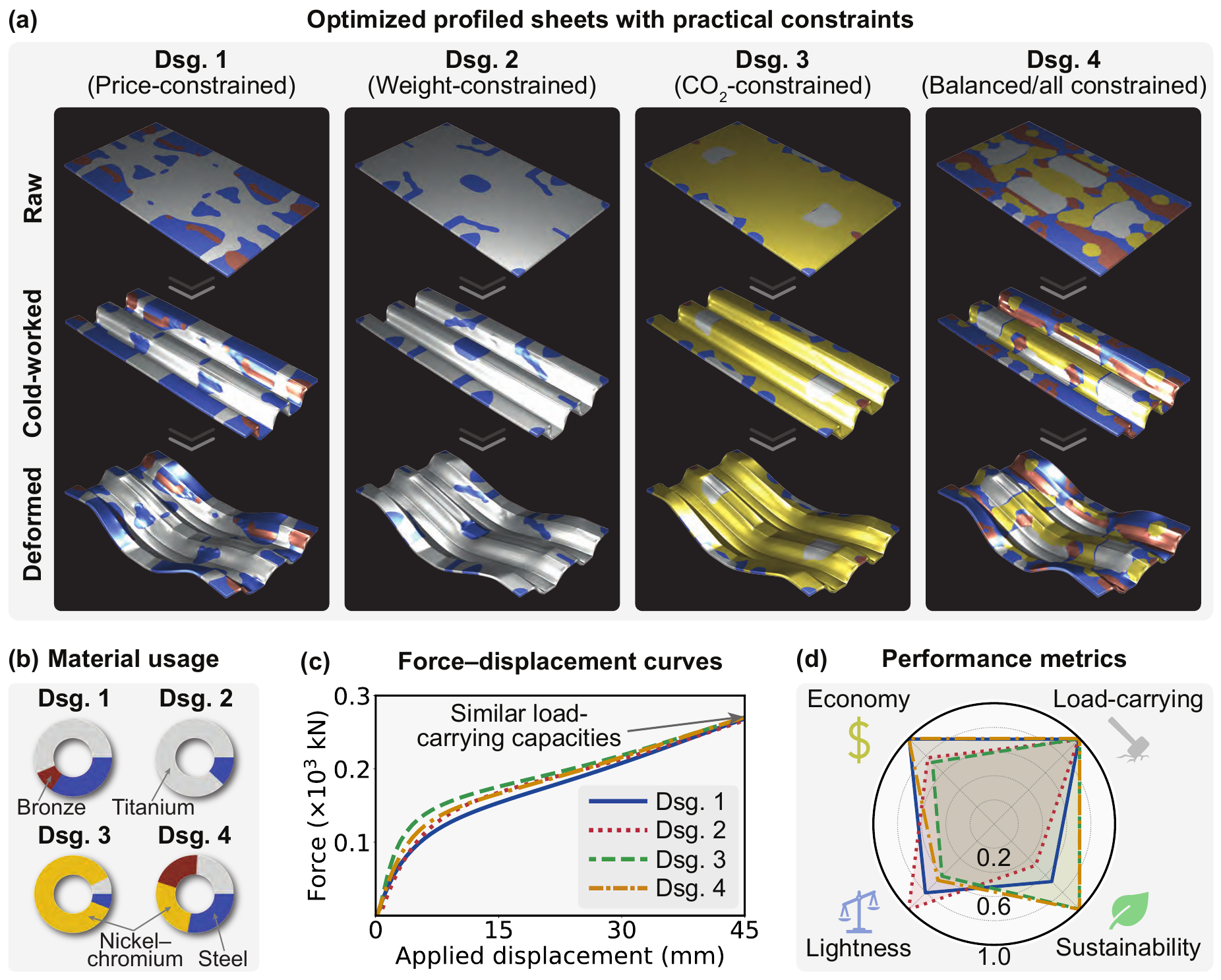}
    \caption{Multi-stage topology optimization of profiled sheets with practical constraints. (a) Optimized profiled sheets with various practical constraints. (b) Material usage of the four optimized profiled sheets. (c) Force--displacement ($F$--$u$) curves of the service stage. (d) Performance metrics of the optimized profiled sheets.}
    \label{Fig: Sheet-Part 2}
\end{figure}
 
We assess the performance of these optimized designs in Fig. \ref{Fig: Sheet-Part 2}(c) and evaluate their performance metrics in Fig. \ref{Fig: Sheet-Part 2}(d). The load-carrying capacity is measured by the end force of the structures, while economic consideration, lightweight, and sustainability are quantified as the reciprocals of the price, mass density, and CO$_2$ footprint, respectively. All metrics are normalized on a 0--1 scale for comparative purposes.

Interestingly, all four designs achieve similar load-carrying capacities (Fig. \ref{Fig: Sheet-Part 2}(c)). This result highlights the non-convex nature of the design problem, where multiple local minima exist. Therefore, practical constraints can be incorporated to tailor the non-mechanical performance of designs without compromising mechanical performance, as illustrated in Fig. \ref{Fig: Sheet-Part 2}(d). With this final generalization of the proposed framework to practical constraints, we conclude all example demonstrations.

\section{Conclusions}
\label{Sec: Conclusions}

In this study, we introduced the theory, method, and application of a multimaterial topology optimization approach for programming elastoplastic responses of structures under large deformations. The framework simultaneously determines the optimal structural geometries and material phases, leveraging a mechanics-based finite strain elastoplasticity theory that rigorously ensures isochoric plastic flow. Furthermore, the framework integrates a comprehensive path-dependent sensitivity analysis using the reversed adjoint method and automatic differentiation, enabling gradient-based optimization of design variables.

To demonstrate the effectiveness of the proposed framework, we presented four real-world application examples and uncovered the mechanisms to achieve target behaviors. First, we optimized energy-dissipating dampers, demonstrating the superior energy dissipation performance of the optimized designs compared to intuitive configurations under various loading conditions, including half-cycle, full-cycle, and multiple-cycle scenarios. This example also highlighted the transition from kinematic to isotropic hardening with increasing displacement amplitudes to maximize energy dissipation. Next, we explored the synergistic use of hyperelastic and elastoplastic materials for achieving diverse stiffness--strength interplays of double-clamped beams, showcasing the framework's versatility in handling different material types --- purely hyperelastic, purely elastoplastic, or mixed. In a further example, we extended the optimization to 3D by designing impact-resistant bumpers, while illustrating the capability to handle more than two candidate materials. Finally, we demonstrated multi-stage topology optimization for profiled sheets, focusing on maximizing load-carrying capacity under ultra-large deformations. This example also incorporated practical constraints, including cost, lightweight, and sustainability, bridging the gap between elastoplastic design and real-world considerations.

Across these examples, the proposed framework demonstrated its ability to optimize stiffness, strength, and effective structural toughness for elastoplastic structures in 2D and 3D across various spatial geometries, material types, hardening behaviors, and candidate material combinations. By fully exploiting the potential of elastoplasticity, this framework represents a step forward in designing the next generation of engineering structures. Looking ahead, we aspire to further generalize this framework to account for rate dependence and pressure dependence, which remain an ongoing focus of our research.

\section*{CRediT authorship contribution statement}
\textbf{Yingqi Jia}: Conceptualization, Methodology, Software, Validation, Formal analysis, Investigation, Data curation, Writing--original draft, Writing--review \& editing, Visualization. \textbf{Xiaojia Shelly Zhang}: Conceptualization, Methodology, Investigation, Resources, Writing--original draft, Writing--review \& editing, Supervision, Project administration, Funding acquisition.

\section*{Declaration of competing interest}
The authors declare that they have no known competing financial interests or personal relationships that could have appeared to influence the work reported in this paper.

\section*{Acknowledgments}
Authors X.S.Z. and Y.J. are grateful for the support from the U.S. Defense Advanced Research Projects Agency (DARPA) Award HR0011-24-2-0333. The information provided in this paper is the sole opinion of the authors and does not necessarily reflect the view of the sponsoring agencies.

\section*{Distribution statement}
Approved for public release; distribution is unlimited.

\section*{Data availability}
Data will be made available on request.

\appendix

\section[]{Updating formulae of $\overline{\mathbf{b}}_{n+1}^\texttt{e}$ that enforce the isochoric plastic flow}
\label{Sec: Derivation of Updating Formula of be_bar}

In this section, we introduce the updating formulae of $\mathbf{b}_{n+1}^\texttt{e}$ in \eqref{Updated be_bar} that enforce the isochoric plastic flow ($J_{n+1}^\texttt{p} = 1$). We recall the definitions of $\mathbf{b}_{n+1}^\texttt{e}$ in \eqref{Elastic Part of b} and $\mathbf{C}_{n+1}^\texttt{p}$ in \eqref{Plastic Part of C} and derive
\begin{equation*}
    \mathbf{b}_{n+1}^\texttt{e} = \mathbf{F}_{n+1} (\mathbf{C}_{n+1}^\texttt{p})^{-1} \mathbf{F}_{n+1}^\top.
\end{equation*}
The isochoric plastic flow then requires
\begin{equation} \label{Isochoric Plastic Flow}
    J_{n+1}^\texttt{p} = 1
    \quad \Longleftrightarrow \quad
    \det(\mathbf{b}_{n+1}^\texttt{e}) = J_{n+1}^2
    \quad \Longleftrightarrow \quad
    \det \left( J_{n+1}^{-2/3} \mathbf{b}_{n+1}^\texttt{e} \right) = 1
    \quad \Longleftrightarrow \quad
    \det \left( \overline{\mathbf{b}}_{n+1}^\texttt{e} \right) = 1.
\end{equation}
Additionally, we rewrite $\eqref{Local Governing Equations}_1$ as
\begin{equation*}
    \overline{\mathbf{b}}_{n+1}^\texttt{e}
    = \overline{\mathbf{b}}_{n+1}^\texttt{e,tr}
    - \dfrac{2 \overline{\overline{\mu}}_{n+1}^\texttt{tr}}{\mu} \widehat{\gamma}_{n+1} \mathbf{n}_{n+1}
\end{equation*}
based on Proposition \ref{Compute mu_bar and mu_bar_bar}. This relationship determines the deviatoric part of $\overline{\mathbf{b}}_{n+1}^\texttt{e}$ as
\begin{equation} \label{Deviatoric be_bar}
    \text{dev} \left( \overline{\mathbf{b}}_{n+1}^\texttt{e} \right) = \text{dev} \left( \overline{\mathbf{b}}_{n+1}^\texttt{e,tr} \right) - \dfrac{2 \overline{\overline{\mu}}_{n+1}^\texttt{tr}}{\mu} \widehat{\gamma}_{n+1} \mathbf{n}_{n+1}
\end{equation}
where we remark $\mathbf{n}_{n+1}$ is deviatoric based on its definition in $\eqref{Local Governing Equations}_6$. Based on \eqref{Isochoric Plastic Flow} and \eqref{Deviatoric be_bar}, the enforcement of the isochoric plastic flow amounts to identifying the volumetric part of $\overline{\mathbf{b}}_{n+1}^\texttt{e}$ --- or more specifically, its first invariant, $\mathcal{I}_1 = \text{tr} (\overline{\mathbf{b}}_{n+1}^\texttt{e})$ --- such that $\mathcal{I}_3 = \det(\overline{\mathbf{b}}_{n+1}^\texttt{e}) = 1$ for given $\text{dev} (\overline{\mathbf{b}}_{n+1}^\texttt{e})$. We now determine $\mathcal{I}_1$ by extending the analysis in \citet{simo_associative_1992}.

Note that the invariants $\mathcal{I}_1$, $\mathcal{I}_3$, $\mathcal{J}_2$, and $\mathcal{J}_3$ of a symmetric second-order tensor satisfy
\begin{equation*}
    \dfrac{\mathcal{I}_1^3}{27} - \dfrac{\mathcal{I}_1 \mathcal{J}_2}{3} + \mathcal{J}_3 - \mathcal{I}_3 = 0
\end{equation*}
where $\mathcal{J}_2$ and $\mathcal{J}_3$ are defined as
\begin{equation*}
    \mathcal{J}_2 = \dfrac{1}{2} \left\lVert \text{dev} \left( \overline{\mathbf{b}}_{n+1}^\texttt{e} \right) \right\rVert^2
    \quad \text{and} \quad
    \mathcal{J}_3 = \det \left[ \text{dev} \left( \overline{\mathbf{b}}_{n+1}^\texttt{e} \right) \right].
\end{equation*}
Consequently, a necessary condition of $\mathcal{I}_3 = \det(\overline{\mathbf{b}}_{n+1}^\texttt{e}) = 1$ reads as
\begin{equation} \label{Cubic Equation}
    t^3 + \mathcal{P} t + \mathcal{Q} = 0
\end{equation}
where
\begin{equation*}
    t = \dfrac{\mathcal{I}_1}{3} > 0, \quad
    \mathcal{P} = - \mathcal{J}_2 \leq 0, \quad \text{and} \quad
    \mathcal{Q} = \mathcal{J}_3 - 1.
\end{equation*}
Our objective then boils down to compute the positive real root from the depressed cubic equation in \eqref{Cubic Equation}.

Following a standard procedure, we compute the discriminant of \eqref{Cubic Equation} as
\begin{equation*}
    \Delta = - \left( \dfrac{\mathcal{P}^3}{27} + \dfrac{\mathcal{Q}^2}{4} \right)
\end{equation*}
and discuss the solutions of \eqref{Cubic Equation} based on the sign of $\Delta$ as follows.
\begin{itemize}
    \item The case of $\Delta < 0$. The equation in \eqref{Cubic Equation} has one positive real root and two non-real complex conjugate roots. We should take the real root as
    \begin{equation*}
        t_1 = \left( -\dfrac{\mathcal{Q}}{2} + \sqrt{-\Delta} \right)^{1/3} + \left( -\dfrac{\mathcal{Q}}{2} - \sqrt{-\Delta} \right)^{1/3}.
    \end{equation*}

    \item The case of $\Delta = 0$. The equation in \eqref{Cubic Equation} has a triple root of zero (when $\mathcal{P} = 0$) or the combination of a simple root of $3\mathcal{Q} / \mathcal{P}$ and a double root of $- 3\mathcal{Q} / 2\mathcal{P}$ (when $\mathcal{P} < 0$). We should take the only positive real root as
    \begin{equation*}
        t_2 = \max \left\{ \dfrac{3 \mathcal{Q}}{\mathcal{P}}, - \dfrac{3 \mathcal{Q}}{2 \mathcal{P}} \right\}.
    \end{equation*}

    \item The case of $\Delta > 0$. The equation in \eqref{Cubic Equation} has three distinct real roots expressed in trigonometric forms as shown in \eqref{Trigonometric Solutions}. Based on Descartes' rule of signs, only one solution among $r_1$, $r_2$, and $r_3$ is positive when $\mathcal{Q} \leq 0$, and we should set the solution to \eqref{Cubic Equation} as
    \begin{equation*}
        t_3 = \max\{ r_1, r_2, r_3\}.
    \end{equation*}
    Otherwise ($\mathcal{Q} > 0$), two solutions among $r_1$, $r_2$, and $r_3$ are positive, and we should select the solution such that $\overline{\mathbf{b}}_{n+1}^\texttt{e}$ is positive definite by ensuring
    \begin{equation*}
        \left( \overline{\mathbf{b}}_{n+1}^\texttt{e} \right)_{11} > 0
        \quad \text{and} \quad
        \left( \overline{\mathbf{b}}_{n+1}^\texttt{e} \right)_{11} \left( \overline{\mathbf{b}}_{n+1}^\texttt{e} \right)_{22} - \left( \overline{\mathbf{b}}_{n+1}^\texttt{e} \right)_{12} \left( \overline{\mathbf{b}}_{n+1}^\texttt{e} \right)_{21} > 0
    \end{equation*}
    based on the Sylvester’s criterion.
\end{itemize}

Finally, combining the solutions from all scenarios ($\Delta < 0$, $\Delta = 0$, and $\Delta > 0$), we derive the updating formulae of $\overline{\mathbf{b}}_{n+1}^\texttt{e}$ in \eqref{Updated be_bar}. We remark that \citet{simo_associative_1992} only consider the scenario of $\Delta < 0$ by assuming $\mathcal{J}_2 \ll |1-\mathcal{J}_3|$. Here we remove this assumption and incorporate the scenarios of $\Delta = 0$ and $\Delta > 0$ to account for a broader type of material responses.

\section{Derivation of the second elastoplastic moduli}
\label{Sec: Second Elastoplastic Moduli}

This section derives the second elastoplastic moduli in \eqref{Second Elastoplastic Moduli Final}. Taking the derivative of the second Piola--Kirchhoff stress ($\mathbf{S}_{n+1}$) in \eqref{Second PK Stress} with respect to the Lagrangian strain tensor ($\mathbf{E}_{n+1}$) in $\eqref{Finite Strain Tensors}$ yields
\begin{equation} \label{Second Elastoplastic Moduli Initial}
    \begin{array}{ll}
        \mathbb{C}_{n+1}^\texttt{ep}
        = \dfrac{\partial \mathbf{S}_{n+1}}{\partial \mathbf{E}_{n+1}}
        = 2 \dfrac{\partial \mathbf{S}_{n+1}}{\partial \mathbf{C}_{n+1}}
        = & 2 \dfrac{\partial \varphi^*(\boldsymbol{\tau}_{n+1}^\texttt{vol})}{\partial \mathbf{C}_{n+1}}
        + 2 \dfrac{\partial \varphi^*(\mathbf{s}_{n+1}^\texttt{tr})}{\partial \mathbf{C}_{n+1}}
        - 4 \widehat{\gamma}_{n+1} \varphi^*(\mathbf{n}_{n+1}) \otimes \dfrac{\partial \overline{\overline{\mu}}_{n+1}^\texttt{tr}}{\partial \mathbf{C}_{n+1}} \\[12pt]

        &- 4 \overline{\overline{\mu}}_{n+1}^\texttt{tr} \varphi^* (\mathbf{n}_{n+1}) \otimes \dfrac{\partial \widehat{\gamma}_{n+1}}{\partial \mathbf{C}_{n+1}}
        - 4 \overline{\overline{\mu}}_{n+1}^\texttt{tr} \widehat{\gamma}_{n+1} \dfrac{\partial \varphi^*(\mathbf{n}_{n+1})}{\partial \mathbf{C}_{n+1}}.
    \end{array}
\end{equation}
One immediate next step is to compute the partial derivatives involved in $\mathbb{C}_{n+1}^\texttt{ep}$, which are laid out below.

\subsection[]{The computation of $\partial \varphi^*(\boldsymbol{\tau}_{n+1}^\texttt{vol}) / \partial \mathbf{C}_{n+1}$}

Note that
\begin{equation} \label{Partial tau Partial C Initial}
    \dfrac{\partial \varphi^*(\boldsymbol{\tau}_{n+1}^\texttt{vol})}{\partial \mathbf{C}_{n+1}}
    = \dfrac{\partial \left( J_{n+1} U'_{n+1} \mathbf{C}_{n+1}^{-1} \right)}{\partial \mathbf{C}_{n+1}}
    = \mathbf{C}_{n+1}^{-1} \otimes \dfrac{\partial (J_{n+1} U'_{n+1})}{\partial \mathbf{C}_{n+1}} 
    + (J_{n+1} U'_{n+1}) \dfrac{\partial \mathbf{C}_{n+1}^{-1}}{\partial \mathbf{C}_{n+1}}.
\end{equation}
Substituting
\begin{equation*}
    \dfrac{\partial (J_{n+1} U'_{n+1})}{\partial \mathbf{C}_{n+1}}
    = \dfrac{\partial (J_{n+1} U'_{n+1})}{\partial J_{n+1}}
    \dfrac{\partial J_{n+1}}{\partial \mathbf{C}_{n+1}}
    = (J_{n+1} U'_{n+1})' \left( \dfrac{1}{2} J_{n+1} \mathbf{C}_{n+1}^{-1} \right)
\end{equation*}
and
\begin{equation*}
    \dfrac{\partial \mathbf{C}_{n+1}^{-1}}{\partial \mathbf{C}_{n+1}}
    = - \mathbb{I}_{\mathbf{C}_{n+1}^{-1}}
    \quad \text{with} \quad
    \left( \mathbb{I}_{\mathbf{C}_{n+1}^{-1}} \right)_{ijkl} = \dfrac{1}{2} \left[ \left( C_{n+1}^{-1} \right)_{ik} \left( C_{n+1}^{-1} \right)_{jl} + \left( C_{n+1}^{-1} \right)_{il} \left( C_{n+1}^{-1} \right)_{jk} \right]
\end{equation*}
into \eqref{Partial tau Partial C Initial} renders
\begin{equation} \label{Partial tau Partial C Final}
    \dfrac{\partial \varphi^*(\boldsymbol{\tau}_{n+1}^\texttt{vol})}{\partial \mathbf{C}_{n+1}}
    = \dfrac{1}{2} J_{n+1} (J_{n+1} U'_{n+1})' \mathbf{C}_{n+1}^{-1} \otimes \mathbf{C}_{n+1}^{-1}
    - J_{n+1} U'_{n+1} \mathbb{I}_{\mathbf{C}_{n+1}^{-1}}.
\end{equation}

\subsection[]{The computation of $\partial \varphi^*(\mathbf{s}_{n+1}^\texttt{tr}) / \partial \mathbf{C}_{n+1}$} 

Note that
\begin{equation} \label{be-bar Trial}
    \begin{array}{ll}
        \overline{\mathbf{b}}_{n+1}^\texttt{e,tr}
        &= \overline{\mathbf{f}}_{n+1} \overline{\mathbf{b}}_n^\texttt{e} \overline{\mathbf{f}}_{n+1}^\top
        = \left[ \left( J_{n+1}^f \right)^{-1/3} \mathbf{f}_{n+1} \right]
        \left[ \left( J_{n}^\texttt{e} \right)^{-2/3} \mathbf{b}_n^\texttt{e} \right]
        \left[ \left( J_{n+1}^f \right)^{-1/3} \mathbf{f}_{n+1}^\top \right] \\[10pt]

        &= \left( \dfrac{J_{n+1} J_n^\texttt{e}}{J_n} \right)^{-2/3} \left( \mathbf{F}_{n+1} \mathbf{F}_n^{-1} \right) 
        \left( \mathbf{F}_n^\texttt{e} (\mathbf{F}_n^\texttt{e})^\top \right)
        \left( \mathbf{F}_n^{-\top}  \mathbf{F}_{n+1}^\top \right) \\[10pt]

        & = J_{n+1}^{-2/3} \mathbf{F}_{n+1} \left( \mathbf{F}_n^\texttt{p} \right)^{-1} \left( \mathbf{F}_n^\texttt{p} \right)^{-\top} \mathbf{F}_{n+1}^\top
        = J_{n+1}^{-2/3} \mathbf{F}_{n+1} \left( \mathbf{C}_n^\texttt{p} \right)^{-1} \mathbf{F}_{n+1}^\top
    \end{array}
\end{equation}
due to $J_n = J_n^\texttt{e}$ (assuming isochoric plastic flow) and $J_{n+1} = J_{n+1}^f J_n$. We recall
\begin{equation*}
    \begin{array}{ll}
        \mathbf{s}_{n+1}^\texttt{tr} &= \mu\ \text{dev} \left( \overline{\mathbf{b}}_{n+1}^\texttt{e,tr} \right)
        = \mu J_{n+1}^{-2/3} \text{dev} \left[ \mathbf{F}_{n+1} \left( \mathbf{C}_n^\texttt{p} \right)^{-1} \mathbf{F}_{n+1}^\top \right] \\[10pt]
        &= \mu J_{n+1}^{-2/3} \left\{ \mathbf{F}_{n+1} \left( \mathbf{C}_n^\texttt{p} \right)^{-1} \mathbf{F}_{n+1}^\top
        - \dfrac{1}{3} \text{tr} \left[ \mathbf{F}_{n+1} \left( \mathbf{C}_n^\texttt{p} \right)^{-1} \mathbf{F}_{n+1}^\top \right] \mathbf{I} \right\} \\[10pt]
        &= \mu J_{n+1}^{-2/3} \left\{ \mathbf{F}_{n+1} \left( \mathbf{C}_n^\texttt{p} \right)^{-1} \mathbf{F}_{n+1}^\top
        - \dfrac{1}{3} \left[ \left( \mathbf{C}_n^\texttt{p} \right)^{-1} : \mathbf{C}_{n+1} \right] \mathbf{I} \right\}
    \end{array}
\end{equation*}
and derive
\begin{equation*}
    \varphi^*(\mathbf{s}_{n+1}^\texttt{tr})
    = \mathbf{F}_{n+1}^{-1} \mathbf{s}_{n+1}^\texttt{tr} \mathbf{F}_{n+1}^{-\top}
    = \mu J_{n+1}^{-2/3} \left\{ \left( \mathbf{C}_n^\texttt{p} \right)^{-1}
    - \dfrac{1}{3} \left[ \left( \mathbf{C}_n^\texttt{p} \right)^{-1} : \mathbf{C}_{n+1} \right] \mathbf{C}_{n+1}^{-1} \right\}.
\end{equation*}

Next, we compute
\begin{equation} \label{Partial s-trial-forward Partial C Initial}
    \dfrac{\partial \varphi^*(\mathbf{s}_{n+1}^\texttt{tr})}{\partial \mathbf{C}_{n+1}}
    = \mu \left\{ \left( \mathbf{C}_n^\texttt{p} \right)^{-1}
    - \dfrac{1}{3} \left[ \left( \mathbf{C}_n^\texttt{p} \right)^{-1} : \mathbf{C}_{n+1} \right] \mathbf{C}_{n+1}^{-1} \right\} \dfrac{\partial J_{n+1}^{-2/3}}{\partial \mathbf{C}_{n+1}}
    - \dfrac{\mu}{3} J_{n+1}^{-2/3} \dfrac{\partial}{\partial \mathbf{C}_{n+1}} \left\{ \left[ \left( \mathbf{C}_n^\texttt{p} \right)^{-1} : \mathbf{C}_{n+1} \right] \mathbf{C}_{n+1}^{-1} \right\}.
\end{equation}
Substituting
\begin{equation} \label{Partial J^(-2/3) Partial C}
    \dfrac{\partial J_{n+1}^{-2/3}}{\partial \mathbf{C}_{n+1}}
    = - \dfrac{2}{3} J_{n+1}^{-5/3} \dfrac{\partial J_{n+1}}{\partial \mathbf{C}_{n+1}}
    = - \dfrac{2}{3} J_{n+1}^{-5/3} \left( \dfrac{1}{2} J_{n+1} \mathbf{C}_{n+1}^{-1} \right)
    = - \dfrac{1}{3} J_{n+1}^{-2/3} \mathbf{C}_{n+1}^{-1}
\end{equation}
and
\begin{equation*}
    \begin{array}{ll}
        \dfrac{\partial}{\partial \mathbf{C}_{n+1}} \left\{ \left[ \left( \mathbf{C}_n^\texttt{p} \right)^{-1} : \mathbf{C}_{n+1} \right] \mathbf{C}_{n+1}^{-1} \right\}
        &= \mathbf{C}_{n+1}^{-1} \otimes \dfrac{\partial \left[ \left( \mathbf{C}_n^\texttt{p} \right)^{-1} : \mathbf{C}_{n+1} \right]}{\partial \mathbf{C}_{n+1}}
        + \left[ \left( \mathbf{C}_n^\texttt{p} \right)^{-1} : \mathbf{C}_{n+1} \right] \dfrac{\partial \mathbf{C}_{n+1}^{-1}}{\partial \mathbf{C}_{n+1}} \\[12pt]
        & = \mathbf{C}_{n+1}^{-1} \otimes \left( \mathbf{C}_n^\texttt{p} \right)^{-1}
        - \left[ \left( \mathbf{C}_n^\texttt{p} \right)^{-1} : \mathbf{C}_{n+1} \right] \mathbb{I}_{\mathbf{C}_{n+1}^{-1}}
    \end{array}
\end{equation*}
into \eqref{Partial s-trial-forward Partial C Initial} renders
\begin{equation*}
    \begin{array}{ll}
        \dfrac{\partial \varphi^*(\mathbf{s}_{n+1}^\texttt{tr})}{\partial \mathbf{C}_{n+1}}
        = &- \dfrac{\mu}{3} J_{n+1}^{-2/3} \left\{ \left( \mathbf{C}_n^\texttt{p} \right)^{-1}
        - \dfrac{1}{3} \left[ \left( \mathbf{C}_n^\texttt{p} \right)^{-1} : \mathbf{C}_{n+1} \right] \mathbf{C}_{n+1}^{-1} \right\} \otimes \mathbf{C}_{n+1}^{-1} \\[12pt]

         & - \dfrac{\mu}{3} J_{n+1}^{-2/3}
        \left\{ \mathbf{C}_{n+1}^{-1} \otimes \left( \mathbf{C}_n^\texttt{p} \right)^{-1}
        - \left[ \left( \mathbf{C}_n^\texttt{p} \right)^{-1} : \mathbf{C}_{n+1} \right] \mathbb{I}_{\mathbf{C}_{n+1}^{-1}} \right\}.
    \end{array}
\end{equation*}
Regrouping terms yields
\begin{equation} \label{Partial s-trial-forward Partial C Middle}
    \dfrac{\partial \varphi^*(\mathbf{s}_{n+1}^\texttt{tr})}{\partial \mathbf{C}_{n+1}}
    = - \dfrac{\mu}{3} J_{n+1}^{-2/3} \left\{ \left( \mathbf{C}_n^\texttt{p} \right)^{-1} \otimes \mathbf{C}_{n+1}^{-1} + \mathbf{C}_{n+1}^{-1} \otimes \left( \mathbf{C}_n^\texttt{p} \right)^{-1}
    - \left[ \left( \mathbf{C}_n^\texttt{p} \right)^{-1} : \mathbf{C}_{n+1} \right]
    \left( \mathbb{I}_{\mathbf{C}_{n+1}^{-1}} + \dfrac{1}{3} \mathbf{C}_{n+1}^{-1} \otimes \mathbf{C}_{n+1}^{-1} \right) \right\}.
\end{equation}

Based on \eqref{be-bar Trial}, we derive $J_{n+1}^{-2/3} \left( \mathbf{C}_n^\texttt{p} \right)^{-1} = \varphi^* \left( \overline{\mathbf{b}}_{n+1}^\texttt{tr} \right)$ and further compute
\begin{equation*}
    \left\{ \begin{array}{l}
        \mu \varphi^* \left( \overline{\mathbf{b}}_{n+1}^\texttt{tr} \right) \otimes \mathbf{C}_{n+1}^{-1}
        = \varphi^* \left( \mathbf{s}_{n+1}^\texttt{tr} \right) \otimes \mathbf{C}_{n+1}^{-1}
        + \overline{\mu}_{n+1}^\texttt{tr} \mathbf{C}_{n+1}^{-1} \otimes \mathbf{C}_{n+1}^{-1}, \\[12pt]
        
        \mu \mathbf{C}_{n+1}^{-1} \otimes \varphi^* \left( \overline{\mathbf{b}}_{n+1}^\texttt{tr} \right)
        = \mathbf{C}_{n+1}^{-1} \otimes \varphi^* \left( \mathbf{s}_{n+1}^\texttt{tr} \right)
        + \overline{\mu}_{n+1}^\texttt{tr} \mathbf{C}_{n+1}^{-1} \otimes \mathbf{C}_{n+1}^{-1}, \\[12pt]
        
        \dfrac{\mu}{3} \varphi^* \left( \overline{\mathbf{b}}_{n+1}^\texttt{tr} \right) : \mathbf{C}_{n+1} = \dfrac{\mu}{3} \text{tr} \left( \overline{\mathbf{b}}_{n+1}^\texttt{tr} \right) = \overline{\mu}_{n+1}^\texttt{tr}.
    \end{array} \right.
\end{equation*}
We then reduce \eqref{Partial s-trial-forward Partial C Middle} to
\begin{equation} \label{Partial s-trial-forward Partial C Final}
    \dfrac{\partial \varphi^*(\mathbf{s}_{n+1}^\texttt{tr})}{\partial \mathbf{C}_{n+1}}
    = \overline{\mu}_{n+1}^\texttt{tr} \left( \mathbb{I}_{\mathbf{C}_{n+1}^{-1}} - \dfrac{1}{3} \mathbf{C}_{n+1}^{-1} \otimes \mathbf{C}_{n+1}^{-1} \right)
    - \dfrac{1}{3} \left[ \varphi^* \left( \mathbf{s}_{n+1}^\texttt{tr} \right) \otimes \mathbf{C}_{n+1}^{-1}
    + \mathbf{C}_{n+1}^{-1} \otimes \varphi^* \left( \mathbf{s}_{n+1}^\texttt{tr} \right) \right].
\end{equation}

\subsection[]{The computation of $\partial \overline{\overline{\mu}}_{n+1}^\texttt{tr} / \partial \mathbf{C}_{n+1}$} 

Note that
\begin{equation} \label{Partial mu-bar-bar Partial C}
    \dfrac{\partial \overline{\overline{\mu}}_{n+1}^\texttt{tr}}{\partial \mathbf{C}_{n+1}} = \dfrac{\partial \overline{\mu}_{n+1}^\texttt{tr}}{\partial \mathbf{C}_{n+1}}
    - \dfrac{1}{3} \dfrac{\partial}{\partial \mathbf{C}_{n+1}}
    \text{tr} \left( \overline{\mathbf{f}}_{n+1} \overline{\boldsymbol{\beta}}_n \overline{\mathbf{f}}_{n+1}^\top \right).
\end{equation}
According to \eqref{be-bar Trial}, we derive
\begin{equation*}
    \overline{\mu}_{n+1}^\texttt{tr}
    = \dfrac{\mu}{3} \text{tr} \left( \overline{\mathbf{b}}_{n+1}^\texttt{e,tr} \right)
    = \dfrac{\mu}{3} J_{n+1}^{-2/3} \text{tr} \left( \mathbf{F}_{n+1} \left( \mathbf{C}_n^\texttt{p} \right)^{-1} \mathbf{F}_{n+1}^\top \right)
    = \dfrac{\mu}{3} J_{n+1}^{-2/3} \left( \mathbf{C}_n^\texttt{p} \right)^{-1} : \mathbf{C}_{n+1}.
\end{equation*}
Taking the derivative of $\overline{\mu}_{n+1}^\texttt{tr}$ yields
\begin{equation} \label{Partial mu-bar Partial C}
    \begin{array}{ll}
        \dfrac{\partial \overline{\mu}_{n+1}^\texttt{tr}}{\partial \mathbf{C}_{n+1}}
        & = \dfrac{\mu}{3} \left[ \left( \mathbf{C}_n^\texttt{p} \right)^{-1} : \mathbf{C}_{n+1} \right] \dfrac{\partial J_{n+1}^{-2/3}}{\partial \mathbf{C}_{n+1}}
        + \dfrac{\mu}{3} J_{n+1}^{-2/3} \dfrac{\partial}{\partial \mathbf{C}_{n+1}} \left[ \left( \mathbf{C}_n^\texttt{p} \right)^{-1} : \mathbf{C}_{n+1} \right] \\[12pt]
        & = \dfrac{\mu}{3} \left[ \left( \mathbf{C}_n^\texttt{p} \right)^{-1} : \mathbf{C}_{n+1} \right]
        \left( - \dfrac{1}{3} J_{n+1}^{-2/3} \mathbf{C}_{n+1}^{-1} \right)
        + \dfrac{\mu}{3} J_{n+1}^{-2/3} \left( \mathbf{C}_n^\texttt{p} \right)^{-1} \\[12pt]
        &= \dfrac{\mu}{3} J_{n+1}^{-2/3}
        \left\{ \left( \mathbf{C}_n^\texttt{p} \right)^{-1}
        - \dfrac{1}{3} \left[ \left( \mathbf{C}_n^\texttt{p} \right)^{-1} : \mathbf{C}_{n+1} \right] \mathbf{C}_{n+1}^{-1} \right\}
    \end{array}
\end{equation}
based on \eqref{Partial J^(-2/3) Partial C}.

Additionally, we rewrite
\begin{equation} \label{J^(-2/3) F Gamma F}
    \begin{array}{ll}
        \overline{\mathbf{f}}_{n+1} \overline{\boldsymbol{\beta}}_n \overline{\mathbf{f}}_{n+1}^\top &= \left[ \left( J_{n+1}^f \right)^{-1/3} \mathbf{f}_{n+1} \right] \left( J_n^{-2/3} \boldsymbol{\beta}_n \right) \left[ \left( J_{n+1}^f \right)^{-1/3} \mathbf{f}_{n+1}^\top \right] \\[12pt]

        & = J_{n+1}^{-2/3} \left( \mathbf{F}_{n+1} \mathbf{F}_n^{-1} \right)
        \boldsymbol{\beta}_n \left( \mathbf{F}_n^{-\top} \mathbf{F}_{n+1}^\top \right) = J_{n+1}^{-2/3} \mathbf{F}_{n+1} \boldsymbol{\Gamma}_n \mathbf{F}_{n+1}^\top
    \end{array}
\end{equation}
where we define $\boldsymbol{\Gamma}_n = \mathbf{F}_n^{-1} \boldsymbol{\beta}_n \mathbf{F}_n^{-\top}$. Next, we derive
\begin{equation*}
    \text{tr}\left( \overline{\mathbf{f}}_{n+1} \overline{\boldsymbol{\beta}}_n \overline{\mathbf{f}}_{n+1}^\top \right)
    = J_{n+1}^{-2/3} \boldsymbol{\Gamma}_n : \mathbf{C}_{n+1}
\end{equation*}
and
\begin{equation} \label{Partial f-beta-f Partial C}
    \dfrac{\partial}{\partial \mathbf{C}_{n+1}}
    \text{tr} \left( \overline{\mathbf{f}}_{n+1} \overline{\boldsymbol{\beta}}_n \overline{\mathbf{f}}_{n+1}^\top \right) = 
    J_{n+1}^{-2/3}
    \left[ \boldsymbol{\Gamma}_n
    - \dfrac{1}{3} \left( \boldsymbol{\Gamma}_n : \mathbf{C}_{n+1} \right) \mathbf{C}_{n+1}^{-1} \right]
\end{equation}
in a similar vein as \eqref{Partial mu-bar Partial C}. Substituting \eqref{Partial mu-bar Partial C} and \eqref{Partial f-beta-f Partial C} into \eqref{Partial mu-bar-bar Partial C} renders
\begin{equation*}
    \dfrac{\partial \overline{\overline{\mu}}_{n+1}^\texttt{tr}}{\partial \mathbf{C}_{n+1}} = 
    \dfrac{\mu}{3} J_{n+1}^{-2/3}
    \left\{ \left( \mathbf{C}_n^\texttt{p} \right)^{-1}
    - \dfrac{1}{3} \left[ \left( \mathbf{C}_n^\texttt{p} \right)^{-1} : \mathbf{C}_{n+1} \right] \mathbf{C}_{n+1}^{-1} \right\}
    - \dfrac{1}{3} J_{n+1}^{-2/3}
    \left[ \boldsymbol{\Gamma}_n
    - \dfrac{1}{3} \left( \boldsymbol{\Gamma}_n : \mathbf{C}_{n+1} \right) \mathbf{C}_{n+1}^{-1} \right].
\end{equation*}

For simplification, we further apply the push-forward operator ($\varphi_*$) and compute
\begin{equation*}
    \begin{array}{ll}
        \varphi_* \left( \dfrac{\partial \overline{\overline{\mu}}_{n+1}^\texttt{tr}}{\partial \mathbf{C}_{n+1}} \right)
        &= \dfrac{\mu}{3} \left[ \overline{\mathbf{f}}_{n+1} \overline{\mathbf{b}}_n^\texttt{e} \overline{\mathbf{f}}_{n+1}^\top
        - \dfrac{1}{3} \text{tr} \left( \overline{\mathbf{f}}_{n+1} \overline{\mathbf{b}}_n^\texttt{e} \overline{\mathbf{f}}_{n+1}^\top \right) \mathbf{I} \right]
        - \dfrac{1}{3} \left[ \overline{\mathbf{f}}_{n+1} \overline{\boldsymbol{\beta}}_n \overline{\mathbf{f}}_{n+1}^\top - \dfrac{1}{3} \text{tr} \left( \overline{\mathbf{f}}_{n+1} \overline{\boldsymbol{\beta}}_n \overline{\mathbf{f}}_{n+1}^\top \right) \mathbf{I} \right] \\[12pt]
         & = \dfrac{\mu}{3} \text{dev} \left( \overline{\mathbf{f}}_{n+1} \overline{\mathbf{b}}_n^\texttt{e} \overline{\mathbf{f}}_{n+1}^\top \right)
         - \dfrac{1}{3} \text{dev} \left( \overline{\mathbf{f}}_{n+1} \overline{\boldsymbol{\beta}}_n \overline{\mathbf{f}}_{n+1}^\top \right)
         = \dfrac{\mu}{3} \text{dev} \left( \overline{\mathbf{b}}_{n+1}^\texttt{e,tr} \right)
         - \dfrac{1}{3} \text{dev} \left( \overline{\boldsymbol{\beta}}_{n+1}^\texttt{tr} \right) \\[12pt]
         & = \dfrac{1}{3} \left[ \mathbf{s}_{n+1}^\texttt{tr}
         - \text{dev} \left( \overline{\boldsymbol{\beta}}_{n+1}^\texttt{tr} \right) \right]
         = \dfrac{1}{3} \boldsymbol{\xi}_{n+1}^\texttt{tr}
    \end{array}
\end{equation*}
by using \eqref{be-bar Trial} and \eqref{J^(-2/3) F Gamma F}. Finally, we derive
\begin{equation} \label{Partial mu-bar-bar Partial C Final}
    \dfrac{\partial \overline{\overline{\mu}}_{n+1}^\texttt{tr}}{\partial \mathbf{C}_{n+1}} = \dfrac{1}{3} \varphi^* \left( \boldsymbol{\xi}_{n+1}^\texttt{tr} \right). 
\end{equation}

\subsection[]{The computation of $\partial \widehat{\gamma}_{n+1} / \partial \mathbf{C}_{n+1}$} 

Recall the consistency parameter, $\widehat{\gamma}_{n+1}$, is governed by the algebraic equation, $\mathcal{G}(\widehat{\gamma}_{n+1}) = 0$ in \eqref{Function of gamma}. Taking the derivatives on \eqref{Function of gamma} yields
\begin{equation} \label{Partial gamma Partial C Initial}
    \dfrac{\partial \lVert \boldsymbol{\xi}_{n+1}^\texttt{tr} \rVert}{\partial \mathbf{C}_{n+1}}
    - 2 \left( 1 + \dfrac{h}{3\mu} \right) \widehat{\gamma}_{n+1}
    \dfrac{\partial \overline{\overline{\mu}}_{n+1}^\texttt{tr}}{\partial \mathbf{C}_{n+1}}
    - 2 \overline{\overline{\mu}}_{n+1}^\texttt{tr}
    \left( 1 + \dfrac{h}{3\mu} \right) \dfrac{\partial \widehat{\gamma}_{n+1}}{\partial \mathbf{C}_{n+1}}
    - \dfrac{2 k'}{3} \dfrac{\partial \widehat{\gamma}_{n+1}}{\partial \mathbf{C}_{n+1}} = 0,
\end{equation}
and one apparent next step is to compute $\partial \lVert \boldsymbol{\xi}_{n+1}^\texttt{tr} \rVert / \partial \mathbf{C}_{n+1}$. To this end, we consider
\begin{equation*}
    \lVert \boldsymbol{\xi}_{n+1}^\texttt{tr} \rVert^2 = \boldsymbol{\xi}_{n+1}^\texttt{tr}:\boldsymbol{\xi}_{n+1}^\texttt{tr}
    = \varphi_* \left[ \varphi^* (\boldsymbol{\xi}_{n+1}^\texttt{tr}) \right] : \varphi_* \left[ \varphi^* (\boldsymbol{\xi}_{n+1}^\texttt{tr}) \right]
    = \left[ \mathbf{C}_{n+1} \varphi^* (\boldsymbol{\xi}_{n+1}^\texttt{tr}) \right] : \left[ \mathbf{C}_{n+1} \varphi^* (\boldsymbol{\xi}_{n+1}^\texttt{tr}) \right]
\end{equation*}
and derive
\begin{equation} \label{Partial xi-trial-norm-square Partial C}
    \dfrac{\partial \lVert \boldsymbol{\xi}_{n+1}^\texttt{tr} \rVert^2}{\partial \mathbf{C}_{n+1}}
    = 2 \left[ \mathbf{C}_{n+1} \varphi^* (\boldsymbol{\xi}_{n+1}^\texttt{tr}) \mathbf{C}_{n+1} \right] : \dfrac{\partial \varphi^* (\boldsymbol{\xi}_{n+1}^\texttt{tr})}{\partial \mathbf{C}_{n+1}}
    + 2 \varphi^* (\boldsymbol{\xi}_{n+1}^\texttt{tr}) \mathbf{C}_{n+1} \varphi^* (\boldsymbol{\xi}_{n+1}^\texttt{tr}).
\end{equation}

Following the derivation of $\partial \varphi^* (\mathbf{s}_{n+1}^\texttt{tr}) / \partial \mathbf{C}_{n+1}$ from \eqref{Partial s-trial-forward Partial C Final}, we compute
\begin{equation*}
    \dfrac{\partial \varphi^*(\overline{\boldsymbol{\beta}}_{n+1}^\texttt{tr})}{\partial \mathbf{C}_{n+1}}
    = \dfrac{1}{3}\text{tr} \left( \overline{\mathbf{f}}_{n+1} \overline{\boldsymbol{\beta}}_n \overline{\mathbf{f}}_{n+1}^\top \right)
    \left( \mathbb{I}_{\mathbf{C}_{n+1}^{-1}} - \dfrac{1}{3} \mathbf{C}_{n+1}^{-1} \otimes \mathbf{C}_{n+1}^{-1} \right)
    - \dfrac{1}{3} \left[ \varphi^* \left( \overline{\boldsymbol{\beta}}_{n+1}^\texttt{tr} \right) \otimes \mathbf{C}_{n+1}^{-1}
    + \mathbf{C}_{n+1}^{-1} \otimes \varphi^* \left( \overline{\boldsymbol{\beta}}_{n+1}^\texttt{tr} \right) \right]
\end{equation*}
and further derive
\begin{equation} \label{Partial xi-trial-forward Partial C}
    \begin{array}{ll}
        \dfrac{\partial \varphi^*(\boldsymbol{\xi}_{n+1}^\texttt{tr})}{\partial \mathbf{C}_{n+1}} &= \dfrac{\partial \varphi^*(\mathbf{s}_{n+1}^\texttt{tr})}{\partial \mathbf{C}_{n+1}} -\dfrac{\partial \varphi^*(\overline{\boldsymbol{\beta}}_{n+1}^\texttt{tr})}{\partial \mathbf{C}_{n+1}} \\[12pt]
        &= \overline{\overline{\mu}}_{n+1}^\texttt{tr} \left( \mathbb{I}_{\mathbf{C}_{n+1}^{-1}} - \dfrac{1}{3} \mathbf{C}_{n+1}^{-1} \otimes \mathbf{C}_{n+1}^{-1} \right) - \dfrac{1}{3} \left[ \varphi^* \left( \boldsymbol{\xi}_{n+1}^\texttt{tr} \right) \otimes \mathbf{C}_{n+1}^{-1}
        + \mathbf{C}_{n+1}^{-1} \otimes \varphi^* \left( \boldsymbol{\xi}_{n+1}^\texttt{tr} \right) \right].
    \end{array}
\end{equation}
Substituting \eqref{Partial xi-trial-forward Partial C} into \eqref{Partial xi-trial-norm-square Partial C} and applying the push-forward operator ($\varphi_*$) yield
\begin{equation*}
    \varphi_* \left( \dfrac{\partial \lVert \boldsymbol{\xi}_{n+1}^\texttt{tr} \rVert^2}{\partial \mathbf{C}_{n+1}} \right)
    = 2 \overline{\overline{\mu}}_{n+1}^\texttt{tr} \boldsymbol{\xi}_{n+1}^\texttt{tr}
    - \dfrac{2}{3} \lVert \boldsymbol{\xi}_{n+1}^\texttt{tr} \rVert^2 \mathbf{I}
    + 2 \left( \boldsymbol{\xi}_{n+1}^\texttt{tr} \right)^2.
\end{equation*}
Considering
\begin{equation*}
    \dfrac{\partial \lVert \boldsymbol{\xi}_{n+1}^\texttt{tr} \rVert^2}{\partial \mathbf{C}_{n+1}}
    = \dfrac{\partial \lVert \boldsymbol{\xi}_{n+1}^\texttt{tr} \rVert^2}{\partial \lVert \boldsymbol{\xi}_{n+1}^\texttt{tr} \rVert}
    \dfrac{\partial \lVert \boldsymbol{\xi}_{n+1}^\texttt{tr} \rVert}{\partial \mathbf{C}_{n+1}}
    = 2 \lVert \boldsymbol{\xi}_{n+1}^\texttt{tr} \rVert \dfrac{\partial \lVert \boldsymbol{\xi}_{n+1}^\texttt{tr} \rVert}{\partial \mathbf{C}_{n+1}}
    \quad \Rightarrow \quad
    \dfrac{\partial \lVert \boldsymbol{\xi}_{n+1}^\texttt{tr} \rVert}{\partial \mathbf{C}_{n+1}} 
    = \dfrac{1}{2 \lVert \boldsymbol{\xi}_{n+1}^\texttt{tr} \rVert}
    \dfrac{\partial \lVert \boldsymbol{\xi}_{n+1}^\texttt{tr} \rVert^2}{\partial \mathbf{C}_{n+1}},
\end{equation*}
we derive
\begin{equation*}
    \varphi_* \left( \dfrac{\partial \lVert \boldsymbol{\xi}_{n+1}^\texttt{tr} \rVert}{\partial \mathbf{C}_{n+1}} \right)
    = \overline{\overline{\mu}}_{n+1}^\texttt{tr} \mathbf{n}_{n+1}
    - \dfrac{1}{3} \lVert \boldsymbol{\xi}_{n+1}^\texttt{tr} \rVert \left( \mathbf{n}_{n+1} : \mathbf{n}_{n+1} \right) \mathbf{I}
    + \lVert \boldsymbol{\xi}_{n+1}^\texttt{tr} \rVert \mathbf{n}_{n+1}^2.
\end{equation*}

We remark that the term $\varphi_* \left( \partial \lVert \boldsymbol{\xi}_{n+1}^\texttt{tr} \rVert / \partial \mathbf{C}_{n+1} \right)$ is deviatoric by noticing
\begin{equation*}
    \text{tr} \left[ \varphi_* \left( \dfrac{\partial \lVert \boldsymbol{\xi}_{n+1}^\texttt{tr} \rVert}{\partial \mathbf{C}_{n+1}} \right) \right]
    = \overline{\overline{\mu}}_{n+1}^\texttt{tr} \text{tr}(\mathbf{n}_{n+1})
    - \dfrac{1}{3} \lVert \boldsymbol{\xi}_{n+1}^\texttt{tr} \rVert \mathbf{n}_{n+1} : \mathbf{n}_{n+1} \text{tr}(\mathbf{I})
    + \lVert \boldsymbol{\xi}_{n+1}^\texttt{tr} \rVert \text{tr} (\mathbf{n}_{n+1}^2) = 0.
\end{equation*}
Consequently, we can rewrite
\begin{equation*}
    \varphi_* \left( \dfrac{\partial \lVert \boldsymbol{\xi}_{n+1}^\texttt{tr} \rVert}{\partial \mathbf{C}_{n+1}} \right)
    = \text{dev} \left[ \varphi_* \left( \dfrac{\partial \lVert \boldsymbol{\xi}_{n+1}^\texttt{tr} \rVert}{\partial \mathbf{C}_{n+1}} \right) \right]
    = \overline{\overline{\mu}}_{n+1}^\texttt{tr} \mathbf{n}_{n+1} + \lVert \boldsymbol{\xi}_{n+1}^\texttt{tr} \rVert \text{dev} (\mathbf{n}_{n+1}^2),
\end{equation*}
which renders
\begin{equation} \label{Partial xi-trial-norm Partial C}
    \dfrac{\partial \lVert \boldsymbol{\xi}_{n+1}^\texttt{tr} \rVert}{\partial \mathbf{C}_{n+1}}
    = \varphi^* \left[ \overline{\overline{\mu}}_{n+1}^\texttt{tr} \mathbf{n}_{n+1} + \lVert \boldsymbol{\xi}_{n+1}^\texttt{tr} \rVert \text{dev} (\mathbf{n}_{n+1}^2) \right]
    = \overline{\overline{\mu}}_{n+1}^\texttt{tr} \varphi^*(\mathbf{n}_{n+1})
    + \lVert \boldsymbol{\xi}_{n+1}^\texttt{tr} \rVert \varphi^*\left[ \text{dev} (\mathbf{n}_{n+1}^2) \right].
\end{equation}
Finally, substituting \eqref{Partial mu-bar-bar Partial C Final} and \eqref{Partial xi-trial-norm Partial C} into \eqref{Partial gamma Partial C Initial} yields
\begin{equation} \label{Partial gamma Partial C Final}
    \begin{array}{ll}
        \dfrac{\partial \widehat{\gamma}_{n+1}}{\partial \mathbf{C}_{n+1}} &= \dfrac{1}{2 c_0 \overline{\overline{\mu}}_{n+1}^\texttt{tr}} 
        \left[ \dfrac{\partial \lVert \boldsymbol{\xi}_{n+1}^\texttt{tr} \rVert}{\partial \mathbf{C}_{n+1}}
        - 2 \left( 1 + \dfrac{h}{3\mu} \right) \widehat{\gamma}_{n+1}
        \dfrac{\partial \overline{\overline{\mu}}_{n+1}^\texttt{tr}}{\partial \mathbf{C}_{n+1}} \right] \\[12pt]
        
        &= \dfrac{1}{2 c_0 \overline{\overline{\mu}}_{n+1}^\texttt{tr}}  
        \left\{ \overline{\overline{\mu}}_{n+1}^\texttt{tr} \varphi^*(\mathbf{n}_{n+1})
        + \lVert \boldsymbol{\xi}_{n+1}^\texttt{tr} \rVert \varphi^*\left[ \text{dev} (\mathbf{n}_{n+1}^2) \right]
        - 2 \left( 1 + \dfrac{h}{3\mu} \right) \widehat{\gamma}_{n+1}
        \left[ \dfrac{1}{3} \varphi^* \left( \boldsymbol{\xi}_{n+1}^\texttt{tr} \right) \right] \right\} \\[12pt]

        &= \dfrac{1}{2 c_0 \overline{\overline{\mu}}_{n+1}^\texttt{tr}} 
        \left\{ \left[ \overline{\overline{\mu}}_{n+1}^\texttt{tr} - \dfrac{2}{3} \left( 1 + \dfrac{h}{3\mu} \right) \widehat{\gamma}_{n+1} \lVert \boldsymbol{\xi}_{n+1}^\texttt{tr} \rVert \right] \varphi^* (\mathbf{n}_{n+1})
        + \lVert \boldsymbol{\xi}_{n+1}^\texttt{tr} \rVert \varphi^*\left[ \text{dev} (\mathbf{n}_{n+1}^2) \right] \right\}
    \end{array}
\end{equation}
where we define
\begin{equation*}
    c_0 =  1 + \dfrac{h}{3\mu} + \dfrac{k'}{3 \overline{\overline{\mu}}_{n+1}^\texttt{tr}}.
\end{equation*} 

\subsection[]{The computation of $\partial \varphi^*(\mathbf{n}_{n+1}) / \partial \mathbf{C}_{n+1}$} 

Based on the chain rules, we write out
\begin{equation} \label{Partial n-forward Partial C Initial}
    \dfrac{\partial \varphi^*(\mathbf{n}_{n+1})}{\partial \mathbf{C}_{n+1}} = \dfrac{\partial}{\partial \mathbf{C}_{n+1}} \left[ \dfrac{\varphi^* (\boldsymbol{\xi}_{n+1}^\texttt{tr})}{\lVert \boldsymbol{\xi}_{n+1}^\texttt{tr} \rVert} \right]
    = \dfrac{1}{\lVert \boldsymbol{\xi}_{n+1}^\texttt{tr} \rVert^2}
    \left[ \lVert \boldsymbol{\xi}_{n+1}^\texttt{tr} \rVert \dfrac{\partial \varphi^* (\boldsymbol{\xi}_{n+1}^\texttt{tr})}{\partial \mathbf{C}_{n+1}}
    - \varphi^*(\boldsymbol{\xi}_{n+1}^\texttt{tr}) \otimes \dfrac{\partial \lVert \boldsymbol{\xi}_{n+1}^\texttt{tr} \rVert}{\partial \mathbf{C}_{n+1}} \right].
\end{equation}
Substituting \eqref{Partial xi-trial-norm Partial C} and \eqref{Partial xi-trial-forward Partial C} into \eqref{Partial n-forward Partial C Initial} renders
\begin{equation} \label{Partial n-forward Partial C Final}
    \begin{array}{ll}
        \dfrac{\partial \varphi^*(\mathbf{n}_{n+1})}{\partial \mathbf{C}_{n+1}}
        = &\dfrac{\overline{\overline{\mu}}_{n+1}^\texttt{tr}}{\lVert \boldsymbol{\xi}_{n+1}^\texttt{tr} \rVert}
        \left( \mathbb{I}_{\mathbf{C}_{n+1}^{-1}} - \dfrac{1}{3} \mathbf{C}_{n+1}^{-1} \otimes \mathbf{C}_{n+1}^{-1} \right) - \dfrac{1}{3} \left[ \varphi^* \left( \mathbf{n}_{n+1} \right) \otimes \mathbf{C}_{n+1}^{-1}
        + \mathbf{C}_{n+1}^{-1} \otimes \varphi^* \left( \mathbf{n}_{n+1} \right) \right] \\[12pt]
        
        &- \dfrac{\overline{\overline{\mu}}_{n+1}^\texttt{tr}}{\lVert \boldsymbol{\xi}_{n+1}^\texttt{tr} \rVert}
        \varphi^*(\mathbf{n}_{n+1}) \otimes
        \varphi^*(\mathbf{n}_{n+1})
        - \varphi^*(\mathbf{n}_{n+1}) \otimes \varphi^*\left[ \text{dev} (\mathbf{n}_{n+1}^2) \right].
    \end{array}
\end{equation}

\subsection[]{Complete expression of the second elastoplastic moduli}

Substituting $\partial \varphi^*(\boldsymbol{\tau}_{n+1}^\texttt{vol}) / \partial \mathbf{C}_{n+1}$ from \eqref{Partial tau Partial C Final}, $\partial \varphi^*(\mathbf{s}_{n+1}^\texttt{tr}) / \partial \mathbf{C}_{n+1}$ from \eqref{Partial s-trial-forward Partial C Final}, $\partial \overline{\overline{\mu}}_{n+1}^\texttt{tr} / \partial \mathbf{C}_{n+1}$ from \eqref{Partial mu-bar-bar Partial C Final}, $\partial \widehat{\gamma}_{n+1} / \partial \mathbf{C}_{n+1}$ from \eqref{Partial gamma Partial C Final}, and $\partial \varphi^*(\mathbf{n}_{n+1}) / \partial \mathbf{C}_{n+1}$ from \eqref{Partial n-forward Partial C Final} into \eqref{Second Elastoplastic Moduli Initial}, we finally derive
\begin{equation*}
    \begin{array}{ll}
        \mathbb{C}_{n+1}^\texttt{ep} = &J_{n+1} (J_{n+1} U'_{n+1})' \mathbf{C}_{n+1}^{-1} \otimes \mathbf{C}_{n+1}^{-1}
        - 2 J_{n+1} U'_{n+1} \mathbb{I}_{\mathbf{C}_{n+1}^{-1}} \\[12pt]
        
        &+ 2 \overline{\mu}_{n+1}^\texttt{tr} \left( \mathbb{I}_{\mathbf{C}_{n+1}^{-1}} - \dfrac{1}{3} \mathbf{C}_{n+1}^{-1} \otimes \mathbf{C}_{n+1}^{-1} \right)
        - \dfrac{2}{3} \left[ \varphi^* \left( \mathbf{s}_{n+1}^\texttt{tr} \right) \otimes \mathbf{C}_{n+1}^{-1}
        + \mathbf{C}_{n+1}^{-1} \otimes \varphi^* \left( \mathbf{s}_{n+1}^\texttt{tr} \right) \right] \\[12pt]

        &- \dfrac{4}{3} \widehat{\gamma}_{n+1} \lVert \boldsymbol{\xi}_{n+1}^\texttt{tr} \rVert \varphi^*(\mathbf{n}_{n+1}) \otimes \varphi^* (\mathbf{n}_{n+1}) \\[12pt]

        &- \dfrac{2}{c_0} \varphi^* (\mathbf{n}_{n+1}) \otimes
        \left\{ \left[ \overline{\overline{\mu}}_{n+1}^\texttt{tr} - \dfrac{2}{3} \left( 1 + \dfrac{h}{3\mu} \right) \widehat{\gamma}_{n+1} \lVert \boldsymbol{\xi}_{n+1}^\texttt{tr} \rVert \right] \varphi^* (\mathbf{n}_{n+1})
        + \lVert \boldsymbol{\xi}_{n+1}^\texttt{tr} \rVert \varphi^*\left[ \text{dev} (\mathbf{n}_{n+1}^2) \right] \right\} \\[12pt]

        &- \dfrac{4 \left( \overline{\overline{\mu}}_{n+1}^\texttt{tr} \right)^2 \widehat{\gamma}_{n+1}}{\lVert \boldsymbol{\xi}_{n+1}^\texttt{tr} \rVert}
        \left( \mathbb{I}_{\mathbf{C}_{n+1}^{-1}} - \dfrac{1}{3} \mathbf{C}_{n+1}^{-1} \otimes \mathbf{C}_{n+1}^{-1} \right) \\[12pt] 
        
        &+ \dfrac{4 \overline{\overline{\mu}}_{n+1}^\texttt{tr} \widehat{\gamma}_{n+1}}{3} \left[ \varphi^* \left( \mathbf{n}_{n+1} \right) \otimes \mathbf{C}_{n+1}^{-1}
        + \mathbf{C}_{n+1}^{-1} \otimes \varphi^* \left( \mathbf{n}_{n+1} \right) \right] \\[12pt] 

        &+ \dfrac{4 \left( \overline{\overline{\mu}}_{n+1}^\texttt{tr} \right)^2 \widehat{\gamma}_{n+1}}{\lVert \boldsymbol{\xi}_{n+1}^\texttt{tr} \rVert}
        \varphi^*(\mathbf{n}_{n+1}) \otimes
        \varphi^*(\mathbf{n}_{n+1})
        
        + 4 \overline{\overline{\mu}}_{n+1}^\texttt{tr} \widehat{\gamma}_{n+1} \varphi^*(\mathbf{n}_{n+1}) \otimes \varphi^*\left[ \text{dev} (\mathbf{n}_{n+1}^2) \right].
    \end{array}
\end{equation*}
Regrouping terms renders
\begin{equation*}
    \begin{array}{ll}
        \mathbb{C}_{n+1}^\texttt{ep} = &\left( 2 \overline{\mu}_{n+1}^\texttt{tr} - 2 c_1 \overline{\overline{\mu}}_{n+1}^\texttt{tr} - 2 J_{n+1} U'_{n+1} \right) \mathbb{I}_{\mathbf{C}_{n+1}^{-1}}
        
        + \left[ J_{n+1} (J_{n+1} U'_{n+1})' - \dfrac{2 \overline{\mu}_{n+1}^\texttt{tr}}{3} + \dfrac{2 c_1 \overline{\overline{\mu}}_{n+1}^\texttt{tr}}{3} \right] \mathbf{C}_{n+1}^{-1} \otimes \mathbf{C}_{n+1}^{-1} \\[12pt]
        
        &- \dfrac{2}{3} \left[ \varphi^*(\mathbf{s}_{n+1}^\texttt{tr}) \otimes \mathbf{C}_{n+1}^{-1} + \mathbf{C}_{n+1}^{-1} \otimes \varphi^*(\mathbf{s}_{n+1}^\texttt{tr}) \right]

        + \dfrac{2 c_1}{3} \left[ \varphi^*(\boldsymbol{\xi}_{n+1}^\texttt{tr}) \otimes \mathbf{C}_{n+1}^{-1} + \mathbf{C}_{n+1}^{-1} \otimes \varphi^*(\boldsymbol{\xi}_{n+1}^\texttt{tr}) \right] \\[12pt]

        &- c_3 \varphi^*(\mathbf{n}_{n+1}) \otimes \varphi^*(\mathbf{n}_{n+1})

        - c_4 \varphi^*(\mathbf{n}_{n+1}) \otimes \varphi^*\left[ \text{dev} (\mathbf{n}_{n+1}^2) \right].
    \end{array}
\end{equation*}
with $c_1$, $c_3$, and $c_4$ defined in \eqref{Parameters in Elasticity}. To ensure the major symmetry of $\mathbb{C}_{n+1}^\texttt{ep}$ for computational benefits, we manually symmetrize its last term by following \citet{simo_framework_1988} and derive the expression in \eqref{Second Elastoplastic Moduli Final}.

\section{Verification of FEA implementation for finite strain elastoplasticity}
\label{Sec: FEA Verification}

In this section, we compare the FEA and semi-analytical solutions for a 3D column under uniaxial cyclic loadings. Through this comparison, we aim to demonstrate that the proposed formulae for updating $\overline{\mathbf{b}}^\texttt{e}$ in $\eqref{Updated be_bar}$ ensure the isochoric plastic flow ($J^\texttt{p}=1$), and consequently, the FEA solution matches the semi-analytical solution exactly.

\subsection{Construction of the semi-analytical solution}

For the simplicity of the semi-analytical solution, we assume no hardening occurs ($k(\alpha) \equiv \sigma_y$ and $h=0$) when the material yields. At any time, the deformation gradients can be expressed as
\begin{equation*}
    \mathbf{F} = \left[ \begin{array}{ccc}
        \lambda & & \\
        & \lambda^\texttt{l} & \\
        & & \lambda^\texttt{l}
    \end{array} \right], \quad
    \mathbf{F}^\texttt{p} = \left[ \begin{array}{ccc}
        \lambda^\texttt{p} & & \\
        & \dfrac{1}{\sqrt{\lambda^\texttt{p}}} & \\
        & & \dfrac{1}{\sqrt{\lambda^\texttt{p}}}
    \end{array} \right], \quad \text{and} \quad
    \mathbf{F}^\texttt{e} = \left[ \begin{array}{ccc}
        \dfrac{\lambda}{\lambda^\texttt{p}} & & \\
        & \lambda^\texttt{l} \sqrt{\lambda^\texttt{p}} & \\
        & & \lambda^\texttt{l} \sqrt{\lambda^\texttt{p}}
    \end{array} \right].
\end{equation*}
Here, the variables $\lambda$, $\lambda^\texttt{l}$, and $\lambda^\texttt{p}$ represent the total applied stretch, total lateral stretch, and plastic stretch (Fig. \ref{Fig: Isochoric Plastic Flow}(a)), respectively. Note that the above deformation gradients satisfy the requirements of $\mathbf{F} = \mathbf{F}^\texttt{e} \mathbf{F}^\texttt{p}$ and $\det(\mathbf{F}^\texttt{p}) = 1$. Based on the elastic deformation gradient ($\mathbf{F}^\texttt{e}$), we compute
\begin{equation*}
    \text{dev} (\overline{\mathbf{b}}^\texttt{e}) = \dfrac{1}{3} \left[ \lambda^\texttt{p} \left( \dfrac{\lambda^\texttt{l}}{\lambda} \right)^{2/3}
    - \dfrac{1}{(\lambda^\texttt{p})^2} \left( \dfrac{\lambda}{\lambda^\texttt{l}} \right)^{4/3} \right]
    \left[ \begin{array}{ccc}
        -2 & & \\
        & 1 & \\
        & & 1
    \end{array} \right]
\end{equation*}
and derive the components of the Kirchhoff stress tensor ($\boldsymbol{\tau}$) as
\begin{equation*}
    \left\{ \begin{array}{l}
        \tau_{11} = J U'(J) - \dfrac{2\mu}{3} \left[ \lambda^\texttt{p} \left( \dfrac{\lambda^\texttt{l}}{\lambda} \right)^{2/3} - \dfrac{1}{(\lambda^\texttt{p})^2} \left( \dfrac{\lambda}{\lambda^\texttt{l}} \right)^{4/3} \right], \\[12pt]
        \tau_{22} = \tau_{33} = J U'(J) + \dfrac{\mu}{3} \left[ \lambda^\texttt{p} \left( \dfrac{\lambda^\texttt{l}}{\lambda} \right)^{2/3} - \dfrac{1}{(\lambda^\texttt{p})^2} \left( \dfrac{\lambda}{\lambda^\texttt{l}} \right)^{4/3}\right],
    \end{array} \right.
\end{equation*}
where $J = \lambda (\lambda^\texttt{l})^2$ is the total volume change. 

\begin{figure}[!htbp]
    \centering
    \includegraphics[width=17cm]{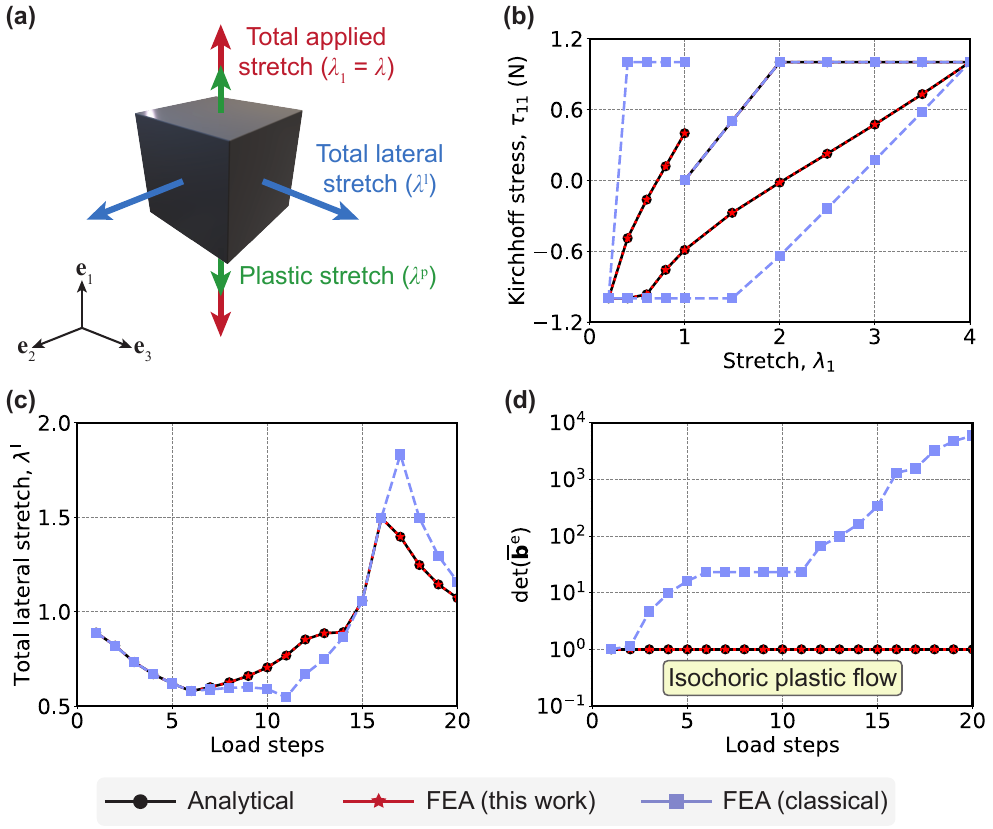}
    \caption{Comparison between the FEA and semi-analytical solutions. (a) Tested mechanical setups. (b) Kirchhoff stress--stretch ($\tau_{11}$--$\lambda_1$) curves. (c) Evolution of the total lateral stretch ($\lambda^\texttt{l}$). (d) Evolution of the determinant of $\overline{\mathbf{b}}^\texttt{e}$.}
    \label{Fig: Isochoric Plastic Flow}
\end{figure}

During the elastic loading/unloading process, the Kirchhoff stress conforms to $\tau_{22} = \tau_{33} = 0$ for given $\lambda$ and $\lambda_p$. We can then solve $\lambda_l$ from
\begin{equation} \label{Solve for lambda_l for Elastic Process}
   \dfrac{\kappa}{2} \left[ \lambda^2 (\lambda^\texttt{l})^4-1 \right] + \dfrac{\mu}{3} \left[ \lambda^\texttt{p} \left( \dfrac{\lambda^\texttt{l}}{\lambda} \right)^{2/3} - \dfrac{1}{(\lambda^\texttt{p})^2} \left( \dfrac{\lambda}{\lambda^\texttt{l}} \right)^{4/3}\right] = 0
\end{equation}
with the choice of $U(J)$ in \eqref{Volumetric and Deviatoric Engery}. When the material yields, the Kirchhoff stress satisfies $\tau_{11} = \pm \sigma_y$ (the sign depends on the loading direction) and $\tau_{22} = \tau_{33} = 0$ for given $\lambda$. We can then analytically compute $\lambda^\texttt{l}$ as
\begin{equation} \label{Solve for lambda_l for Plastic Process}
    \lambda^\texttt{l} = \dfrac{1}{\sqrt{\lambda}} \left( 1 \pm \dfrac{2 \sigma_y}{3 \kappa} \right)^{1/4}
\end{equation}
and solve for $\lambda^\texttt{p}$ from
\begin{equation} \label{Solve for lambda_p for Plastic Process}
    \left( \dfrac{\lambda^\texttt{l}}{\lambda} \right)^{2/3} (\lambda^\texttt{p})^3 \pm \dfrac{\sigma_y}{\mu} (\lambda^\texttt{p})^2 - \left( \dfrac{\lambda}{\lambda^\texttt{l}} \right)^{4/3} = 0.
\end{equation}
Note that the expressions in \eqref{Solve for lambda_l for Elastic Process}, \eqref{Solve for lambda_l for Plastic Process}, and \eqref{Solve for lambda_p for Plastic Process} construct the semi-analytical solutions for predicting the finite strain elastoplastic responses of materials under uniaxial cyclic loadings. 

\subsection{Comparison between the FEA and semi-analytical solutions}
\label{Sec: Comparison with Analytical Solution}

Next, we compare the semi-analytical solutions described above with the FEA predictions based on the theories in Section \ref{Sec: Finite Strain Elastoplasticity}. As a reference, we also include FEA predictions derived from the classical theories in \citet{simo_framework_1988-1, simo_framework_1988} that utilize $\eqref{Local Governing Equations}_1$ to update $\overline{\mathbf{b}}^\texttt{e}$. Without loss of generality, we adopt dummy material constants, $E=1$ MPa, $\nu=0.3$, and $\sigma_y=0.2$ MPa, where $E$ is the initial Young's modulus and $\nu$ is the initial Poisson's ratio.

The comparison results are shown in Figs. \ref{Fig: Isochoric Plastic Flow}(b)--(d), which illustrate the Kirchhoff stress component ($\tau_{11}$), lateral stretch ($\lambda^\texttt{l}$), and the determinant of $\overline{\mathbf{b}}^\texttt{e}$, respectively. Based on the comparison, the FEA predictions in this work ensure isochoric plastic flow and demonstrate good agreement with the semi-analytical solutions, as evidenced by the small absolute 2-norm errors: $1.56 \times 10^{-10}$ for $\tau_{11}$, $7.76 \times 10^{-11}$ for $\lambda^\texttt{l}$, and $7.95 \times 10^{-10}$ for $\text{det}(\overline{\mathbf{b}}^\texttt{e})$. Conversely, the classical FEA initially aligns well during the loading stage but fails to ensure isochoric plastic flow. It also leads to significant deviations from the analytical solutions after unloading occurs. This limitation renders the classical FEA unsuitable for structures experiencing non-monotonic loadings, such as the dampers under cyclic loadings discussed in Section \ref{Sec: Dampers}.

\section{Convergence, precision, and computational time of FEA}
\label{Sec: FEA Convergence}

In this section, we examine the computational aspects of FEA, including convergence, precision, and computational time. Using the optimized damper under multiple-cycle loadings in Fig. \ref{Fig: Damper-Part 3} as an example, we present the required Newton iterations for convergence in Fig. \ref{Fig: Computational Cost}(a). For most load steps in FEA, whether during loading or unloading, convergence is achieved in fewer than 10 Newton iterations. However, at critical load steps corresponding to transitions between loading and unloading, it requires 10--25 iterations to converge.

To further illustrate the convergence and precision of FEA, we highlight four representative load steps in Fig. \ref{Fig: Computational Cost}(a) and show the evolution of the relative residual (right-hand side of $\eqref{Global Governing Equation}_1$) at these steps in Fig. \ref{Fig: Computational Cost}(b). For most steps, such as steps 60 and 100, the residual converges quadratically due to the algorithmic elastoplastic moduli in \eqref{Second Algorithmic Tangent Moduli Final} and the proposed interpolation schemes in \eqref{Interpolated First PK Stress} and \eqref{Interpolated Left-hand Side}. For transitional steps, like steps 66 and 156, a line search (Algorithm \ref{Algorithm of Line Search}) is activated in the initial iterations to reduce the residual. Once the residual reduction trend is established, quadratic convergence resumes. Notably, as shown in Fig. \ref{Fig: Computational Cost}(b), the residuals for all load steps eventually converge to the prescribed tolerance (set to $10^{-8}$ here).

\begin{figure}[!htbp]
    \centering
    \includegraphics[width=16cm]{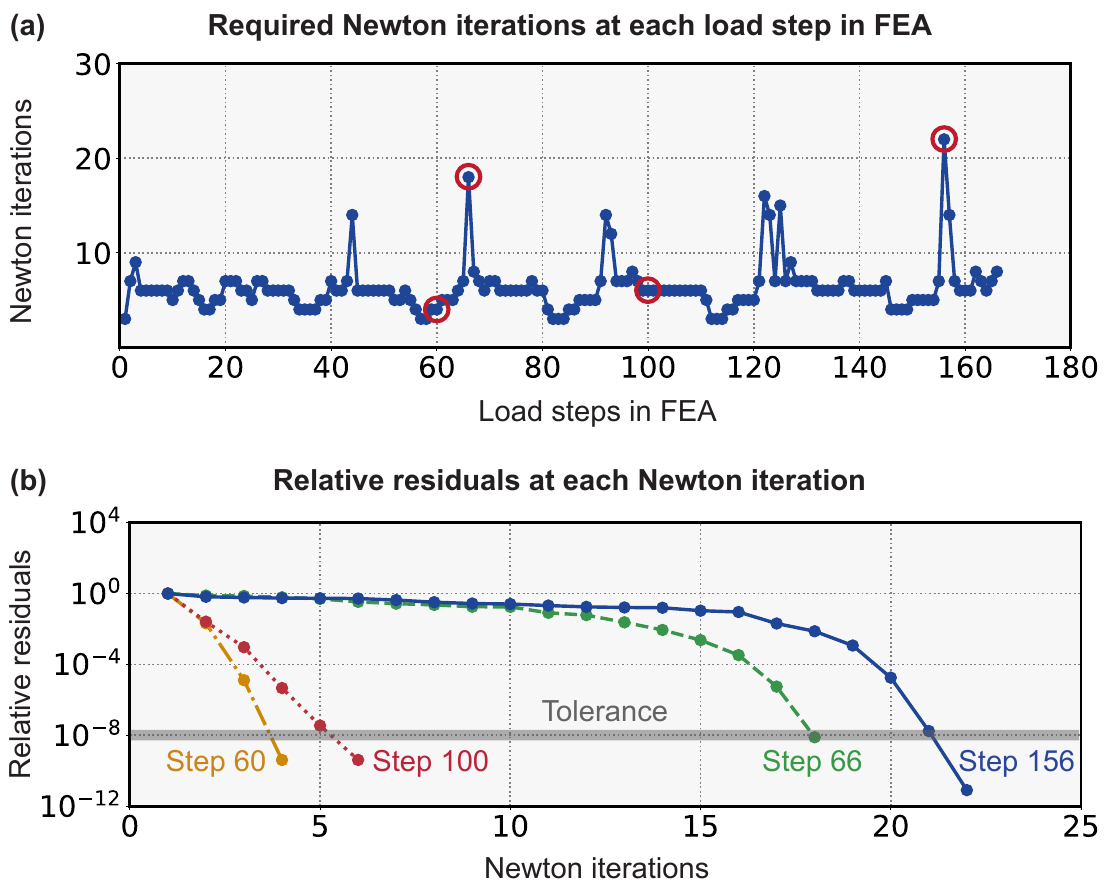}
    \caption{Convergence and precision of FEA. (a) Required Newton iterations for convergence at each load step in FEA. (b) Evolution of the relative residuals at representative load steps marked in (a).}
    \label{Fig: Computational Cost}
\end{figure}

After discussing convergence and precision, we now focus on the computational time of FEA. For representativeness, we analyze the most complex design case in each example presented in Section \ref{Sec: Sample Examples}. Computational performance is tested on a workstation equipped with an AMD Ryzen Threadripper PRO 3995WX CPU featuring 64 cores and 256 GB of memory. The results are summarized in Table \ref{Table: Computational Time}, where we provide the finite element, degree of freedom (DOF), and parallel process counts, along with the average computational time per Newton iteration in the final load step.

It is evident that the 2D designs (damper and beam) require less than 0.1 seconds per Newton iteration due to the parallel computing capabilities provided by FEniTop \citep{jia_fenitop_2024}. The profiled sheet takes around 2 seconds per iteration, while the bumper design takes around 12 seconds per iteration due to a large DOF count. This high computational demand for the bumper can be reduced by exploiting symmetry and analyzing one quarter of the domain.

\begin{table}[!htbp]
    \caption{Computational time of FEA for representative design cases}
    \label{Table: Computational Time}
    \centering
    \footnotesize
    \begin{tabular}{lllll}
        \hline
        \textbf{Optimized designs} & \textbf{Element counts} & \textbf{DOF counts} & \textbf{Process counts} & \textbf{Time per iteration} \\
        \hline
        Damper in Fig. \ref{Fig: Damper-Part 3} & 15,000 & 30,502 & 8 & 0.07 s\\
        Beam (Dsg. 4) in Fig. \ref{Fig: Beam} & 14,400 & 29,402 & 8 & 0.06 s\\
        Bumper in Fig. \ref{Fig: Bumper-Part 3} & 138,400 & 469,491 & 16 & 11.82 s\\
        Profiled sheet (Dsg. 4) in Fig. \ref{Fig: Sheet-Part 2} & 28,800 & 110,715 & 16 & 1.94 s\\
        \hline
    \end{tabular}
\end{table}

\section{History-dependent sensitivity analysis and verification}
\label{Sec: Sensitivity Analysis and Verification}

In this section, we present the sensitivity analysis for finite strain elastoplasticity. Following the blueprint provided in \citet{jia_multimaterial_2025}, we begin by defining the global vectors for the design and state variables as well as the residuals. After that, we derive the sensitivity expressions with the reversed adjoint method and automatic differentiation. Finally, we verify the derived sensitivity expressions through a comparison with the forward finite difference scheme.

\subsection{Global vectors of design variables, state variables, and residuals}

In this subsection, we define the residual vectors used in the sensitivity analysis. Based on the local and global governing equations in Section \ref{Sec: Finite Strain Elastoplasticity}, we identify five groups of independent state variables as $\overline{\mathbf{b}}_1^\texttt{e}, \ldots, \overline{\mathbf{b}}_N^\texttt{e}$, $\overline{\boldsymbol{\beta}}_1, \ldots, \overline{\boldsymbol{\beta}}_N$, $\alpha_1, \ldots, \alpha_N$, $\widehat{\gamma}_1, \ldots, \widehat{\gamma}_N$, and $\mathbf{u}_1, \ldots, \mathbf{u}_N$. Correspondingly, we define five groups of residual expressions as
\begin{equation} \label{Residual Expresssions}
    \left\{ \begin{array}{l}
        \mathbf{r}_{n+1}^{\overline{\mathbf{b}}^\texttt{e}}(\{\overline{\zeta}\}, \overline{\mathbf{b}}_{n+1}^\texttt{e}, \widehat{\gamma}_{n+1}, \mathbf{u}_{n+1}, \overline{\mathbf{b}}_n^\texttt{e}, \overline{\boldsymbol{\beta}}_n, \mathbf{u}_n) = \overline{\mathbf{b}}_{n+1}^\texttt{e} - \text{dev} \left( \overline{\mathbf{b}}_{n+1}^\texttt{e,tr} \right)
        + \dfrac{2 \overline{\overline{\mu}}_{n+1}^\texttt{tr}}{\mu} \widehat{\gamma}_{n+1} \mathbf{n}_{n+1} - \dfrac{1}{3} \mathcal{I}_1 \mathbf{I} = \mathbf{0}, \\[12pt]

        \mathbf{r}_{n+1}^{\overline{\boldsymbol{\beta}}}(\{\overline{\zeta}\}, \overline{\boldsymbol{\beta}}_{n+1}, \widehat{\gamma}_{n+1}, \mathbf{u}_{n+1}, \overline{\mathbf{b}}_n^\texttt{e}, \overline{\boldsymbol{\beta}}_n, \mathbf{u}_n)
        = \overline{\boldsymbol{\beta}}_{n+1} - \overline{\boldsymbol{\beta}}_{n+1}^\texttt{tr}
        - \dfrac{2 h \overline{\overline{\mu}}_{n+1}^\texttt{tr}}{3 \mu} \widehat{\gamma}_{n+1} \mathbf{n}_{n+1} = \mathbf{0}, \\[12pt]

        r_{n+1}^{\alpha}(\alpha_{n+1}, \widehat{\gamma}_{n+1}, \alpha_n) = \alpha_{n+1} - \alpha_{n+1}^\texttt{tr} - \sqrt{\dfrac{2}{3}} \widehat{\gamma}_{n+1} = 0, \\[12pt]
        
        r_{n+1}^{\widehat{\gamma}}(\{\overline{\zeta}\}, \widehat{\gamma}_{n+1}, \mathbf{u}_{n+1}, \overline{\mathbf{b}}_n^\texttt{e}, \overline{\boldsymbol{\beta}}_n, \alpha_n, \mathbf{u}_n) \\[12pt]
        \quad = \widehat{\gamma}_{n+1} \mathcal{G}(\widehat{\gamma}_{n+1})
        = \widehat{\gamma}_{n+1} \left[ \lVert \boldsymbol{\xi}_{n+1}^\texttt{tr} \rVert
        - 2 \overline{\overline{\mu}}_{n+1}^\texttt{tr}
        \left( 1 + \dfrac{h}{3\mu} \right) \widehat{\gamma}_{n+1}
        - \sqrt{\dfrac{2}{3}} k \left( \alpha_{n+1}^\texttt{tr} + \sqrt{\dfrac{2}{3}} \widehat{\gamma}_{n+1} \right) \right] = 0, \\[12pt]
        
        \displaystyle r_{n+1}^\mathbf{u}(\{\overline{\zeta}\}, \widehat{\gamma}_{n+1}, \mathbf{u}_{n+1}, \overline{\mathbf{b}}_n^\texttt{e}, \overline{\boldsymbol{\beta}}_n, \mathbf{u}_n) = \int_{\Omega_0} \mathbf{P}_{n+1} : \nabla \mathbf{v}\ \text{d} \mathbf{X} - \int_{\Omega_0} \overline{\mathbf{q}}_{n+1} \cdot \mathbf{v}\ \text{d} \mathbf{X} 
        - \int_{\partial \Omega_0^\mathcal{N}} \overline{\mathbf{t}}_{n+1} \cdot \mathbf{v}\ \text{d} \mathbf{X} = 0,
    \end{array} \right.
\end{equation}
where $\{\overline{\zeta}\} = \{ \overline{\rho}, \overline{\xi}_1, \ldots, \overline{\xi}_{N^\texttt{mat}} \}$ is a collection of the physical design variables. We remark that, in contrast to \citet{jia_multimaterial_2025}, here we treat $\widehat{\gamma}_1, \ldots, \widehat{\gamma}_N$ as independent state variables and incorporate its residual in $\eqref{Residual Expresssions}_4$. This treatment allows us to consider nonlinear isotropic hardening laws such as \eqref{Nonlinear Isotropic Hardening Law} where $\widehat{\gamma}_1, \ldots, \widehat{\gamma}_N$ typically have no analytical expressions.

In the context of FEA, we reformulate \eqref{Residual Expresssions} into global residual vectors expressed as 
\begin{equation*}
    \left\{ \begin{array}{l}
        \mathbf{R}_{n+1}^{\overline{\mathbf{b}}^\texttt{e}}
        (\{\overline{\boldsymbol{\zeta}}\}, \mathbf{V}_{n+1}^{\overline{\mathbf{b}}^\texttt{e}}, \mathbf{V}_{n+1}^{\widehat{\gamma}}, \mathbf{U}_{n+1}, \mathbf{V}_n^{\overline{\mathbf{b}}^\texttt{e}}, \mathbf{V}_n^{\overline{\boldsymbol{\beta}}}, \mathbf{U}_n) = \mathbf{0}, \\[8pt]
        
        \mathbf{R}_{n+1}^{\overline{\boldsymbol{\beta}}}(\{\overline{\boldsymbol{\zeta}}\}, \mathbf{V}_{n+1}^{\overline{\boldsymbol{\beta}}}, \mathbf{V}_{n+1}^{\widehat{\gamma}}, \mathbf{U}_{n+1}, \mathbf{V}_n^{\overline{\mathbf{b}}^\texttt{e}}, \mathbf{V}_n^{\overline{\boldsymbol{\beta}}}, \mathbf{U}_n) = \mathbf{0}, \\[8pt]
        
        \mathbf{R}_{n+1}^{\alpha}(\mathbf{V}_{n+1}^\alpha, \mathbf{V}_{n+1}^{\widehat{\gamma}}, \mathbf{V}_n^\alpha) = \mathbf{0}, \\[8pt]
        
        \mathbf{R}_{n+1}^{\widehat{\gamma}}(\{\overline{\boldsymbol{\zeta}}\}, \mathbf{V}_{n+1}^{\widehat{\gamma}}, \mathbf{U}_{n+1}, \mathbf{V}_n^{\overline{\mathbf{b}}^\texttt{e}}, \mathbf{V}_n^{\overline{\boldsymbol{\beta}}}, \mathbf{V}_n^\alpha, \mathbf{U}_n) = \mathbf{0}, \\[8pt]
        
        \mathbf{R}_{n+1}^\mathbf{u}(\{\overline{\boldsymbol{\zeta}}\}, \mathbf{V}_{n+1}^{\widehat{\gamma}}, \mathbf{U}_{n+1}, \mathbf{V}_n^{\overline{\mathbf{b}}^\texttt{e}}, \mathbf{V}_n^{\overline{\boldsymbol{\beta}}}, \mathbf{U}_n) = \mathbf{0}.
    \end{array} \right.
\end{equation*}
In these expressions, we define $\{\overline{\boldsymbol{\zeta}}\} = \{ \overline{\boldsymbol{\rho}}, \overline{\boldsymbol{\xi}}_1, \ldots, \overline{\boldsymbol{\xi}}_{N^\texttt{mat}} \}$ as a collection of global vectors of physical design variables, where $\overline{\zeta}_e$ is the physical design variable of finite element $e$. Additionally, the variable $\mathbf{V}_n^{\overline{\mathbf{b}}^\texttt{e}}$ for $n=1,\ldots,N$ is the global vector of $\overline{\mathbf{b}}_n^\texttt{e}$ defined as
\begin{equation*}
    \mathbf{V}_n^{\overline{\mathbf{b}}^\texttt{e}} = \left[ \left(\overline{\mathbf{b}}_{n,1}^\texttt{e,v}\right)^\top, \ldots, \left(\overline{\mathbf{b}}_{n,N^\texttt{pt}}^\texttt{e,v}\right)^\top \right]^\top
\end{equation*}
where $\overline{\mathbf{b}}_{n,i}^\texttt{e,v}$ for $i = 1, \ldots, N^\texttt{pt}$ is the vector form of $\overline{\mathbf{b}}_{n,i}^\texttt{e}$ expressed as
\begin{equation*}
    \overline{\mathbf{b}}_{n,i}^\texttt{e,v} = \left[ \left( \overline{\mathbf{b}}_{n,i}^\texttt{e} \right)_{11}, \left( \overline{\mathbf{b}}_{n,i}^\texttt{e} \right)_{12}, \left( \overline{\mathbf{b}}_{n,i}^\texttt{e} \right)_{13}, \left( \overline{\mathbf{b}}_{n,i}^\texttt{e} \right)_{22}, \left( \overline{\mathbf{b}}_{n,i}^\texttt{e} \right)_{23}, \left( \overline{\mathbf{b}}_{n,i}^\texttt{e} \right)_{33} \right]^\top.
\end{equation*}
Here, subscript $i$ is the count of integration points used in FEA, and $N^\texttt{pt}$ is the number of all integration points. Similarly, the variable $\mathbf{V}_n^{\overline{\boldsymbol{\beta}}}$ for $n=1,\ldots,N$ is the global vector of $\overline{\boldsymbol{\beta}}_n$ defined as
\begin{equation*}
    \mathbf{V}_n^{\overline{\boldsymbol{\beta}}} = \left[ \left(\overline{\boldsymbol{\beta}}_{n,1}^\texttt{v}\right)^\top, \ldots, \left(\overline{\boldsymbol{\beta}}_{n,N^\texttt{pt}}^\texttt{v}\right)^\top \right]^\top
\end{equation*}
where $\overline{\boldsymbol{\beta}}_{n,i}^\texttt{v}$ for $i = 1, \ldots, N^\texttt{pt}$ is the vector form of $\overline{\boldsymbol{\beta}}_{n,i}$ expressed as
\begin{equation*}
    \overline{\boldsymbol{\beta}}_{n,i}^\texttt{v} = \left[ \left( \overline{\boldsymbol{\beta}}_{n,i} \right)_{11},
    \left( \overline{\boldsymbol{\beta}}_{n,i} \right)_{12},
    \left( \overline{\boldsymbol{\beta}}_{n,i} \right)_{13},
    \left( \overline{\boldsymbol{\beta}}_{n,i} \right)_{22},
    \left( \overline{\boldsymbol{\beta}}_{n,i} \right)_{23},
    \left( \overline{\boldsymbol{\beta}}_{n,i} \right)_{33} \right]^\top.
\end{equation*}
The variables $\mathbf{V}_n^\alpha$ and $\mathbf{V}_n^{\widehat{\gamma}}$ for $n=1,\ldots,N$ are the global vectors of $\alpha_n$ and $\widehat{\gamma}_n$, respectively, which are defined as
\begin{equation*}
    \mathbf{V}_n^\alpha = \left[ \alpha_{n,1}, \ldots, \alpha_{n,N^\texttt{pt}} \right]^\top
    \quad \text{and} \quad
    \mathbf{V}_n^{\widehat{\gamma}} = \left[ \widehat{\gamma}_{n,1}, \ldots, \widehat{\gamma}_{n,N^\texttt{pt}} \right]^\top.
\end{equation*}
Finally, the variable $\mathbf{U}_n$ for $n=1,\ldots,N$ is the global vector of $\mathbf{u}_n$ defined as
\begin{equation*}
    \mathbf{U}_n = \left[ (\mathbf{u}_n)_{1,x}, (\mathbf{u}_n)_{1,y}, (\mathbf{u}_n)_{1,z}, \ldots, (\mathbf{u}_n)_{N^\texttt{node},x}, (\mathbf{u}_n)_{N^\texttt{node},y}, (\mathbf{u}_n)_{N^\texttt{node},z} \right]^\top
\end{equation*}
where subscripts $x$, $y$, and $z$ represent the three components of $(\mathbf{u}_n)_j$ for $j=1,\ldots,N^\texttt{node}$, and $N^\texttt{node}$ is the number of nodes in FEA.

\subsection{Sensitivity analysis}
\label{Sec: Sensitivity Analysis}

After defining the global vectors of design variables ($\overline{\boldsymbol{\rho}}$ and $\overline{\boldsymbol{\xi}}_1, \ldots, \overline{\boldsymbol{\xi}}_{N^\texttt{mat}}$), state variables ($\mathbf{V}_n^{\overline{\mathbf{b}}^\texttt{e}}$, $\mathbf{V}_n^{\overline{\boldsymbol{\beta}}}$, $\mathbf{V}_n^\alpha$, $\mathbf{V}_n^{\widehat{\gamma}}$, and $\mathbf{U}_n$), and residuals ($\mathbf{R}_{n+1}^{\overline{\mathbf{b}}^\texttt{e}}$, $\mathbf{R}_{n+1}^{\overline{\boldsymbol{\beta}}}$, $\mathbf{R}_{n+1}^{\alpha}$, $\mathbf{R}_{n+1}^{\widehat{\gamma}}$, and $\mathbf{R}_{n+1}^\mathbf{u}$), we proceed with the sensitivity analysis for finite strain elastoplasticity. We consider a general function
\begin{equation*}
    \mathcal{F}(\overline{\boldsymbol{\rho}}, \overline{\boldsymbol{\xi}}_1, \ldots, \overline{\boldsymbol{\xi}}_{N^\texttt{mat}}; \mathbf{V}_1^{\overline{\mathbf{b}}^\texttt{e}}, \ldots, \mathbf{V}_N^{\overline{\mathbf{b}}^\texttt{e}}; \mathbf{V}_1^{\overline{\boldsymbol{\beta}}}, \ldots, \mathbf{V}_N^{\overline{\boldsymbol{\beta}}}; \mathbf{V}_1^\alpha, \ldots, \mathbf{V}_n^\alpha; \mathbf{V}_N^{\widehat{\gamma}}, \ldots, \mathbf{V}_1^{\widehat{\gamma}}; \mathbf{U}_1, \ldots, \mathbf{U}_N)
\end{equation*}
and write out its Lagrangian expression as
\begin{equation*}
    \widehat{\mathcal{F}} = \mathcal{F} + \sum_{n=1}^N \left( \boldsymbol{\lambda}_n^{\overline{\mathbf{b}}^\texttt{e}} \cdot \mathbf{R}_n^{\overline{\mathbf{b}}^\texttt{e}}
    + \boldsymbol{\lambda}_n^{\overline{\boldsymbol{\beta}}} \cdot \mathbf{R}_n^{\overline{\boldsymbol{\beta}}}
    + \boldsymbol{\lambda}_n^\alpha \cdot \mathbf{R}_n^\alpha
    + \boldsymbol{\lambda}_n^{\widehat{\gamma}} \cdot \mathbf{R}_n^{\widehat{\gamma}}
    + \boldsymbol{\lambda}_n^\mathbf{u} \cdot \mathbf{R}_n^\mathbf{u} \right)
    = \mathcal{F} + \sum_{n=1}^N \boldsymbol{\lambda}_n \cdot \mathbf{R}_n,
\end{equation*}
where we define
\begin{equation*}
    \boldsymbol{\lambda}_n = \left[
    \left( \boldsymbol{\lambda}_n^{\overline{\mathbf{b}}^\texttt{e}} \right)^\top,
    \left( \boldsymbol{\lambda}_n^{\overline{\boldsymbol{\beta}}} \right)^\top,
    \left( \boldsymbol{\lambda}_n^\alpha \right)^\top,
    \left( \boldsymbol{\lambda}_n^{\widehat{\gamma}} \right)^\top,
    \left( \boldsymbol{\lambda}_n^\mathbf{u} \right)^\top
    \right]^\top
    \quad \text{for} \quad
    n = 1,\ldots,N
\end{equation*}
and
\begin{equation*}
    \mathbf{R}_n = \left[
    \left( \mathbf{R}_n^{\overline{\mathbf{b}}^\texttt{e}} \right)^\top,
    \left( \mathbf{R}_n^{\overline{\boldsymbol{\beta}}} \right)^\top,
    \left( \mathbf{R}_n^\alpha \right)^\top,
    \left( \mathbf{R}_n^{\widehat{\gamma}} \right)^\top,
    \left( \mathbf{R}_n^\mathbf{u} \right)^\top
    \right]^\top
    \quad \text{for} \quad
    n = 1,\ldots,N.
\end{equation*}
The variables $\boldsymbol{\lambda}_n^{\overline{\mathbf{b}}^\texttt{e}}$, $\boldsymbol{\lambda}_n^{\overline{\boldsymbol{\beta}}}$, $\boldsymbol{\lambda}_n^\alpha$, $\boldsymbol{\lambda}_n^{\widehat{\gamma}}$, and $\boldsymbol{\lambda}_n^\mathbf{u}$ for $n=1,\ldots,N$ are the adjoint vectors that take arbitrary real values for now.

Taking the derivative of $\widehat{\mathcal{F}}$ with respect to $\overline{\boldsymbol{\zeta}} \in \{ \overline{\boldsymbol{\rho}}, \overline{\boldsymbol{\xi}}_1, \ldots, \overline{\boldsymbol{\xi}}_{N^\texttt{mat}} \}$ derives
\begin{equation*}
    \dfrac{\text{d} \widehat{\mathcal{F}}}{\text{d} \overline{\boldsymbol{\zeta}}}
    = \dfrac{\partial \mathcal{F}}{\partial \overline{\boldsymbol{\zeta}}}
    + \sum_{n=1}^N \left( \dfrac{\partial \mathbf{V}_n}{\partial \overline{\boldsymbol{\zeta}}} \right)^\top \dfrac{\partial \mathcal{F}}{\partial \mathbf{V}_n}
    + \sum_{n=1}^N \left[ \left( \dfrac{\partial \mathbf{R}_n}{\partial \overline{\boldsymbol{\zeta}}} \right)^\top
    + \left( \dfrac{\partial \mathbf{V}_{n-1}}{\partial \overline{\boldsymbol{\zeta}}} \right)^\top \left( \dfrac{\partial \mathbf{R}_n}{\partial \mathbf{V}_{n-1}} \right)^\top
    + \left( \dfrac{\partial \mathbf{V}_n}{\partial \overline{\boldsymbol{\zeta}}} \right)^\top \left( \dfrac{\partial \mathbf{R}_n}{\partial \mathbf{V}_n} \right)^\top
    \right] \boldsymbol{\lambda}_n.
\end{equation*}
Here we define
\begin{equation*}
    \mathbf{V}_n = \left[
    \left( \mathbf{V}_n^{\overline{\mathbf{b}}^\texttt{e}} \right)^\top,
    \left( \mathbf{V}_n^{\overline{\boldsymbol{\beta}}} \right)^\top,
    \left( \mathbf{V}_n^\alpha \right)^\top,
    \left( \mathbf{V}_n^{\widehat{\gamma}} \right)^\top,
    \left( \mathbf{U}_n \right)^\top
    \right]^\top
    \quad \text{for} \quad
    n = 0,\ldots,N,
\end{equation*}
and note that $\mathbf{V}_0$ represents the initial conditions.

Regrouping the terms in $\text{d} \widehat{\mathcal{F}}/\text{d} \overline{\boldsymbol{\zeta}}$ renders
\begin{equation} \label{Regrouping Terms for Sensitivity}
    \begin{array}{ll}
        \dfrac{\text{d} \widehat{\mathcal{F}}}{\text{d} \overline{\boldsymbol{\zeta}}}
        = &\displaystyle \dfrac{\partial \mathcal{F}}{\partial \overline{\boldsymbol{\zeta}}}
        + \sum_{n=1}^N \left( \dfrac{\partial \mathbf{R}_n}{\partial \overline{\boldsymbol{\zeta}}} \right)^\top \boldsymbol{\lambda}_n
        + \sum_{n=1}^{N-1} \left( \dfrac{\partial \mathbf{V}_n}{\partial \overline{\boldsymbol{\zeta}}} \right)^\top 
        \left[ \left( \dfrac{\partial \mathbf{R}_n}{\partial \mathbf{V}_n} \right)^\top \boldsymbol{\lambda}_n
        + \left( \dfrac{\partial \mathbf{R}_{n+1}}{\partial \mathbf{V}_n} \right)^\top \boldsymbol{\lambda}_{n+1}
        + \dfrac{\partial \mathcal{F}}{\partial \mathbf{V}_n} \right] \\[15pt]
        &+ \displaystyle \left( \dfrac{\partial \mathbf{V}_N}{\partial \overline{\boldsymbol{\zeta}}} \right)^\top
        \left[ \left( \dfrac{\partial \mathbf{R}_N}{\partial \mathbf{V}_N} \right)^\top \boldsymbol{\lambda}_N + \dfrac{\partial \mathcal{F}}{\partial \mathbf{V}_N} \right],
    \end{array}
\end{equation}
where we use $\partial \mathbf{V}_0/\partial \overline{\boldsymbol{\zeta}} = \mathbf{0}$. Note again that $\boldsymbol{\lambda}_1, \ldots, \boldsymbol{\lambda}_N$ are arbitrary, and we select their values to eliminate computationally expensive term, $\partial \mathbf{V}_n/\partial \overline{\boldsymbol{\zeta}} = \mathbf{0}$ for $n = 1,\ldots,N$. This step leads to the adjoint equations expressed as
\begin{equation} \label{Adjoint Equations}
    \left\{ \begin{array}{l}
        \left( \dfrac{\partial \mathbf{R}_n}{\partial \mathbf{V}_n} \right)^\top \boldsymbol{\lambda}_n
        + \left( \dfrac{\partial \mathbf{R}_{n+1}}{\partial \mathbf{V}_n} \right)^\top \boldsymbol{\lambda}_{n+1}
        = - \dfrac{\partial \mathcal{F}}{\partial \mathbf{V}_n} \quad \text{for} \quad n = 1,\ldots,N-1, \\[12pt]
        
        \left( \dfrac{\partial \mathbf{R}_N}{\partial \mathbf{V}_N} \right)^\top \boldsymbol{\lambda}_N = - \dfrac{\partial \mathcal{F}}{\partial \mathbf{V}_N},
    \end{array} \right.
\end{equation}
where
\begin{equation*}
    \dfrac{\partial \mathbf{R}_n}{\partial \mathbf{V}_n} = \left[ \begin{array}{ccccc}
        \mathbf{I} & \mathbf{0} & \mathbf{0} & \dfrac{\partial \mathbf{R}_n^{\overline{\mathbf{b}}^\texttt{e}}}{\partial \mathbf{V}_n^{\widehat{\gamma}}} & \dfrac{\partial \mathbf{R}_n^{\overline{\mathbf{b}}^\texttt{e}}}{\partial \mathbf{U}_n} \\[12pt]
        
        \mathbf{0} & \mathbf{I} & \mathbf{0} & \dfrac{\partial \mathbf{R}_n^{\overline{\boldsymbol{\beta}}}}{\partial \mathbf{V}_n^{\widehat{\gamma}}} & \dfrac{\partial \mathbf{R}_n^{\overline{\boldsymbol{\beta}}}}{\partial \mathbf{U}_n} \\[12pt]

        \mathbf{0} & \mathbf{0} & \mathbf{I} & \dfrac{\partial \mathbf{R}_n^\alpha}{\partial \mathbf{V}_n^{\widehat{\gamma}}} & \mathbf{0} \\[12pt]

        \mathbf{0} & \mathbf{0} & \mathbf{0} & \dfrac{\partial \mathbf{R}_n^{\widehat{\gamma}}}{\partial \mathbf{V}_n^{\widehat{\gamma}}} & \dfrac{\partial \mathbf{R}_n^{\widehat{\gamma}}}{\partial \mathbf{U}_n} \\[12pt]

        \mathbf{0} & \mathbf{0} & \mathbf{0} & \dfrac{\partial \mathbf{R}_n^\mathbf{u}}{\partial \mathbf{V}_n^{\widehat{\gamma}}} & \dfrac{\partial \mathbf{R}_n^\mathbf{u}}{\partial \mathbf{U}_n}
    \end{array} \right]
    \quad \text{and} \quad
        \dfrac{\partial \mathbf{R}_{n+1}}{\partial \mathbf{V}_n} = \left[ \begin{array}{ccccc}
        \dfrac{\partial \mathbf{R}_{n+1}^{\overline{\mathbf{b}}^\texttt{e}}}{\partial \mathbf{V}_n^{\overline{\mathbf{b}}^\texttt{e}}} & \dfrac{\partial \mathbf{R}_{n+1}^{\overline{\mathbf{b}}^\texttt{e}}}{\partial \mathbf{V}_n^{\overline{\boldsymbol{\beta}}}} & \mathbf{0} & \mathbf{0} & \dfrac{\partial \mathbf{R}_{n+1}^{\overline{\mathbf{b}}^\texttt{e}}}{\partial \mathbf{U}_n} \\[12pt]
        
        \dfrac{\partial \mathbf{R}_{n+1}^{\overline{\boldsymbol{\beta}}}}{\partial \mathbf{V}_n^{\overline{\mathbf{b}}^\texttt{e}}} & \dfrac{\partial \mathbf{R}_{n+1}^{\overline{\boldsymbol{\beta}}}}{\partial \mathbf{V}_n^{\overline{\boldsymbol{\beta}}}} & \mathbf{0} & \mathbf{0} & \dfrac{\partial \mathbf{R}_{n+1}^{\overline{\boldsymbol{\beta}}}}{\partial \mathbf{U}_n} \\[12pt]

        \mathbf{0} & \mathbf{0} & \dfrac{\partial \mathbf{R}_{n+1}^\alpha}{\partial \mathbf{V}_n^\alpha} & \mathbf{0} & \mathbf{0} \\[12pt]

        \dfrac{\partial \mathbf{R}_{n+1}^{\widehat{\gamma}}}{\partial \mathbf{V}_n^{\overline{\mathbf{b}}^\texttt{e}}} & \dfrac{\partial \mathbf{R}_{n+1}^{\widehat{\gamma}}}{\partial \mathbf{V}_n^{\overline{\boldsymbol{\beta}}}} & \dfrac{\partial \mathbf{R}_{n+1}^{\widehat{\gamma}}}{\partial \mathbf{V}_n^\alpha} & \mathbf{0} & \dfrac{\partial \mathbf{R}_{n+1}^{\widehat{\gamma}}}{\partial \mathbf{U}_n} \\[12pt]

        \dfrac{\partial \mathbf{R}_{n+1}^\mathbf{u}}{\partial \mathbf{V}_n^{\overline{\mathbf{b}}^\texttt{e}}} & \dfrac{\partial \mathbf{R}_{n+1}^\mathbf{u}}{\partial \mathbf{V}_n^{\overline{\boldsymbol{\beta}}}} & \mathbf{0} & \mathbf{0} & \dfrac{\partial \mathbf{R}_{n+1}^\mathbf{u}}{\partial \mathbf{U}_n}
    \end{array} \right].
\end{equation*}

Based on \eqref{Adjoint Equations}, we compute the adjoint variables ($\boldsymbol{\lambda}_n^{\overline{\mathbf{b}}^\texttt{e}}$, $\boldsymbol{\lambda}_n^{\overline{\boldsymbol{\beta}}}$, $\boldsymbol{\lambda}_n^\alpha$, $\boldsymbol{\lambda}_n^{\widehat{\gamma}}$, and $\boldsymbol{\lambda}_n^\mathbf{u}$ for $n = 1,\ldots,N$) in a reversed order as follows. At load step $N$, we directly write out $\boldsymbol{\lambda}_N^{\overline{\mathbf{b}}^\texttt{e}}$, $\boldsymbol{\lambda}_N^{\overline{\boldsymbol{\beta}}}$, and $\boldsymbol{\lambda}_N^\alpha$ as
\begin{equation*}
    \boldsymbol{\lambda}_N^{\overline{\mathbf{b}}^\texttt{e}} = - \dfrac{\partial \mathcal{F}}{\partial \mathbf{V}_N^{\overline{\mathbf{b}}^\texttt{e}}}, \quad
    \boldsymbol{\lambda}_N^{\overline{\boldsymbol{\beta}}} = - \dfrac{\partial \mathcal{F}}{\partial \mathbf{V}_N^{\overline{\boldsymbol{\beta}}}}, \quad \text{and} \quad
    \boldsymbol{\lambda}_N^\alpha = - \dfrac{\partial \mathcal{F}}{\partial \mathbf{V}_N^\alpha}.
\end{equation*}
We then solve for $\boldsymbol{\lambda}_N^{\widehat{\gamma}}$ and $\boldsymbol{\lambda}_N^\mathbf{u}$ from 
\begin{equation} \label{Reduced Adjoint Equation at Final Step}
    \left[ \begin{array}{cc}
        \left( \dfrac{\partial \mathbf{R}_N^{\widehat{\gamma}}}{\partial \mathbf{V}_N^{\widehat{\gamma}}} \right)^\top
        & \left( \dfrac{\partial \mathbf{R}_N^\mathbf{u}}{\partial \mathbf{V}_N^{\widehat{\gamma}}} \right)^\top \\[12pt]
        \left( \dfrac{\partial \mathbf{R}_N^{\widehat{\gamma}}}{\partial \mathbf{U}_N} \right)^\top
        & \left( \dfrac{\partial \mathbf{R}_N^\mathbf{u}}{\partial \mathbf{U}_N} \right)^\top
    \end{array} \right]
    \left[ \begin{array}{c}
        \boldsymbol{\lambda}_N^{\widehat{\gamma}} \\[12pt]
        \boldsymbol{\lambda}_N^\mathbf{u}
    \end{array} \right]
    = - \left[ \begin{array}{c}
         \dfrac{\partial \mathcal{F}}{\partial \mathbf{V}_N^{\widehat{\gamma}}}
         + \left( \dfrac{\partial \mathbf{R}_N^{\overline{\mathbf{b}}^\texttt{e}}}{\partial \mathbf{V}_N^{\widehat{\gamma}}} \right)^\top \boldsymbol{\lambda}_N^{\overline{\mathbf{b}}^\texttt{e}}
         + \left( \dfrac{\partial \mathbf{R}_N^{\overline{\boldsymbol{\beta}}}}{\partial \mathbf{V}_N^{\widehat{\gamma}}} \right)^\top \boldsymbol{\lambda}_N^{\overline{\boldsymbol{\beta}}}
         + \left( \dfrac{\partial \mathbf{R}_N^\alpha}{\partial \mathbf{V}_N^{\widehat{\gamma}}} \right)^\top \boldsymbol{\lambda}_N^\alpha \\[12pt]
         \dfrac{\partial \mathcal{F}}{\partial \mathbf{U}_N}
         + \left( \dfrac{\partial \mathbf{R}_N^{\overline{\mathbf{b}}^\texttt{e}}}{\partial \mathbf{U}_N} \right)^\top \boldsymbol{\lambda}_N^{\overline{\mathbf{b}}^\texttt{e}}
         + \left( \dfrac{\partial \mathbf{R}_N^{\overline{\boldsymbol{\beta}}}}{\partial \mathbf{U}_N} \right)^\top \boldsymbol{\lambda}_N^{\overline{\boldsymbol{\beta}}}
    \end{array} \right].
\end{equation}
At the remaining load steps, $n = N-1, \ldots, 1$, we write out $\boldsymbol{\lambda}_n^{\overline{\mathbf{b}}^\texttt{e}}$, $\boldsymbol{\lambda}_n^{\overline{\boldsymbol{\beta}}}$, and $\boldsymbol{\lambda}_n^\alpha$ as
\begin{equation*}
    \left\{ \begin{array}{l}
        \boldsymbol{\lambda}_n^{\overline{\mathbf{b}}^\texttt{e}}
        = - \left[ \dfrac{\partial \mathcal{F}}{\partial \mathbf{V}_n^{\overline{\mathbf{b}}^\texttt{e}}} 
        + \left( \dfrac{\partial \mathbf{R}_{n+1}^{\overline{\mathbf{b}}^\texttt{e}}}{\partial \mathbf{V}_n^{\overline{\mathbf{b}}^\texttt{e}}} \right)^\top \boldsymbol{\lambda}_{n+1}^{\overline{\mathbf{b}}^\texttt{e}}
        + \left( \dfrac{\partial \mathbf{R}_{n+1}^{\overline{\boldsymbol{\beta}}}}{\partial \mathbf{V}_n^{\overline{\mathbf{b}}^\texttt{e}}} \right)^\top \boldsymbol{\lambda}_{n+1}^{\overline{\boldsymbol{\beta}}}
        + \left( \dfrac{\partial \mathbf{R}_{n+1}^{\widehat{\gamma}}}{\partial \mathbf{V}_n^{\overline{\mathbf{b}}^\texttt{e}}} \right)^\top
        \boldsymbol{\lambda}_{n+1}^{\widehat{\gamma}}
        + \left( \dfrac{\partial \mathbf{R}_{n+1}^\mathbf{u}}{\partial \mathbf{V}_n^{\overline{\mathbf{b}}^\texttt{e}}} \right)^\top \boldsymbol{\lambda}_{n+1}^\mathbf{u}
        \right], \\[18pt]

        \boldsymbol{\lambda}_n^{\overline{\boldsymbol{\beta}}}
        = - \left[ \dfrac{\partial \mathcal{F}}{\partial \mathbf{V}_n^{\overline{\boldsymbol{\beta}}}}
        + \left( \dfrac{\partial \mathbf{R}_{n+1}^{\overline{\mathbf{b}}^\texttt{e}}}{\partial \mathbf{V}_n^{\overline{\boldsymbol{\beta}}}} \right)^\top \boldsymbol{\lambda}_{n+1}^{\overline{\mathbf{b}}^\texttt{e}}
        + \left( \dfrac{\partial \mathbf{R}_{n+1}^{\overline{\boldsymbol{\beta}}}}{\partial \mathbf{V}_n^{\overline{\boldsymbol{\beta}}}} \right)^\top \boldsymbol{\lambda}_{n+1}^{\overline{\boldsymbol{\beta}}}
        + \left( \dfrac{\partial \mathbf{R}_{n+1}^{\widehat{\gamma}}}{\partial \mathbf{V}_n^{\overline{\boldsymbol{\beta}}}} \right)^\top \boldsymbol{\lambda}_{n+1}^{\widehat{\gamma}}
        + \left( \dfrac{\partial \mathbf{R}_{n+1}^\mathbf{u}}{\partial \mathbf{V}_n^{\overline{\boldsymbol{\beta}}}} \right)^\top \boldsymbol{\lambda}_{n+1}^\mathbf{u}
        \right], \\[18pt]

        \boldsymbol{\lambda}_n^\alpha = - \left[ \dfrac{\partial \mathcal{F}}{\partial \mathbf{V}_n^\alpha}
        + \left( \dfrac{\partial \mathbf{R}_{n+1}^\alpha}{\partial \mathbf{V}_n^\alpha} \right)^\top \boldsymbol{\lambda}_{n+1}^\alpha
        + \left( \dfrac{\partial \mathbf{R}_{n+1}^{\widehat{\gamma}}}{\partial \mathbf{V}_n^\alpha} \right)^\top \boldsymbol{\lambda}_{n+1}^{\widehat{\gamma}}
        \right],
    \end{array} \right.
\end{equation*}
and then solve for $\boldsymbol{\lambda}_n^{\widehat{\gamma}}$ and $\boldsymbol{\lambda}_n^\mathbf{u}$ from 
\begin{equation} \label{Reduced Adjoint Equation at Remaining Steps}
    \left[ \begin{array}{cc}
        \left( \dfrac{\partial \mathbf{R}_n^{\widehat{\gamma}}}{\partial \mathbf{V}_n^{\widehat{\gamma}}} \right)^\top
        & \left( \dfrac{\partial \mathbf{R}_n^\mathbf{u}}{\partial \mathbf{V}_n^{\widehat{\gamma}}} \right)^\top \\[12pt]
        \left( \dfrac{\partial \mathbf{R}_n^{\widehat{\gamma}}}{\partial \mathbf{U}_n} \right)^\top
        & \left( \dfrac{\partial \mathbf{R}_n^\mathbf{u}}{\partial \mathbf{U}_n} \right)^\top
    \end{array} \right]
    \left[ \begin{array}{c}
        \boldsymbol{\lambda}_n^{\widehat{\gamma}} \\[12pt]
        \boldsymbol{\lambda}_n^\mathbf{u}
    \end{array} \right]
    = - \left[ \begin{array}{c}
         \dfrac{\partial \mathcal{F}}{\partial \mathbf{V}_n^{\widehat{\gamma}}}
         + \left( \dfrac{\partial \mathbf{R}_n^{\overline{\mathbf{b}}^\texttt{e}}}{\partial \mathbf{V}_n^{\widehat{\gamma}}} \right)^\top \boldsymbol{\lambda}_n^{\overline{\mathbf{b}}^\texttt{e}}
         + \left( \dfrac{\partial \mathbf{R}_n^{\overline{\boldsymbol{\beta}}}}{\partial \mathbf{V}_n^{\widehat{\gamma}}} \right)^\top \boldsymbol{\lambda}_n^{\overline{\boldsymbol{\beta}}}
         + \left( \dfrac{\partial \mathbf{R}_n^\alpha}{\partial \mathbf{V}_n^{\widehat{\gamma}}} \right)^\top \boldsymbol{\lambda}_n^\alpha \\[12pt]
         \dfrac{\partial \mathcal{F}}{\partial \mathbf{U}_n}
         + \dfrac{\partial \widecheck{\mathcal{F}}}{\partial \mathbf{U}_n}
         + \left( \dfrac{\partial \mathbf{R}_n^{\overline{\mathbf{b}}^\texttt{e}}}{\partial \mathbf{U}_n} \right)^\top \boldsymbol{\lambda}_n^{\overline{\mathbf{b}}^\texttt{e}}
         + \left( \dfrac{\partial \mathbf{R}_n^{\overline{\boldsymbol{\beta}}}}{\partial \mathbf{U}_n} \right)^\top \boldsymbol{\lambda}_n^{\overline{\boldsymbol{\beta}}}
    \end{array} \right]
\end{equation}
where
\begin{equation*}
    \dfrac{\partial \widecheck{\mathcal{F}}}{\partial \mathbf{U}_n}
    := \left( \dfrac{\partial \mathbf{R}_{n+1}^{\overline{\mathbf{b}}^\texttt{e}}}{\partial \mathbf{U}_n} \right)^\top \boldsymbol{\lambda}_{n+1}^{\overline{\mathbf{b}}^\texttt{e}}
    + \left( \dfrac{\partial \mathbf{R}_{n+1}^{\overline{\boldsymbol{\beta}}}}{\partial \mathbf{U}_n} \right)^\top \boldsymbol{\lambda}_{n+1}^{\overline{\boldsymbol{\beta}}}
    + \left( \dfrac{\partial \mathbf{R}_{n+1}^{\widehat{\gamma}}}{\partial \mathbf{U}_n} \right)^\top  \boldsymbol{\lambda}_{n+1}^{\widehat{\gamma}}
    + \left( \dfrac{\partial \mathbf{R}_{n+1}^\mathbf{u}}{\partial \mathbf{U}_n} \right)^\top \boldsymbol{\lambda}_{n+1}^\mathbf{u}.
\end{equation*}

After solving for the adjoint variables, we substitute them into \eqref{Regrouping Terms for Sensitivity} and derive the sensitivity expression as
\begin{equation*}
    \begin{array}{ll}
        \dfrac{\text{d} \mathcal{F}}{\text{d} \overline{\boldsymbol{\zeta}}}
        = \dfrac{\partial \mathcal{F}}{\partial \overline{\boldsymbol{\zeta}}}
        &+ \displaystyle \sum_{n=1}^N \left[ \left( \dfrac{\partial \mathbf{R}_n^{\overline{\mathbf{b}}^\texttt{e}}}{\partial \overline{\boldsymbol{\zeta}}} \right)^\top \boldsymbol{\lambda}_n^{\overline{\mathbf{b}}^\texttt{e}}
        + \left( \dfrac{\partial \mathbf{R}_n^{\overline{\boldsymbol{\beta}}}}{\partial \overline{\boldsymbol{\zeta}}} \right)^\top \boldsymbol{\lambda}_n^{\overline{\boldsymbol{\beta}}}
        + \left( \dfrac{\partial \mathbf{R}_n^{\widehat{\gamma}}}{\partial \overline{\boldsymbol{\zeta}}} \right)^\top  \boldsymbol{\lambda}_n^{\widehat{\gamma}}
        + \left( \dfrac{\partial \mathbf{R}_n^\mathbf{u}}{\partial \overline{\boldsymbol{\zeta}}} \right)^\top \boldsymbol{\lambda}_n^\mathbf{u} \right],
    \end{array}
\end{equation*}
where we use $\partial \mathbf{R}_n^\alpha / \partial \overline{\boldsymbol{\zeta}} = \mathbf{0}$ and $\text{d} \mathcal{F}/\text{d} \overline{\boldsymbol{\zeta}} = \text{d} \widehat{\mathcal{F}}/\text{d} \overline{\boldsymbol{\zeta}}$ by noticing 
\begin{equation} \label{Sensitivity Expression}
    \boldsymbol{\lambda}_n^{\overline{\mathbf{b}}^\texttt{e}} \cdot \mathbf{R}_n^{\overline{\mathbf{b}}^\texttt{e}}
    = \boldsymbol{\lambda}_n^{\overline{\boldsymbol{\beta}}} \cdot \mathbf{R}_n^{\overline{\boldsymbol{\beta}}}
    = \boldsymbol{\lambda}_n^\alpha \cdot \mathbf{R}_n^\alpha
    = \boldsymbol{\lambda}_n^{\widehat{\gamma}} \cdot \mathbf{R}_n^{\widehat{\gamma}}
    = \boldsymbol{\lambda}_n^\mathbf{u} \cdot \mathbf{R}_n^\mathbf{u} = 0.
\end{equation}
We remark that, by using the modern computer method of automatic differentiation, one can effortlessly compute all the above partial derivatives without deriving their explicit (and typically complex) expressions.

\subsection{Sensitivity verification}

In this subsection, we verify the accuracy of the sensitivity analysis described in \ref{Sec: Sensitivity Analysis} by comparing it against the forward finite difference method. Using the finite difference scheme, the sensitivity of any function $\mathcal{F}$ with respect to the physical design variable $\overline{\zeta}_e$ of element $e$ for $\overline{\zeta} \in \{ \overline{\rho}, \overline{\xi}_1, \ldots, \overline{\xi}_{N^\xi} \}$ is computed as
\begin{equation*}
    \dfrac{\text{d} \mathcal{F}}{\text{d} \overline{\zeta}_e} = \dfrac{\mathcal{F}(\overline{\zeta}_e + \varepsilon) - \mathcal{F}(\overline{\zeta}_e)}{\varepsilon}
\end{equation*}
where $\varepsilon \in [10^{-6}, 10^{-4}]$ is a small perturbation parameter.

According to this formula, we compare the sensitivities of all objective functions ($J_\texttt{stiff}$, $J_\texttt{force}$, and $J_\texttt{energy}$) and constraint functions ($g_{V0}$, $g_{V1}$, $g_{V2}$, $g_{V3}$, $g_{V4}$, $g_P$, $g_M$, and $g_C$) employed in Section \ref{Sec: Sample Examples}. For representativeness, the comparisons are focused on the most complex cases: $J_\texttt{energy}$ for the optimized damper in Fig. \ref{Fig: Damper-Part 3}; $J_\texttt{stiff}$ and $g_{V0}$ for Dsg. 4 in Fig. \ref{Fig: Beam}; $g_{V1}$, $g_{V2}$, $g_{V3}$, and $g_{V4}$ for the optimized bumper in Fig. \ref{Fig: Bumper-Part 3}; and $J_\texttt{force}$, $g_P$, $g_M$, and $g_C$ for Dsg. 4 in Fig. \ref{Fig: Sheet-Part 2}.

To perform the comparison, we use stratified sampling of all design variables and present the results in Fig. \ref{Fig: Sensitivity Verification}. The proposed sensitivity analysis in \ref{Sec: Sensitivity Analysis} shows good agreement with the finite difference scheme. The observed absolute errors ($e_\texttt{abs}$) are in the range of $[10^{-11}, 10^{-6}]$, while the relative errors ($e_\texttt{rel}$) are in the range of $[10^{-8}, 10^{-4}]$. This verification demonstrates the reliability of the proposed sensitivity analysis, irrespective of the complexity introduced by the history dependence of finite strain elastoplasticity.

\begin{figure}[!htbp]
    \centering
    \includegraphics[width=18cm]{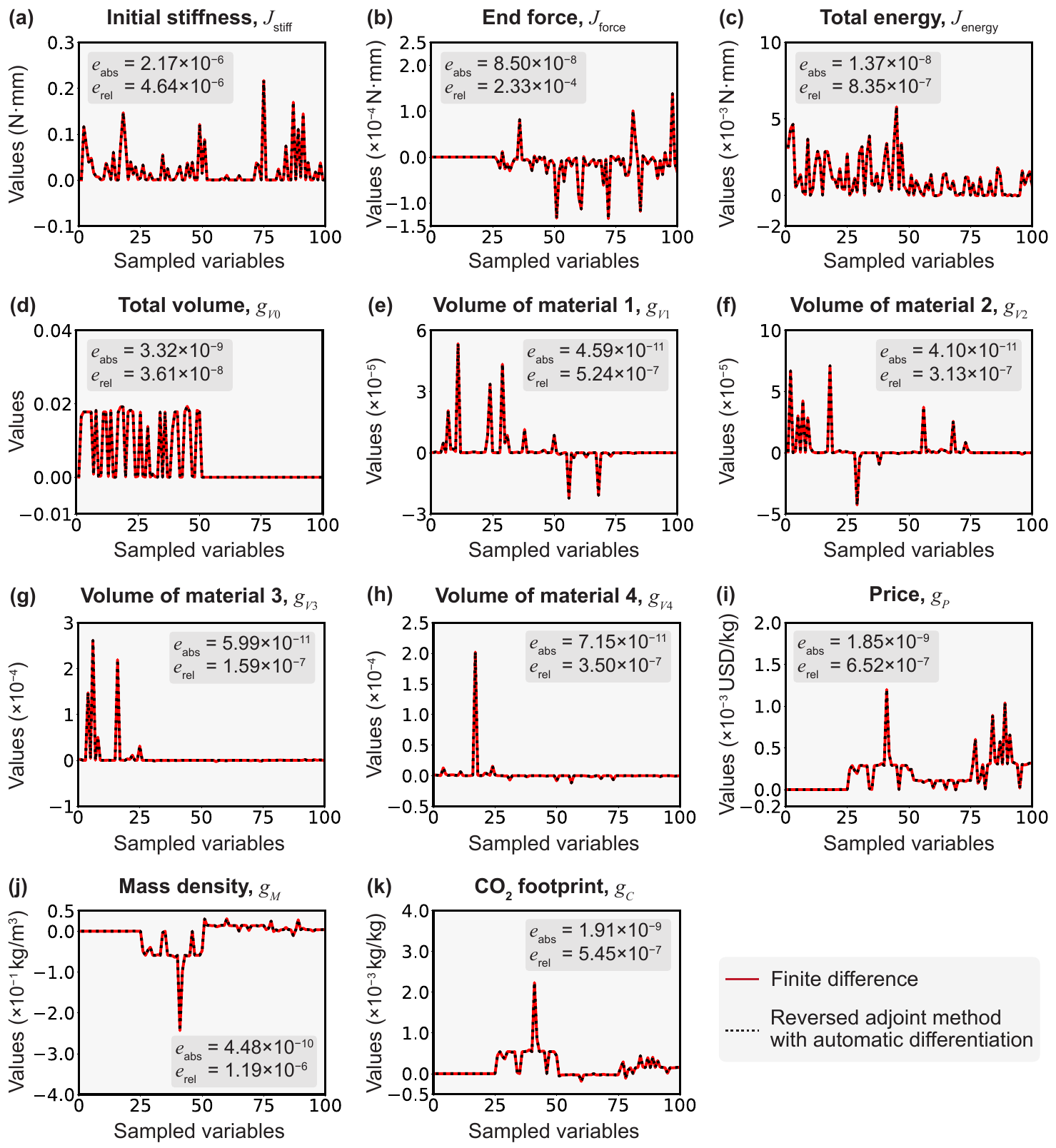}
    \caption{Sensitivity verification for various functions used in Section \ref{Sec: Sample Examples}. (a) Initial energy of Dsg. 4 in Fig. \ref{Fig: Beam}. (b) End compliance of Dsg. 4 in Fig. \ref{Fig: Sheet-Part 2}. (c) Total energy of the optimized damper in Fig. \ref{Fig: Damper-Part 3}. (d) Total material volume of Dsg. 4 in Fig. \ref{Fig: Beam}. (e)--(h) Volumes of materials 1--4 of the optimized bumper in Fig. \ref{Fig: Bumper-Part 3}. (i)--(k) Price, mass density, and CO$_2$ footprint of Dsg. 4 in Fig. \ref{Fig: Sheet-Part 2}.}
    \label{Fig: Sensitivity Verification}
\end{figure}

\section{Comparisons among various mesh and element combinations}
\label{Sec: Comparison of Meshes}

In this section, we compare the elastoplastic responses of structures defined with various meshes and elements. The proposed topology optimization framework favors a voxel-based mesh, where the grid in the undeformed configurations of structures is fixed during optimization, and solid and void elements coexist. However, for practical manufacturing purposes, a body-fitted mesh is more suitable for generating printable files (e.g., the STereoLithography format), as the mesh boundaries closely align with structural geometries and only contain solid elements.

To evaluate the accuracy loss when converting optimized designs from voxel-based to body-fitted meshes, we take Dsg. 5 in Fig. \ref{Fig: Beam} as an example and compare the elastoplastic responses in Fig. \ref{Fig: Mesh Comparisons} and Table \ref{Table: Mesh Comparisons}. The results show that, compared to the design defined on a voxel-based mesh with quadrilateral elements, the structure defined on the body-fitted mesh with the same elements exhibits negligible relative errors: 4.59\% in initial stiffness, 1.20\% in end force, and 0.84\% in total energy. These results justify the use of voxel-based meshes in the proposed framework.

\begin{figure}[!htbp]
    \centering
    \includegraphics[width=18cm]{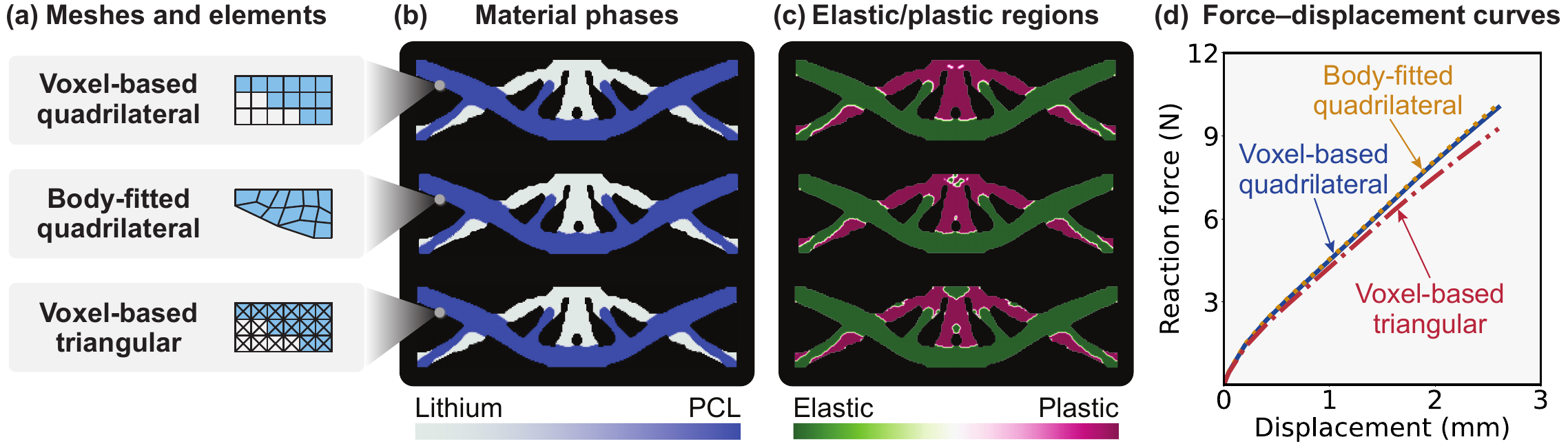}
    \caption{Comparisons among various meshes and elements. (a) Compared meshes and the associated finite elements. (b) Material phases. (c) Elastic/plastic regions. (d) Force--displacement ($F$--$u$) curves.}
    \label{Fig: Mesh Comparisons}
\end{figure}

\begin{table}[!htbp]
    \caption{Comparisons of the performance metrics among various meshes and elements}
    \label{Table: Mesh Comparisons}
    \centering
    \footnotesize
    \begin{tabular}{llllll}
        \hline
        \textbf{Meshes} & \textbf{Elements} & \textbf{Usage}
        & \begin{tabular}[x]{@{}l@{}}\textbf{Initial stiffness}\\(N/mm)\end{tabular}
        & \begin{tabular}[x]{@{}l@{}}\textbf{End force}\\(N)\end{tabular}
        & \begin{tabular}[x]{@{}l@{}}\textbf{Total energy}\\(N$\cdot$mm)\end{tabular} \\
        \hline
        Voxel-based, structured & Quadrilateral & Topology optimization & 10.25 & 10.03 & 14.31 \\
        Body-fitted, unstructured & Quadrilateral & Manufacturing & 10.72 (4.59\%) & 10.15 (1.20\%) & 14.43 (0.84\%) \\
        Voxel-based, structured & Triangular & Checking stiffening & 9.88 ($-$3.61\%) & 9.26 ($-$7.68\%) & 13.39 ($-$6.43\%) \\
        \hline
    \end{tabular}
\end{table}

In the context of finite strain elastoplasticity, we also need to examine the element types, as certain mesh and element combinations may exhibit over-stiffening behaviors in structures \citep{sloan_numerical_1982, wells_p-adaptive_2002} due to the isochoric plastic flow, despite the limited compressibility from elasticity. To mitigate these over-stiffening behaviors, higher-order and crossed linear triangular elements \citep{wells_p-adaptive_2002} have been identified as simple and effective strategies. However, both strategies are computationally expensive.

In this study, we opt for first-order quadrilateral and hexahedral elements to reduce computational time. As shown in Fig. \ref{Fig: Mesh Comparisons} and Table \ref{Table: Mesh Comparisons}, the performance metrics of the current setup remain close to the ``golden setup" of crossed linear triangular elements for plasticity \citep{wells_p-adaptive_2002}, with relative errors below 10\%. Additionally, we emphasize that the proposed design framework is independent of mesh and element types, especially with the versatile implementation of FEniTop \citep{jia_fenitop_2024}. Users have the flexibility to balance solution accuracy and computational cost based on their specific requirements.

\bibliographystyle{elsarticle-harv}
\biboptions{authoryear}
\bibliography{References}

\end{document}